  \newcommand{\R}{\mathbb{R}} 
  \newcommand{\Z}{\mathbb{Z}} 
  \newcommand{\C}{\mathbb{C}} 
  \newcommand{\N}{\mathbb{N}} 
  \newcommand{\F}{\mathbb{F}} 
  \newcommand{\bset}[1]{\{0,1\}^{#1}} 
  \newcommand{\U}{\mathcal{U}} 
  \newcommand{\Rcal}{\mathcal{R}} 
  \newcommand{\M}{\mathrm{M}_{\C}} 
  \newcommand{\Gcal}{\mathcal{G}} 
  \newcommand{\Pcal}{\mathcal{P}} 
  \newcommand{\Lcal}{\mathcal{L}} 
  \newcommand{\Aut}{\mathrm{Aut}}
  \newcommand{\Pol}{\mathrm{Pol}}
  \DeclareMathOperator{\tr}{tr}
  \newcommand{\thr}{\mathrm{Th}} 
  \newcommand{\Eq}{\mathrm{Eq}}
  \newcommand{\NOT}{\mathrm{NOT}} 
  \newcommand{\one}{\mathbf{1}} 
  \newcommand{\id}{\mathrm{Id}} 
  \newcommand{\Ccal}{\mathcal{C}} 
  \newcommand{\Scal}{\mathcal{S}}
  \newcommand{\Acal}{\mathcal{A}} 
  \newcommand{\Dcal}{\mathcal{D}} 
  \newcommand{\Hcal}{\mathcal{H}} 
  \newcommand{\vspan}{\mathrm{span}}
  \theoremstyle{plain}
  \newtheorem{theorem}{Theorem}
  \newtheorem{lemma}{Lemma}
  \newtheorem{prop}{Proposition}
  \newtheorem{corollary}{Corollary}
  \theoremstyle{definition}
  \newtheorem{definition}{Definition}
  \newtheorem{problem}{Problem}
  \newtheorem{example}{Example}
  \theoremstyle{remark}
  \newtheorem*{remark}{Remark}
\title{Symmetric quantum computation}
\author{Davi Castro-Silva\thanks{University of Cambridge}
\and Tom Gur\footnotemark[1]
\and Sergii Strelchuk\thanks{University of Oxford}}
\begin{document}

\maketitle

\begin{abstract}
    We introduce a systematic study of \emph{symmetric quantum circuits}, a new restricted model of quantum computation that preserves the symmetries of the problems it solves.
    This model is well-adapted for studying the role of symmetry in quantum speedups, extending a central notion of symmetric computation studied in the classical setting.

    Our results establish that symmetric quantum circuits are fundamentally more powerful than their classical counterparts.
    First, we give efficient symmetric circuits for key quantum techniques such as \emph{amplitude amplification}, \emph{phase estimation} and \emph{linear combination of unitaries}.
    In addition, we show how the task of \emph{symmetric state preparation} can be performed efficiently in several natural cases.
    Finally, we demonstrate an \textit{exponential separation} in the symmetric setting for the problem
    XOR-SAT, which requires exponential-size symmetric classical circuits but can be solved by polynomial-size symmetric quantum circuits.
\end{abstract}

\section{Introduction}

\begin{flushright}
\rightskip=2cm\textit{``Symmetry, as wide or as narrow as you may define its \\ meaning, is one idea by which man through the ages has tried \\ to comprehend and create order, beauty, and perfection.''} \\
\vspace{.2em}
---Hermann Weyl
\end{flushright}
\vspace{1em}

Many important computational problems possess rich groups of symmetries under which they are invariant.
In the context of machine learning, the classification of images from a data set into either ``cat pictures'' or ``dog pictures'' should be indifferent to the translation or rotation of pixels in those images.
Problems that are defined over abstract data structures -- such as graph problems in combinatorial optimisation -- will likewise be invariant with respect to the intrinsic symmetries of those data structures.
Physics provides another rich stream of problems for which symmetry considerations are essential;
indeed, the invariance of physical processes under certain special transformations forms the basis of several physical theories.

Time and again it has been proven crucial to incorporate information about the symmetries of the problems when designing efficient algorithms to solve them.
In the words of Bronstein et al \cite{BronsteinBCV2021}:
``Exploiting the known symmetries of a large system is a powerful and classical remedy against the curse of dimensionality'';
this remedy has been applied with great success in all three areas mentioned above, as well as several others.

In the area of machine learning, such considerations have sparked the nascence of the field of \emph{geometric deep learning}, where symmetries are incorporated as geometric priors into the learning architectures so as to improve trainability and generalisation performance \cite{BronsteinBCV2021}.
In the area of combinatorial optimisation, the symmetries of semidefinite programs are exploited in order to block-diagonalise their formulations so as to greatly reduce the dimension of the resulting program \cite{BachocGSV2012}.
In the fields of sublinear algorithms and property testing, the richness of the automorphism group of the problem often determines its complexity \cite{KaufmanS2008}.
Throughout physics, symmetry-preserving ans\"{a}tze are used as starting points when searching for solutions to a complex problem \cite{skolik2023equivariant}.

The main intuition underlying the applications above is simple:
by taking the symmetries of a computational problem into account when searching for its solution, one can vastly reduce the effective dimension of the search space without impacting the quality of the solution.
Consequently, the presence of extensive symmetries is a strong indicator that a problem may admit an efficient algorithm.
Identifying such general principles to predict algorithmic efficiency is a central goal of complexity theory and algorithm design.

This paper is concerned with \emph{quantum computing} and its relative power when compared to classical computation.
A central challenge in this field is to identify which problems allow for a significant quantum speedup, and what general principles govern such an advantage.
It is therefore natural to ask how a problem's symmetries impact -- either positively or negatively -- the potential for quantum advantage.
This question is the primary motivation for our work.

Our first contribution is conceptual: we introduce a framework of \emph{symmetric quantum circuits}, which formally characterises what we mean by performing a symmetric computation on a quantum computer.
Loosely speaking, in this framework we require the symmetries of the underlying problem to be respected by the quantum circuits that solve it.
This model is well-adapted for determining the role of symmetries in quantum speedups, extending a central notion of symmetric computation introduced in the context of logic and descriptive complexity \cite{AndersonD2017, Dawar2020}.
Note that, as an abundance of symmetries is helpful in designing both classical \emph{and} quantum algorithms, it is a subtle and intriguing question whether it contributes to quantum advantage.

We then study the power of symmetric quantum computation, proving several results that showcase the (somewhat surprising) strength of this computational paradigm.
To explain those results, we first give an informal description of our model.

\subsection{Symmetric quantum circuits}
\label{subsec:QuCircuits}

Suppose we are dealing with a symmetry group $\Gamma$ which acts on a finite set $X$.
For instance, $X$ could be a set of labels for some collection of objects and~$\Gamma$ could be the permutation group~$\Scal_X$;
being symmetric under~$\Gamma$ means that the problem is invariant under any relabelling of the objects.
Another important example concerns \emph{graph properties} -- such as connectedness, triangle-freeness and 3-colourability.
These are properties that depend only on the abstract structure of the graph, and not on its representation;
in such cases, we can take $X$ to be all pairs of vertices representing the potential edges in the graph, and $\Gamma$ to be the permutation of pairs induced by vertex permutations (i.e., graph isomorphisms).

We are interested in computing functions over~$X$ which are symmetric with respect to~$\Gamma$.
However, we do not wish to introduce any extraneous breaking of symmetries during the computation, but instead require that the whole computation respect the symmetries of the problem.
In other words, we do not rely on \emph{asymmetry} as a computational resource.

To better illustrate our notion of symmetry in circuits, we exemplify it using a simple state discrimination problem:
to distinguish whether two single-qubit states $\ket{\phi}$, $\ket{\psi} \in \C^2$ are equal or have inner product at most $\delta<1$ (in modulus), under the promise that one of these two cases holds.
This task involves only two qubits, and it is naturally symmetric under exchanging the roles of $\ket{\phi}$ and $\ket{\psi}$;
we can then take $X = \{1, 2\}$ to label the relevant qubits and take the symmetry group $\Gamma$ to be $\Scal_2$
(the permutation group on two elements).

It is well known that the problem above can be solved with (one-sided) bounded error by the SWAP test proposed in \cite{BuhrmanCWW2001}, as shown below:
\begin{equation*}
    \centering
 \includegraphics[width=7cm]{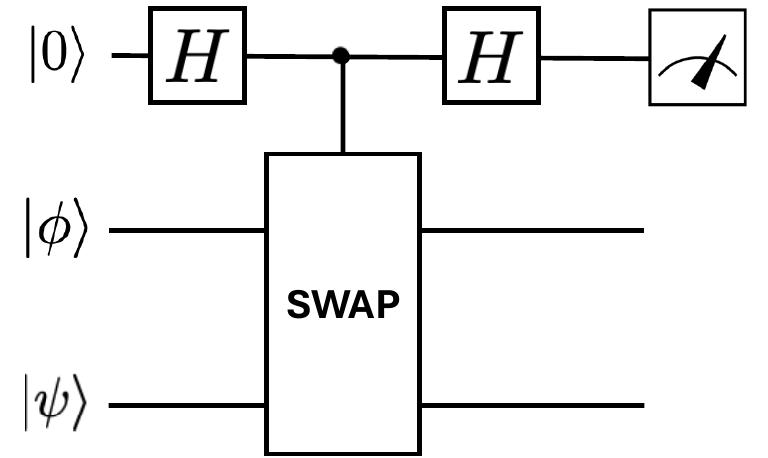}
\end{equation*}

In this circuit, measuring the first qubit produces outcome~$1$ with probability
$\frac{1}{2} - \frac{1}{2} |\langle\phi | \psi\rangle|^2$.
This probability is zero if $\ket{\phi} = \ket{\psi}$, and it is at least $\frac{1}{2} (1-\delta)^2$ if $|\langle\phi | \psi\rangle| \leq \delta$.
Thus, the test determines which case holds with one-sided error $\frac{1}{2} (1+\delta^2)$.

Importantly for us, note that the circuit above is \emph{symmetric} with respect to $\Gamma = \Scal_2$:
since the SWAP gate is symmetric, exchanging the roles of $\ket{\phi}$ and $\ket{\psi}$ does not alter the circuit.\footnote{This circuit is not symmetric with respect to all qubits, but the symmetry of the problem is only relevant for the ``active'' qubits labelled by $\ket{\phi}$ and $\ket{\psi}$.}
This would serve as an illustrative example of a quantum circuit that respects the symmetries of the problem it solves, were it not for one issue:
the circuit uses a (controlled) SWAP gate, which is not typically considered to be an elementary gate.
Instead, a SWAP gate can be implemented using three CNOT gates, which \emph{are} normally seen as elementary.

Suppose that our gate set includes only single-qubit gates, CNOTs and Toffoli gates.
Using the usual implementation of a SWAP gate as three CNOTs, we arrive at the following equivalent circuit:
\begin{equation*}
    \includegraphics[width=8cm]{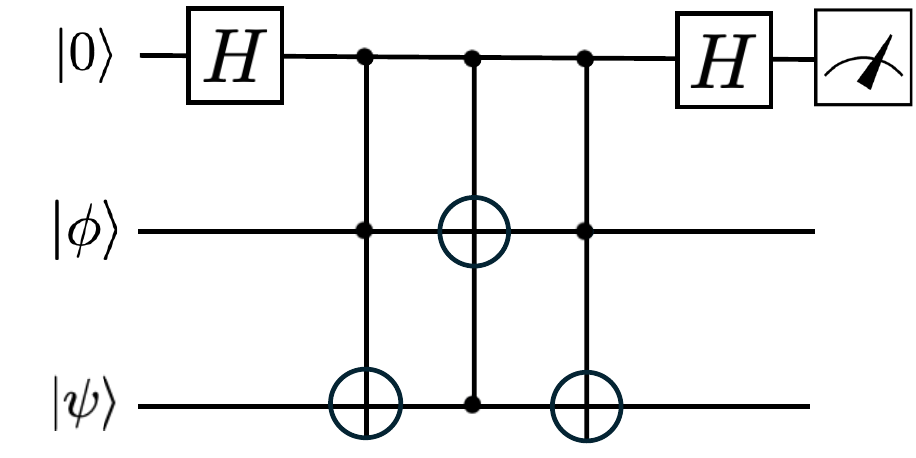}
\end{equation*}

Now we have a problem:
even though this last circuit is precisely equivalent to the one we started with, it no longer preserves the same symmetries!
Indeed, the wire carrying $\ket{\phi}$ can be distinguished from the one carrying $\ket{\psi}$ because it contains only one ``head'' of a Toffoli gate.

One might wonder whether it is at all possible to implement the SWAP test symmetrically using our specified gate set.
Indeed, this appears to be impossible if we only have access to three wires as above.
However, if we allow for an extra ancilla qubit, then one can implement the SWAP test circuit as shown below:
\begin{equation*}
    \centering
    \includegraphics[width=12cm]{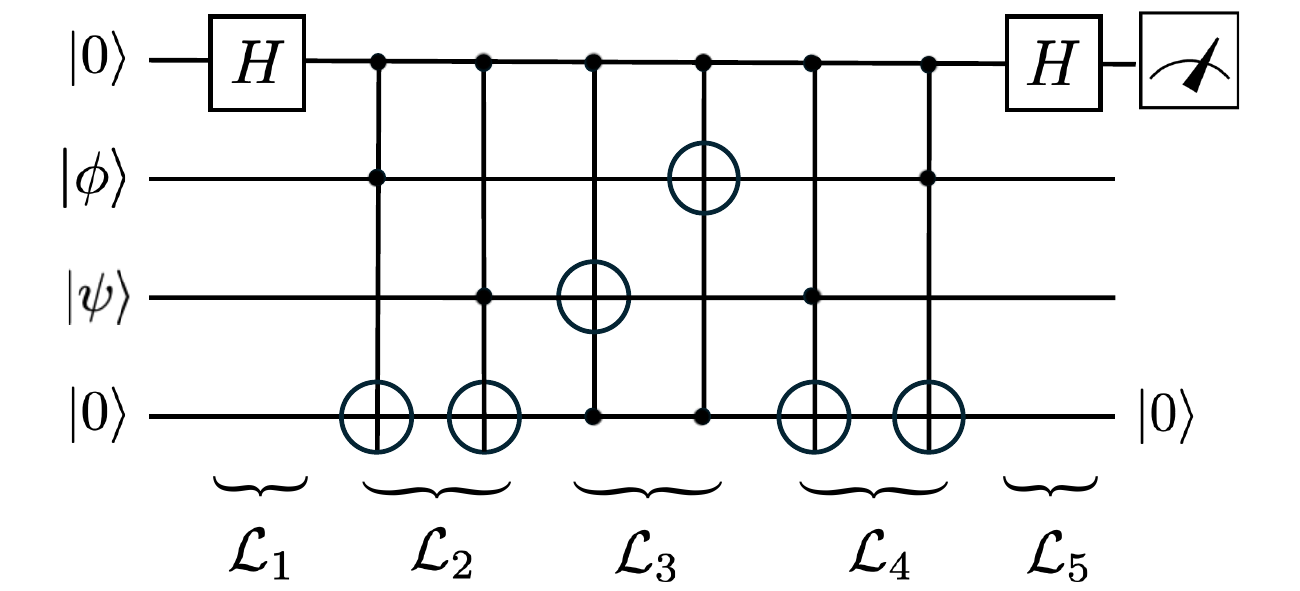}
\end{equation*}

This last circuit implements exactly the same operation as the original SWAP test, but now using an extra ancilla qubit
(which is returned to its initial state $\ket{0}$ at the end of the computation).
Its gates are grouped into five layers, where the gates inside any given layer commute and thus have no intrinsic order in their implementation.
Crucially, this circuit is now symmetric:
changing the roles of $\ket{\phi}$ and $\ket{\psi}$ only permutes the Toffoli gates inside each layer;
since there is no order of implementation of gates within a layer, this represents an automorphism of the circuit.
(For the precise definition of circuit automorphism, see Definition~\ref{def:automorphism}.)
This example motivates the following notion of symmetry in circuits.

Let~$X$ be a finite set and $\Gamma$ be a symmetry group acting on~$X$.
A \emph{$\Gamma$-symmetric quantum circuit} will have qubits labelled by the elements of $X$ (the ``active qubits'') and additional ``workspace qubits'' (ancillas) to help with the computation.
As usual, the workspace qubits are initialised to $\ket{0}$ while the input to the circuit is encoded in the active qubits.
We require the circuit to be \emph{invariant} under the induced action of $\Gamma$:
any symmetry acting on the active qubits must be extendable to an automorphism of the circuit itself. 
More precisely, for any $\sigma\in \Gamma$ there must exist a permutation $\pi$ of the workspace qubits such that permuting the active (resp., workspace) qubits by applying $\sigma$ (resp., $\pi$) we get an isomorphic circuit, as in the example above.

It is clear that symmetric circuits can only compute symmetric maps, meaning those whose value is unchanged by permutations of the input corresponding to a symmetry in $\Gamma$.
The reverse direction is more subtle, and it becomes especially hard when the complexity of implementing the circuits is taken into account (which is indeed our goal).

While the notion of symmetric circuits can be defined in a way that is agnostic to the gate set, we will study those equipped with a gate set formed by unbounded-fan-in \emph{threshold gates} together with arbitrary single-qubit gates.
We argue that this is a natural family of gates to consider for the following reasons:
\begin{itemize}
    \item These gates are mathematically simple, and they generate a robust gate set for symmetric circuits.
    \item They form a \emph{minimal} quantum extension of the classical notion of symmetric circuits \cite{Dawar2020, Dawar2024}.
    Indeed, if we allow only single-\emph{bit} reversible gates instead of arbitrary single-qubit gates, we arrive precisely at the standard notion of symmetric (threshold) circuits that has been studied in the classical setting.
    \item Threshold gates are very natural from the perspective of classical circuits, forming the basis for the complexity classes TC$^k$.
    In the quantum setting, they were shown to be efficiently implemented by bounded-depth quantum circuits equipped with two-qubit gates and fan-out gates \cite{HoyerS2005}.
    Moreover, threshold gates were shown to have the same power as fan-out in the setting of bounded-depth quantum circuits \cite{GrierM2024l}.
    They were also shown to be efficiently implemented in the hybrid quantum class LAQCC \cite{Buhrman2024statepreparation}.
    \item The class of symmetric circuits generated by them is rather powerful, as we expand upon below.
\end{itemize}

\subsection{On the power of symmetric quantum circuits}


Equipped with the foregoing notion of symmetric quantum circuits, we proceed to study the power of symmetric quantum computation, proving several results that showcase the strength of this computational paradigm.

Our first result provides efficient symmetric circuits for central quantum subroutines.
In what follows, we denote by~$s(\Ccal)$ the size (i.e., number of gates) of a circuit~$\Ccal$.
Whenever a symmetry group~$\Gamma$ is left undefined, the result holds uniformly over all groups that~$\Gamma$ could represent.

\begin{theorem}[Symmetric quantum subroutines]
\label{thm:subroutines}
The following procedures can be efficiently implemented by symmetric quantum circuits:
\begin{enumerate}
    \item \textbf{Amplitude amplification.}
    Let~$\Acal$ and~$\Dcal$ be $\Gamma$-symmetric circuits, where~$\Acal$ prepares a state as follows
    $$\Acal\ket{0}^n = \sqrt{p} \ket{\psi_1} + \sqrt{1-p} \ket{\psi_0},$$
    while~$\Dcal$ flags~$\ket{\psi_1}$ in the sense that~$\Dcal \ket{\psi_i}\ket{0} = \ket{\psi_i}\ket{i}$ for $i \in\{0, 1\}$.
    Then we can construct a $\Gamma$-symmetric circuit that maps $\ket{0}^n$ to $\ket{\psi_1}$ using $O\big((s(\Acal)+s(\Dcal))/\sqrt{p}\big)$ gates.

    \item \textbf{Phase estimation.}
    Let~$\Acal$ be a $\Gamma$-symmetric circuit and $0<\varepsilon<1$.
    We can implement a $\Gamma$-symmetric circuit of size $O\big(s(\Acal)/\varepsilon + 1/\varepsilon^2\big)$ which estimates the eigenvalue of any given eigenstate~$\ket{\psi}$ of~$\Acal$ up to accuracy~$\varepsilon$.
   
    \item \textbf{Linear combination of unitaries.}
    Let $U_0, \dots, U_{k-1}$ be $n$-qubit unitaries, let $\alpha_0, \dots$, $\alpha_{k-1} \in \C$ and denote
    $$L := \bigg\|\sum_{i=0}^{k-1} \alpha_i U_i \ket{0}^n\bigg\|.$$
    Given $\Gamma$-symmetric circuits $\Ccal_0, \dots, \Ccal_{k-1}$ such that each $\Ccal_i$ implements $\ket{0}^n \mapsto U_i \ket{0}^n$,
    we can construct a $\Gamma$-symmetric circuit that implements
    $$\ket{0}^n \mapsto \frac{1}{L} \bigg(\sum_{i=0}^{k-1} \alpha_i U_i \bigg) \ket{0}^n$$
    using $O\big(\|\alpha\|_1 \big(k\log k + \sum_{i=0}^{k-1} s(\Ccal_i)\big)/L\big)$ gates.
    \end{enumerate}
\end{theorem}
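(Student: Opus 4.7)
The general strategy is to instantiate the standard textbook constructions of each subroutine while choosing all auxiliary registers so that $\Gamma$ acts trivially on them: any operation internal to such a register is then automatically $\Gamma$-invariant, and applying a $\Gamma$-symmetric unitary controlled on such a register remains $\Gamma$-symmetric, because the control sits outside the $\Gamma$-orbit of active qubits. The only operations that genuinely need attention are those that act on the $n$ active qubits without being already supplied as $\Gamma$-symmetric, and for these the unbounded-fan-in threshold gates will yield uniform symmetric implementations.

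For amplitude amplification, I realise the Grover iterate $G = -\Acal R_0 \Acal^{-1} R_{\psi_1}$ and apply it $O(1/\sqrt{p})$ times to $\Acal\ket{0}^n$. The reflection $R_{\psi_1} = I - 2\ket{\psi_1}\bra{\psi_1}$ is implemented by flagging with $\Dcal$, phasing with a $Z$ on the flag ancilla, and uncomputing with $\Dcal^{-1}$; this inherits $\Gamma$-symmetry from $\Dcal$. The reflection $R_0 = I - 2\ket{0}^n\bra{0}^n$ is built symmetrically by applying NOT to every active qubit, writing $\thr_n$ of the $n$ qubits into a fresh ancilla via a single threshold gate, applying $Z$ on that ancilla, and reversing the sandwich; since NOT and the threshold gate are applied uniformly across all $n$ active qubits, the result is invariant under the entire symmetric group $\Scal_n$ and \emph{a fortiori} under $\Gamma$. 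Each iterate then costs $O(s(\Acal)+s(\Dcal))$ gates.

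For phase estimation, I use the textbook circuit on a fresh $m$-qubit phase register with $m = \lceil \log(1/\varepsilon) \rceil$: Hadamards, then $m$ controlled powers $\Acal^{2^j}$ (controlled by the $j$-th phase qubit), then an inverse QFT on the phase register, then measurement. The phase register is pointwise fixed by $\Gamma$, so the Hadamards, inverse QFT and measurement are trivially $\Gamma$-invariant, and each controlled-$\Acal^{2^j}$ is obtained by prepending the phase qubit as an extra control to every gate of $\Acal$, which preserves symmetry for the same reason. The geometric sum $\sum_{j<m} 2^j$ yields the $O(s(\Acal)/\varepsilon)$ term, and the QFT contributes the $O(1/\varepsilon^2)$ slack. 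For the LCU, the control register is a separate $\lceil \log k \rceil$-qubit ancilla on which $\Gamma$ acts trivially; I prepare the state $\sum_i \sqrt{|\alpha_i|}\, e^{i\arg \alpha_i} \ket{i}/\sqrt{\|\alpha\|_1}$ in $O(k \log k)$ gates acting only there, execute the select $\sum_i \ket{i}\bra{i}\otimes U_i$ by, for each $i$, computing a $\ket{i}$-flag into a fresh single ancilla with one threshold gate, running $\Ccal_i$ controlled on that flag, and uncomputing, and finally undo the state preparation. Projecting the control register onto $\ket{0}$ yields the desired state with amplitude $L/\|\alpha\|_1$, and feeding this into the symmetric amplitude amplification of the first part boosts the success probability to $\Theta(1)$ in $O(\|\alpha\|_1/L)$ rounds, matching the stated bound.

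The main obstacle is the symmetric realisation of the non-$\Gamma$-equivariant ``glue'' pieces -- principally the reflection around $\ket{0}^n$ and the $(\log k)$-fold control inside the select -- since a naive Toffoli-tree implementation would break symmetry by imposing an arbitrary linear ordering on the active qubits. What makes the theorem go through is precisely that unbounded-fan-in threshold gates act uniformly on all their inputs, so each of these ``compute-a-global-flag'' steps can be performed by a single fully symmetric gate.
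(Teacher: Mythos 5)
Your overall decomposition matches the paper's: a double-reflection scheme for amplitude amplification, the textbook controlled-powers + inverse QFT for phase estimation, and prepare-select-unprepare with amplitude amplification for LCU, with the ancilla registers kept fixed under $\Gamma$ and threshold gates used to replace ordered Toffoli trees. You also correctly identify the crucial observation that the threshold gate set absorbs the symmetry-breaking ``glue.''

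However, there is one genuine gap, and it sits in the step you wave through: the claim that each controlled-$\Acal^{2^j}$ (and each controlled-$\Ccal_i$ in the select) ``is obtained by prepending the phase qubit as an extra control to every gate of $\Acal$.'' Neither of the two gate types in the gate set admits such a naive control. For a single-qubit gate $U$ the controlled version is a two-qubit gate, which is not in the gate set; the paper instead uses the Nielsen--Chuang decomposition $U = e^{i\alpha} A X B X C$ with $ABC = I$ to build control-$U$ from single-qubit gates and CNOTs spread over five layers. For a threshold gate $\thr^{S,h}_{\geq t}$ one cannot simply enlarge the support and bump the threshold: $\thr^{S\cup\{c\},h}_{\geq t+1}$ flips the head whenever $|a_S|\geq t+1$ even when the control bit $a_c=0$, so it does not implement the controlled gate. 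The paper handles this with a dedicated lemma (Lemma 4, controlled operations) which introduces a fresh ancilla $h'$ for each head $h$ and realises the controlled threshold gate via three threshold gates in three layers, verifying along the way that the resulting layers remain commuting, hence layerable and symmetrisable. Your proof needs some such construction before the phase estimation and LCU arguments can be run. A secondary, smaller point: the theorem asks to map $\ket{0}^n$ to $\ket{\psi_1}$ exactly, whereas applying the Grover iterate $O(1/\sqrt p)$ times only gets you close; the paper's proof lands exactly on $\ket{\psi_1}$ by adjoining a qubit rotated by a carefully chosen angle $R_y(\beta)$ into the ``good state'' and reflecting through the augmented state, a trick your version omits (the paper does note the approximate version suffices under relaxed hypotheses, but as stated the construction should be exact).
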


An important example of how symmetry can be useful in quantum information theory and quantum computing is given by the \emph{symmetric subspace} $\vee^n \C^2$.
This is the vector space formed by all $n$-qubit states which are invariant under permutations of the qubits, and has been proven useful in state estimation, optimal cloning and the quantum de Finetti theorem;
see \cite{Harrow2013} for an account of these applications.
We show that every state in the symmetric subspace can be efficiently prepared by a permutation-symmetric circuit:

\begin{theorem}[Symmetric state preparation]
\label{thm:sym_state_intro}
    Given the classical description of an $n$-qubit symmetric state $\ket{\varphi} \in \vee^n \C^2$, we can construct an $\Scal_n$-symmetric quantum circuit which prepares $\ket{\varphi}$ using $O(n^{2.75})$ gates.
\end{theorem}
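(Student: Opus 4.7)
My plan is to expand $\ket{\varphi}$ in the orthonormal basis of Dicke states $\{\ket{D_k^n}\}_{k=0}^n$ (the uniform superpositions of $n$-bit strings of fixed Hamming weight), prepare each $\ket{D_k^n}$ symmetrically by amplitude-amplifying an appropriate coherent product state, and combine the results using the LCU subroutine of Theorem~\ref{thm:subroutines}. Writing $\ket{\varphi} = \sum_{k=0}^n \alpha_k \ket{D_k^n}$, the coefficients $\alpha_k$ are classically computable from the description of $\ket{\varphi}$, and Cauchy--Schwarz together with $\|\alpha\|_2 = 1$ gives $\|\alpha\|_1 \leq \sqrt{n+1}$.

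For each $k$, I construct an $\Scal_n$-symmetric circuit $\Ccal_k$ preparing $\ket{D_k^n}$ as follows. Apply the same single-qubit rotation $U_\theta \colon \ket{0} \mapsto \cos\theta\ket{0} + \sin\theta\ket{1}$ to every active qubit, with $\sin^2\theta = k/n$; the result is the coherent state $\ket{\psi_\theta}^{\otimes n}$, which is manifestly $\Scal_n$-invariant and has amplitude $\sqrt{p_k}$ on $\ket{D_k^n}$ for $p_k = \binom{n}{k}(k/n)^k((n-k)/n)^{n-k}$. A Stirling estimate gives $p_k = \Omega(1/\sqrt{n})$ uniformly in $k$ (with $p_0 = p_n = 1$). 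Next, build a flagging subcircuit $\Dcal$ that writes the indicator of $\{|x| = k\}$ to an ancilla by XOR-ing the outputs of $\thr_{\geq k}$ and $\thr_{\geq k+1}$ applied to the $n$ active qubits; since threshold gates depend only on Hamming weight, $\Dcal$ is $\Scal_n$-symmetric and has size $O(1)$. Amplitude amplification (Theorem~\ref{thm:subroutines}(1)) then yields an $\Scal_n$-symmetric $\Ccal_k$ of size $O((s(\Acal) + s(\Dcal))/\sqrt{p_k}) = O(n \cdot n^{1/4}) = O(n^{5/4})$.

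Feeding $\{\alpha_k\}$ and $\{\Ccal_k\}$ into the LCU subroutine (Theorem~\ref{thm:subroutines}(3)), and noting that $L = 1$ since $\ket{\varphi}$ is a unit vector, the total size of the resulting $\Scal_n$-symmetric circuit is
$$O\!\left(\|\alpha\|_1 \cdot \Bigl((n+1)\log(n+1) + \sum_{k=0}^n s(\Ccal_k)\Bigr)\right) = O\!\left(\sqrt{n}\cdot n^{9/4}\right) = O(n^{11/4}),$$
matching the claimed $O(n^{2.75})$ bound.

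The main obstacle is the tension between the amplitude-amplification cost $\sim n/\sqrt{p_k}$ per Dicke state -- worst at $k\approx n/2$, where $p_k = \Theta(1/\sqrt n)$ -- and the $\|\alpha\|_1$ overhead in LCU, which essentially forces the use of an orthonormal decomposition. The coherent-state angle $\sin^2\theta = k/n$ is the sweet spot: it drives the amplification cost down to $O(n^{1/4})$, and this is exactly absorbed by the $\sqrt n$ bound on $\|\alpha\|_1$ from Cauchy--Schwarz, producing the fractional exponent $11/4$. A direct product-state LCU decomposition of $\ket{\varphi}$, or a less peaked starting state for amplitude amplification of each Dicke component, would yield a weaker bound.
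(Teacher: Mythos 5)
Your proposal matches the paper's proof essentially verbatim: the same Dicke-state decomposition, the same Cauchy--Schwarz bound $\|\alpha\|_1 \leq \sqrt{n+1}$, the same preparation of $\ket{D_k^n}$ by amplitude-amplifying $R_y$-rotated product states with $\sin^2\theta = k/n$ against a Hamming-weight equality check, and the same LCU combination giving $O(n^{11/4})$. The only cosmetic differences are that you build the $\Eq_k$ flag from two XOR-ed threshold gates where the paper uses its $\Eq_k$ gate directly, and you use the uniform bound $p_k = \Omega(1/\sqrt{n})$ where the paper states the sharper $p_k = \Theta(k^{-1/2})$; both lead to the identical final count.
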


Note that elements of the symmetric subspace are the only states which can be prepared by an $\Scal_n$-symmetric circuit (when starting from~$\ket{0}^n$), even in information-theoretic terms.
The last theorem then provides an equivalence between what is \emph{possible} to prepare and what is \emph{efficient} to prepare in the setting of $\Scal_n$-symmetric circuits, a result which has no clear analogue in the non-symmetric setting.

In addition to preparing states belonging to the symmetric subspace $\vee^n \C^2$, similar techniques to those used in Theorem~\ref{thm:sym_state_intro} can also be applied for efficiently implementing the full action of a given permutation-symmetric unitary on $\vee^n \C^2$.
This is shown by our next theorem:

\begin{theorem}[Symmetric subspace unitaries] \label{thm:sym_subspace_intro}
    Given a permutation-symmetric $n$-qubit unitary $U$, we can implement the restricted action of~$U$ on the symmetric subspace by an $\Scal_n$-symmetric circuit with $O(n^{3.75})$ gates.
\end{theorem}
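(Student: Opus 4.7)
The plan is to view the permutation-symmetric unitary $U$ as acting on the $(n+1)$-dimensional symmetric subspace via an $(n+1) \times (n+1)$ matrix $\tilde U$ expressed in the Dicke basis $\{\ket{D_k^n}\}_{k=0}^n$, and to synthesise $\tilde U$ as a short product of reflections about symmetric states together with a diagonal in the Hamming weight. Concretely, I would first apply a complex Householder QR-style decomposition to $\tilde U$ to obtain
\[
\tilde U \;=\; R_{\psi_1} R_{\psi_2} \cdots R_{\psi_n} \, D,
\]
where each $R_{\psi_j} = I - 2 \ket{\psi_j}\bra{\psi_j}$ is a reflection about a unit vector $\ket{\psi_j} \in \vee^n \C^2$, and $D$ is diagonal in the Dicke basis -- such a decomposition exists because any upper-triangular unitary is necessarily diagonal. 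Extracting the vectors $\ket{\psi_j}$ and the phases of $D$ from $\tilde U$ is a polynomial-time classical preprocessing step and does not contribute to the gate count.

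Each reflection $R_{\psi_j}$ I would implement using the standard prepare-reflect-unprepare pattern. I invoke Theorem~\ref{thm:sym_state_intro} to obtain an $\Scal_n$-symmetric circuit $V_j$ of size $O(n^{2.75})$ satisfying $V_j \ket{0}^n \ket{0}^a = \ket{\psi_j}\ket{0}^a$, and then form the composite $V_j (I - 2\ket{0}^{n+a}\bra{0}^{n+a}) V_j^\dagger$. The inner reflection about $\ket{0}^{n+a}$ is a permutation-symmetric multi-controlled phase and can be realised with $O(1)$ symmetric gates by conjugating a $Z$ on a fresh ancilla by two threshold gates of threshold one (applied to the negated qubits). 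The resulting circuit is $\Scal_n$-symmetric, acts as $I - 2\ket{\psi_j}\bra{\psi_j}$ on the symmetric subspace, and returns all workspace qubits to $\ket{0}$, so the $n$ reflections can be composed safely.

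For the diagonal factor $D = \sum_{k=0}^n e^{i\theta_k}\ket{D_k^n}\bra{D_k^n}$, I would coherently compute the Hamming weight of the active register into a $\lceil \log(n+1)\rceil$-qubit ancilla using $O(n)$ threshold gates (this is manifestly $\Scal_n$-symmetric since the Hamming weight depends only on the multiset of input bits), then apply the $n+1$ desired phases conditioned on that ancilla, and finally uncompute. This contributes $\tilde O(n)$ symmetric gates. Summing over the $n$ reflections at $O(n^{2.75})$ gates each, plus the diagonal block, yields the claimed $O(n^{3.75})$ bound. The main obstacle I anticipate is verifying that each composite piece is symmetric in the strong formal sense of Definition~\ref{def:automorphism}: every $\sigma \in \Scal_n$ acting on the active qubits must extend to a permutation of \emph{all} the ancillas used across the $n$ reflection blocks and the diagonal block. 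This should reduce to a routine bookkeeping argument by keeping ancillas belonging to different subroutines in disjoint registers and composing the ancilla permutations furnished by Theorem~\ref{thm:sym_state_intro} blockwise.
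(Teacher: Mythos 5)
Your proof is correct, and it takes a genuinely different route from the paper's. Both arguments have the same skeleton -- reduce the problem to $O(n)$ invocations of symmetric state preparation (Theorem~\ref{thm:sym_state_intro}), for a total of $O(n \cdot n^{2.75}) = O(n^{3.75})$ gates -- but the reductions are different. The paper never forms a decomposition of $\tilde U$ at all: for each $k \in \{0,\dots,n\}$ it builds a conditional block that, roughly, applies $\Ccal_k^\dagger$ to map the Dicke component to $\ket{0}^n$, uses an $\Eq_0$ gate to flag that component into an ancilla, and then conditionally prepares $\ket{\phi_k} = U\ket{D^n_k}$; the ancillas are cleaned up by a second pass of $n+1$ blocks at the end. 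Your approach instead performs a classical Householder QR factorisation $\tilde U = R_{\psi_1}\cdots R_{\psi_n} D$ in the Dicke basis and implements each Householder reflection via prepare--reflect--unprepare, with the inner reflection about $\ket{0}^{n+a}$ realised symmetrically by $\Eq_0^{S,h} \cdot Z_h \cdot \Eq_0^{S,h}$ for a fresh ancilla $h$. The main advantage of your route is that correctness and ancilla cleanup are entirely local to each reflection block -- each $R_{\psi_j}$ returns its workspace to $\ket{0}^a$ before the next one runs -- whereas the paper's conditional synthesis has a more delicate global uncomputation phase interleaved at the end. Both approaches are equally tied to Theorem~\ref{thm:sym_state_intro} as the bottleneck.

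One small piece of advice: for the diagonal factor $D$, you propose computing the Hamming weight into a $\lceil\log(n+1)\rceil$-qubit binary register. This is workable but awkward in the threshold-gate formalism -- you then need multiply-controlled phases, which are not directly in the gate set, and the claim that the binary encoding is ``manifestly $\Scal_n$-symmetric'' requires spelling out. It is cleaner to use a one-hot encoding: introduce $n+1$ ancillas $h_0,\dots,h_n$, apply the commuting layer $\{\Eq^{X,h_k}_k : 0\le k\le n\}$ where $X$ is the active register, then apply the single-qubit phase $P_{\theta_k}$ on each $h_k$, then uncompute with the same $\Eq$ layer. This is $O(n)$ gates, $O(n)$ ancillas, and trivially $\Scal_n$-symmetric since the layers are closed under the induced action. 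This also matches the ancilla budget of the paper's stronger statement (Theorem~\ref{thm:sym_subspace} claims $O(n)$ workspace qubits), though the version you are proving does not bound ancillas.
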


Another important setting where symmetry considerations prove crucial within quantum computing is the setting of \emph{quantum machine learning}.
In that context, Schatzki et al \cite{SLNSC2024} recently proposed a class of \emph{group-equivariant quantum neural networks} that encode the symmetries of the considered problem into their learning architectures.
Here we show how their proposed class of QNNs naturally falls within our framework of symmetric quantum circuits.

\begin{theorem}[Equivariant QNNs] \label{thm:QNN_intro}
    Permutation-equivariant quantum neural networks (as defined in~\cite{SLNSC2024}) can be implemented by $\Scal_n$-symmetric circuits with only a linear increase in the number of gates.
\end{theorem}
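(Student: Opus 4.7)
The plan is to unpack the gate set of permutation-equivariant QNNs from \cite{SLNSC2024} and exhibit a symmetry-preserving compilation for each gate template, with at most constant overhead per elementary operation. Every parameterised gate in such a network has the form $e^{i\theta H}$, where $H$ is a permutation-invariant Hamiltonian obtained by summing a fixed $k$-body Pauli operator over all subsets of qubits in some orbit $\mathcal{O}$ of the $\Scal_n$-action on $\binom{[n]}{k}$. When the summands of $H$ pairwise commute -- as is the case for the standard equivariant ans\"atze built from a single Pauli type -- the exponential splits as $e^{i\theta H} = \prod_{S \in \mathcal{O}} e^{i\theta H_S}$, and the task reduces to realising these commuting single-term exponentials in parallel in a manifestly symmetric way.

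To implement each $e^{i\theta H_S}$ symmetrically I would use the standard ``parity-into-ancilla'' trick, instantiated in a way that respects $\Scal_n$. Allocate one workspace qubit $a_S$ for every $S \in \mathcal{O}$; then apply a layer of CNOTs from each $q \in S$ into $a_S$ (so $a_S$ ends up carrying $\bigoplus_{q \in S} x_q$), followed by a layer of $R_z(2\theta)$ rotations on the ancillas, and finally a mirror CNOT layer to uncompute the parities. For Pauli bases other than $Z$ the same template works after conjugating by single-qubit basis-change gates applied uniformly to all active qubits. The gates inside each layer pairwise commute (the CNOTs share a common active-to-ancilla direction and the ancilla rotations act on disjoint qubits), so in our model each layer is a well-defined unordered set of commuting gates.

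Symmetry of this compilation is then immediate: extending any $\sigma \in \Scal_n$ to the workspace by the induced action $a_S \mapsto a_{\sigma(S)}$ permutes the gates of each layer setwise, sending $\mathrm{CNOT}_{q, a_S}$ to $\mathrm{CNOT}_{\sigma(q), a_{\sigma(S)}}$ and ancilla rotations to ancilla rotations; by Definition~\ref{def:automorphism} this yields an automorphism of the circuit realising~$\sigma$. Composing these blocks one QNN gate at a time produces an $\Scal_n$-symmetric circuit for the whole network, and each block uses $O(k \cdot |\mathcal{O}|)$ elementary gates, matching the natural elementary-gate count of the equivariant QNN up to a constant factor.

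The main obstacle is handling QNN layers whose generator $H$ is a sum of \emph{non-commuting} equivariant terms, for instance isotropic Heisenberg-type couplings mixing $XX$, $YY$ and $ZZ$ interactions. There the exact factorisation into orbit-wise exponentials breaks down, and one must resort to a symmetric Trotter splitting -- compiling each commuting piece as above and interleaving them in a schedule that is itself invariant under $\Scal_n$ -- or, when finer accuracy is required, combine this with the LCU subroutine of Theorem~\ref{thm:subroutines} to implement linear combinations of the resulting step unitaries. The careful bookkeeping needed to keep this composite construction symmetric, and to confirm that the overhead remains linear in the original elementary-gate count, is where the technical heart of the proof lies.
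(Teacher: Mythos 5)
Your construction for the parameterised layers coincides with the paper's: factorise each equivariant exponential into commuting single-term exponentials, compute the relevant parity into a dedicated per-orbit-element ancilla, rotate there, and uncompute, extending each $\sigma \in \Scal_n$ to the workspace by $a_S \mapsto a_{\sigma(S)}$ so that the circuit automorphism is manifest. The paper implements $e^{-i\alpha Z_j Z_k}$ exactly this way (with $P_{2\alpha}$ plus a global phase on the ancilla rather than $R_z(2\theta)$, which is the same unitary), so the core argument is identical.

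Two points are worth flagging. First, the paragraph you identify as ``the technical heart'' --- handling non-commuting equivariant Hamiltonians by symmetric Trotterisation or LCU --- does not arise for this theorem. The permutation-equivariant QNNs of \cite{SLNSC2024} are defined with generators drawn precisely from $\{\Hcal_X, \Hcal_Y, \Hcal_{ZZ}\}$, where
$\Hcal_X = \tfrac{1}{n}\sum_j X_j$, $\Hcal_Y = \tfrac{1}{n}\sum_j Y_j$, $\Hcal_{ZZ} = \tfrac{2}{n(n-1)}\sum_{j<k} Z_j Z_k$,
and each of these is a sum of pairwise-commuting Pauli terms. So the exact orbit-wise factorisation applies to every layer, and your commuting-case argument is already a complete argument for the parameterised gates; no Trotter or LCU machinery is needed.

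Second, you omit the symmetric implementation of the QNN's measurement observables, which in \cite{SLNSC2024} have the form $\tfrac{1}{n}\sum_j \chi_j$, $\tfrac{2}{n(n-1)}\sum_{j<k}\chi_j\chi_k$, or $\prod_j \chi_j$ for a fixed Pauli $\chi$. The paper also handles these: after a uniform single-qubit basis change, the eigenvalue of the one-body and $n$-body observables on $\ket{x}$ depends only on the Hamming weight $|x|$, and the two-body observable depends only on the Hamming weight of the $\binom{n}{2}$ pairwise parities $x_i \oplus x_j$, which can be written into ancillas symmetrically; the weights are then read off with threshold and equality gates. This piece is short, but it is part of the implementation claimed by the theorem and should be included to make the proof complete.
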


We note that the class of equivariant QNNs proposed by Schatzki et al can be generalised to groups beyond the full permutation group~$\Scal_n$, and their translation into symmetric circuits proceeds similarly to the proof of Theorem~\ref{thm:QNN_intro}.
Our notion of symmetric circuits is then mirrored in the quantum machine learning literature, and one can regard $\Gamma$-equivariant QNNs as examples of $\Gamma$-symmetric circuits;
see Section~\ref{sec:examples} for a discussion.

Finally, we provide an example of a decision problem for which one obtains unconditional quantum advantage in the symmetric setting.
The problem XOR-SAT is a Boolean satisfiability problem where each clause is given by the exclusive OR of some subset of the variables (or their negations).
This problem is clearly symmetric with respect to permuting clauses and relabelling variables, that is, it is $\Scal_m \times \Scal_n$-symmetric
(where~$n$ is the number of variables and~$m$ is the number of clauses).

Despite the fact that XOR-SAT is solvable in polynomial time (e.g., via Gaussian elimination), it was proven by Atserias, Bulatov and Dawar \cite{AtseriasBD2009} that it cannot be solved by polynomial-size symmetric classical circuits.
This lower bound was later made more explicit and strengthened by Atserias and Dawar \cite{AtseriasD19}.
Here we show that symmetric quantum circuits can solve this problem in polynomial time.

\begin{theorem}[Symmetric quantum advantage] \label{thm:advantage_intro}
    The problem XOR-SAT on~$n$ variables and~$m$ clauses can be solved (with certainty) by $\Scal_m \times \Scal_n$-symmetric quantum circuits of size $\Tilde{O}(m^2 n^2)$.
    By contrast, any $\Scal_m \times \Scal_n$-symmetric threshold circuit that solves XOR-SAT with $m \geq n$ must have size~$2^{\Omega(n)}$.
\end{theorem}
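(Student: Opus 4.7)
The plan for the upper bound is to symmetrically implement Gaussian elimination on the augmented matrix $[A \mid b]$ and decide whether it has the same rank as $A$. The central obstruction to executing Gaussian elimination symmetrically is that it requires selecting pivot rows and columns, which manifestly breaks the $\Scal_m \times \Scal_n$ symmetry of the problem. We sidestep this by invoking the subroutines of Theorem~\ref{thm:subroutines}: at each elimination stage, we prepare a coherent superposition over all valid pivot positions using symmetric amplitude amplification, and then effect the resulting row operations via the linear combination of unitaries primitive. Since every classical execution of Gaussian elimination (regardless of pivot choice) produces the same rank and the same consistency verdict, the branches of the quantum superposition agree on the output, which lets us extract a classical answer with certainty and cleanly uncompute the intermediate pivot choices.

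Each elimination stage involves $O(mn)$ controlled row operations, and symmetrising the pivot selection via amplitude amplification introduces a polynomial overhead in the matrix dimensions. After $O(\min(m,n)) \leq O(n)$ stages the matrix is in echelon form, at which point consistency is verified symmetrically by a threshold gate that tests for any row of the form $(0,\ldots,0 \mid 1)$. Accounting carefully for these costs will yield a total of $\Tilde{O}(m^2 n^2)$ gates.

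For the lower bound, we appeal to the exponential lower bound of Atserias, Bulatov and Dawar~\cite{AtseriasBD2009}, later sharpened by Atserias and Dawar~\cite{AtseriasD19}, which shows that XOR-SAT cannot be decided by $\Scal_m \times \Scal_n$-symmetric classical circuits of sub-exponential size when $m \geq n$. Their lower bound technique is based on the bijective pebble games associated with fixed-point logic with counting, and thus naturally accommodates counting and threshold gates on top of Boolean gates; alternatively, one can simulate each threshold gate by a symmetric Boolean circuit of polynomial overhead (for instance via symmetry-preserving counting networks). Either way, the cited $2^{\Omega(n)}$ bound transfers to $\Scal_m \times \Scal_n$-symmetric threshold circuits.

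The main obstacle will be verifying that the coherent pivot selection and the associated row operations can be uncomputed cleanly enough to preserve circuit-level symmetry in the sense of Definition~\ref{def:automorphism}. Each ancilla qubit introduced to hold intermediate data must be assigned to an orbit of $\Scal_m \times \Scal_n$ so that every input permutation lifts to a permutation of workspace qubits, and this lifting must be coordinated consistently across all elimination stages and amplitude-amplification rounds. Showing that the composition of symmetric subroutines produced in this way remains symmetric at the level of the entire circuit, rather than merely at the level of each subroutine in isolation, is the most delicate part of the argument.
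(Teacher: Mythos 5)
Your upper-bound approach and the paper's differ in a way that matters. You propose to run Gaussian elimination directly, coherently superposing over pivot choices at each elimination stage and keeping the circuit symmetric step by step. The paper instead performs a one-time ``symmetry-trivializing'' transformation: it creates a fresh register of $\tilde{m}\times\tilde{n}$ qubits (with $\tilde{m}=O(m\log m)$, $\tilde{n}=O(n\log n)$) holding a superposition of modified instances $(A',b')$, obtained by entangling the input with $O(m\log m)+O(n\log n)$ Dicke states $\ket{D^m_1}$, $\ket{D^n_1}$ and amplitude-amplifying the coupon-collector event that every row and column is sampled at least once. Because each such $(A',b')$ is $(A,b)$ with rows and columns merely duplicated, it has the same satisfiability, and — this is the key point — the induced action of $\Scal_m\times\Scal_n$ on the new $(i,j)$-qubits is \emph{trivial}: every gate acting on them is fixed by the group. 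After that, one may run an entirely arbitrary (non-symmetric) Gaussian-elimination circuit on the new register and automatically remain $\Scal_m\times\Scal_n$-symmetric. Your approach never leaves the original symmetry-nontrivial register, so you carry the symmetry burden through every stage of the elimination.

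There is a concrete gap in your plan as stated. First, the subroutines you invoke do not obviously give you what you want: Lemma~\ref{lem:LCU} produces a state $\frac{1}{L}\sum_i\alpha_iU_i\ket{0}^n$, i.e.\ a \emph{linear combination} of unitaries applied to a fixed state, not a controlled or pivot-indexed family of row operations; translating ``superpose over pivots, then do the corresponding row reductions'' into LCU is not a drop-in substitution. Second, and more seriously, after an elimination stage the working register is entangled with the pivot register — different pivots yield different reduced matrices — and you cannot uncompute the pivot choice from the reduced matrix without, in effect, recomputing which pivot was used, which is not available. Your own last paragraph flags this as the ``most delicate part,'' but it is not merely delicate: as sketched, no mechanism is given for making the ancilla bookkeeping consistent across stages, and the standard uncomputation trick does not apply when the garbage and the data register are entangled in a pivot-dependent way. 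The paper's design sidesteps all of this: there is a single symmetric preparation phase, after which the remainder of the circuit needs no symmetry accounting at all because the relevant qubits carry a trivial group action.

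On the lower bound, your primary justification (the Atserias--Bulatov--Dawar / Atserias--Dawar FPC-inexpressibility results, combined with the Anderson--Dawar correspondence between FPC and polynomial-size symmetric threshold circuits) is the same as the paper's and is correct. Your fallback claim that ``one can simulate each threshold gate by a symmetric Boolean circuit of polynomial overhead'' is false, however: the paper explicitly notes that symmetric circuits over the $\{\mathrm{AND},\mathrm{OR},\mathrm{NOT}\}$ basis cannot even compute parity in polynomial size, so no such simulation exists. You should drop that alternative.
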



\subsection{Outlook}

This paper initiates the study of symmetric quantum circuits, with the aim of studying the relationship between symmetries and quantum speedups.
In order to do so, we introduce a new \emph{restricted model of computation} -- akin to the query model \cite{Ambainis2018}, where the input to a problem can only be accessed through black-box queries, or to shallow circuits \cite{BravyiGK2018}, where computation is restricted to circuits of bounded depth.
In contrast to those two other settings, however, the restriction in our model is not uniform across all problems but is instead \emph{problem-specific}, depending on the symmetries of the task at hand.

We stress that each of the restricted computational models mentioned above has its own advantages and disadvantages, and they are fundamentally incomparable to each other.
Each model highlights a different facet of quantum computation, and the specific insights gleaned from any one model can be lost when translating the same problem to a different setting.
While limitations on the power of \emph{unrestricted} quantum computation are still beyond our grasp, it is only through a combination of insights obtained from each restricted model that we can get a more complete picture.

A significant advantage of classical symmetric circuits is that they provide a natural model for proving exponential lower bounds for the complexity of certain computational problems.
This raises two intriguing possibilities in the quantum setting:
that these lower bound techniques may be extended so that we can prove strong \emph{quantum} lower bounds for symmetric computation;
and that we can unconditionally prove an \emph{exponential separation} between the power of classical and quantum symmetric computation.
In this paper we show that this second possibility is indeed true, while the first possibility remains open and is left for future work.
In Section~\ref{sec:Open} we provide several other open questions and possible lines of study which will further our understanding of the role of symmetries in quantum computation.

\subsection{Related work}

The literature regarding the influence of symmetries in computational complexity and algorithm design is vast.
There are consequently many papers which are related to the present work, either directly or tangentially, and here we mention the ones we believe are most relevant to our line of investigation.

The closest line of work to ours is certainly that of (classical) symmetric computation and symmetric threshold circuits \cite{AndersonD2017, Dawar2020, Dawar2024}.
This is far from coincidental:
our paper builds upon that work to extend it to the quantum setting, and we rely on their insights both for inspiration and as a source of motivation for what we do.
Indeed, the great success of their computational model in solving nontrivial problems combined with the existence of (unconditional) exponential circuit lower bounds were what drove us to start this work.
We were also motivated by the close connection between classical symmetric computation and logic \cite{AndersonD2017}, which was essential in obtaining the aforementioned lower bounds;
we hope to bring such connections and techniques to the quantum setting as well.
The relevant concepts and results in this area will be briefly surveyed in Section~\ref{sec:classical}.

Instances of symmetric quantum circuits (in the sense considered in this paper) were implicitly explored in the context of quantum machine learning, where they lead to efficient equivariant quantum neural networks~\cite{LSFVCC2022} and other variational quantum machine learning primitives \cite{MMGMAWE2023}.
Properties of symmetric circuits were also used to design efficient classical simulation algorithms for a broad range of quantum systems~\cite{AnschuetzBBL2023}.

Symmetry considerations are often useful in optimisation tasks, as they can significantly reduce complexity, speed up computations and improve solution quality by reducing the search space or simplifying the problem structure and optimisation landscape.
For instance, in neural network optimisation, taking advantage of parameter space symmetries (where different parameter values lead to the same loss) via the so-called symmetry teleportation~\cite{zhao2022symmetry} increases the likelihood of moving to a point where gradient descent is more efficient, due to a flatter or more favourable curvature of the loss function.
Similarly, symmetry-restricted quantum circuits used to realize equivariant unitary transformations remarkably lead to rigorous performance guarantees~\cite{zheng2023speeding} and an improved runtime for quantum variational algorithms \cite{MMGMAWE2023, ragone2022representation}.

Finally, the role of symmetries in quantum speedups has been studied in the \emph{query model} of computation \cite{AaronsonA2014, Chailloux2019, BenCGKPW2020}.
Those results have mainly been negative, showing a significant amount of symmetries \emph{precludes} super-polynomial quantum advantage in the number of queries used.
Symmetries have also been shown to limit the performance of variational algorithms~\cite{bravyi2020obstacles}, and to impose additional constraints on the evolution of quantum systems with local interactions \cite{Marvian2022, MarvianLH2022}.

To avoid confusion, we would like to elaborate on how our work differs from the latter two. Despite common nomenclature, our work and that of Marvian, Liu, and Hulse explores the role of symmetry in quantum systems from two fundamentally different perspectives, leading to seemingly contradictory but ultimately complementary conclusions. The primary distinction lies in the objectives of the two research directions: our work presents a prescriptive, computational framework for designing algorithms, while their work provides a descriptive, physical framework that identifies the inherent limitations of quantum dynamics. Our framework of symmetric quantum circuits is a model of computation established to study the prospects of quantum advantage in a complexity-theoretical setting. 
We primarily focus on discrete permutation symmetries relevant to combinatorial problems, making use of non-local computational tools like unbounded-fan-in threshold gates and the Linear Combination of Unitaries subroutine.
A key feature is the use of ancilla qubits, which offer the flexibility needed to ensure the overall circuit respects the problem's symmetry, even if internal operations are asymmetric. In contrast, their work focuses on continuous symmetries, such as $SU(d)$, which are linked to conservation laws like the conservation of total spin. The central assumption is that all physical interactions are local. This descriptive approach seeks to determine which quantum operations are physically realizable in nature, rather than what is theoretically computable. Our work proves that for certain discrete symmetries, any symmetric unitary operation can be implemented by a symmetric circuit. The work of~\cite{Marvian2022, MarvianLH2022} however, establishes a powerful no-go theorem: generic symmetric unitaries cannot be implemented using only local symmetric interactions. Our model achieves universality precisely by using computational resources which are explicitly forbidden by the locality constraint in the physical model of those authors.

\subsection{Open problems}
\label{sec:Open}

Our work opens a number of questions that can shed further light on the computational power and limitations of symmetric quantum computation.

The peculiar structure of this class of quantum computations may lead to stronger circuit lower bounds which are simpler to obtain.
This possibility is motivated by the availability of powerful methods for proving circuit lower bounds in the classical symmetric setting.
It would be very interesting to extend such methods to the quantum setting.

Another interesting question is whether the definition of symmetric circuits is robust to changes in the quantum gate set.
Theorem~\ref{thm:part_sym} shows that \emph{reversible} symmetric circuits are fairly robust to changes in the allowed gate set, and its proof extends to the quantum setting when we consider only changes in the ``classical part'' of the gate set.
It seems plausible that allowing for arbitrary $k$-qubit permutation-symmetric unitaries (for any constant~$k$) should similarly be of little consequence to symmetric quantum complexity, but proving this seems rather challenging due to the unconventional symmetry constraints.

The quantum setting also gives rise to a richer class of symmetric problems than the classical setting we build upon.
In particular, there are important discrete translation-invariant systems whose symmetries differ from the ``relational structure''-type of symmetries so prevalent in the classical setting.
Such discrete symmetries feature prominently in quantum chemistry applications and pose a number of unique challenges \cite{dovesiCRSO2005}.
How can we work with these systems from the point of view of symmetric circuits?

An important complexity-theoretic question is whether symmetric quantum circuits are universal for implementing symmetric operations:
Given some finite group~$\Gamma$, can we perform arbitrary $\Gamma$-symmetric unitaries using $\Gamma$-symmetric circuits?
It seems plausible that this is the case (for any group $\Gamma$), and we leave it as an open question for future work.

Finally, even though we know that permutation-symmetric circuits are universal for permutation-symmetric operations (as shown in Theorem~\ref{thm:qu_part_sym}), there still remains the question of their complexity:
Given a general permutation-symmetric unitary on~$n$ qubits, what is the cost of implementing it using $\Scal_n$-symmetric circuits?
Answering this question would provide the first symmetric-circuit upper bounds concerning a full class of symmetries.

\subsection{Organisation}

In Section~\ref{sec:symmcirc} we introduce the notion of symmetric \emph{classical} circuits and give an overview of the most important results in that setting.
We then show how to extend it to \emph{reversible} circuits as a stepping stone, and finally generalise the notion of symmetric circuits to the quantum setting in Section~\ref{sec:QuCircuits}, formalising the intuition presented in the introduction.

Section~\ref{sec:build_block} collects several useful building blocks for symmetric quantum algorithms, which will later be used in our main results.
In Section~\ref{sec:subroutines} we show how to implement important quantum subroutines in a symmetric way, in particular providing a proof of Theorem~\ref{thm:subroutines}.
Section~\ref{sec:state_prep} is concerned with symmetric circuits for quantum state preparation, and will contain the proof of Theorem~\ref{thm:sym_state_intro} as well as other efficient protocols for symmetric state preparation.
The exponential separation between classical and quantum symmetric computation (Theorem~\ref{thm:advantage_intro}) will be proven in Section~\ref{sec:advantage}.
We conclude in Section~\ref{sec:examples} by providing examples of symmetric circuits for implementing certain symmetric unitaries, in particular proving Theorem~\ref{thm:sym_subspace_intro} and Theorem~\ref{thm:QNN_intro}.

\section{Symmetric circuits}
\label{sec:symmcirc}

In this section, we build towards a formal definition of symmetric quantum circuits, show that it is robust, and discuss its properties.
We begin in Section~\ref{sec:classical} by recalling the classical definition of symmetric circuits, and in Section~\ref{sec:ClassicalResults} we give a brief overview of some important results in that setting.
We then proceed in two steps:
in Section~\ref{sec:reversible} we restate symmetric circuits in terms of reversible computation, and finally we generalise this model to the quantum setting and discuss its properties in Section~\ref{sec:QuCircuits}.

\subsection{Symmetric classical circuits}
\label{sec:classical}

Building towards our notion of symmetric quantum circuits, we first recall the definition of \emph{symmetric threshold circuits} that was introduced by Anderson and Dawar \cite{AndersonD2017}.
Our presentation will be somewhat informal;
we refer the reader to the original paper for full details.

A Boolean circuit is typically represented by a directed acyclic graph, where each vertex represents a gate.
The vertices without incoming edges are called \emph{input gates}, while the other vertices are labelled by Boolean operators coming from a specified \emph{gate set}.
The \emph{output gates} are those vertices having no outgoing edges.

The gate set we will use throughout this section is the one containing the $\NOT$ function and arbitrary \emph{threshold functions}
\begin{equation} \label{eq:thr}
    \thr_{\geq t}^n: \bset{n} \to \bset{}, \quad \thr_{\geq t}^n(a_1, \dots, a_n) = \one\{ a_1 + \dots + a_n \geq t\}
\end{equation}
for integers $n\geq 1$ and $t\geq 0$.
For convenience, we will also allow equality gates of the form
$$\Eq_t^n: \bset{n} \to \bset{}, \quad \Eq_t^n(a_1, \dots, a_n) = \one\{ a_1 + \dots + a_n = t\},$$
although it is easy to compute such gates using the others.
We denote this gate set by~$G_{\mathrm{thr}}$.

\begin{remark}
    While the choice of gate set might seem somewhat arbitrary, there are good reasons to focus on~$G_{\mathrm{thr}}$ for symmetric computation.
    Indeed, note that the operators in any gate set must be \emph{permutation-invariant}:
    directed graphs impose no structure on the inputs of any gate, and so each gate must compute a function that is invariant under any ordering of its inputs.
    One can show that allowing for more general permutation-invariant functions into the gate set~$G_{\mathrm{thr}}$ only modestly changes the complexity of implementing any given function using symmetric circuits;
    see \cite[Theorem~3.17]{DawarW2021} or Theorem~\ref{thm:part_sym} below.
    On the other hand, the standard Boolean basis (comprising \textrm{AND, OR} and \textrm{NOT}) is too restrictive and cannot even compute parity using polynomial-size symmetric circuits;
    see for instance \cite{AndersonD2017}.
\end{remark}

We will work with circuits over some specified finite structure~$X$ which represents the input to the problem.
In the context of classical symmetric circuits, examples of this structure~$X$ are graphs, hypergraphs, matrices, and various relational structures.
More generally, it could be any set having a well-defined symmetry group $\Gamma\leq \Scal_X$ (which will give meaning to the notion of symmetric circuits later on).
The notion of a threshold circuit over structures is defined as follows:

\begin{definition}[Threshold circuit]
    Let~$X$ be a finite set.
    A \emph{threshold circuit over~$X$} is a directed acyclic graph with a labelling, where each vertex of in-degree~0 is labelled by an element of~$X$ and each vertex of in-degree greater than~0 is labelled by an element $g\in G_{\mathrm{thr}}$ such that the arity of~$g$ matches the in-degree of the vertex.
\end{definition}

To incorporate symmetries into our circuits, we first need to specify a notion of \emph{circuit automorphism}, meaning those transformations that preserve the structure of the circuit.
Put simply, those are automorphisms of the underlying directed graph which preserve the labels of all non-input gates.
More formally:

\begin{definition}[Circuit automorphism]
    Let~$C$ be a threshold circuit over~$X$ with vertex set~$V$, edge set~$E$ and labelling $\lambda: V \to X \cup G_{\mathrm{thr}}$.
    An automorphism of~$C$ is a permutation $\pi \in \Scal_V$ such that:
    \begin{itemize}
        \item $(\pi(u), \pi(v))\in E$ whenever $(u, v)\in E$;
        \item $\lambda(\pi(v)) = \lambda(v)$ for all $v\in V$ of in-degree greater than~0.
    \end{itemize}
\end{definition}

Note that any circuit automorphism~$\pi$ must map input gates to input gates.
If we denote the set of input gates by~$V_0$, it follows that the restriction $\pi_{|V_0}$ is an element of~$\Scal_{V_0}$.
This restriction can be associated with a permutation of~$X$ via the labelling of the input gates:
there is a unique element $\sigma\in \Scal_X$ such that $\lambda(\pi(v)) = \sigma(\lambda(v))$ for all $v\in V_0$;
we then say that~$\pi$ \emph{extends}~$\sigma$.

If the permutation~$\sigma$ thus obtained is a symmetry of the structure~$X$, then we think of the circuit~$C$ as respecting that symmetry.
Indeed, this means there exists a permutation of the vertices of~$C$ which changes neither gate labels nor the graph structure, and whose action on the input gates corresponds to a symmetry of the input.
This motivates the following definition:

\begin{definition}[Symmetric threshold circuit]
    Let~$C$ be a threshold circuit over~$X$ and $\Gamma \leq \Scal_X$ be a subgroup.
    We say that~$C$ is \emph{$\Gamma$-symmetric} if the action of every $\sigma\in \Gamma$ can be extended to an automorphism of~$C$.
\end{definition}

As a simple but illustrative example, let us consider the problem of determining whether a given graph is triangle-free:
    
\begin{example}[Triangle-freeness]
\label{ex:triangle_bool}
    Suppose we wish to construct a symmetric circuit over $n$-vertex graphs that decides whether some given input graph contains a triangle
    (i.e., three vertices pairwise connected by edges).
    A natural choice for the underlying structure~$X$ consists of all $\binom{n}{2}$ potential edges of an $n$-vertex graph, so that
    $$X = \big\{ \{i, j\}:\, 1\leq i< j\leq n\big\}.$$
    (Similar choices, such as the set of all ordered pairs $[n]^2$, would translate to similar symmetric circuits.)
    The intrinsic symmetries of this structure are then permutations of pairs induced by vertex permutations, that is
    $$\Gamma = \big\{\{i, j\} \mapsto \{\tau(i), \tau(j)\}:\, \tau \in \Scal_n\big\}.$$

    A $\Gamma$-symmetric circuit for this problem can be implemented using $\binom{n}{2} + \binom{n}{3} + 1$ gates as follows.
    For each pair $\{i, j\} \in X$ we create an input gate $v_{\{i, j\}}$ which we label by $\{i, j\}$;
    for each triple $\{i,j,k\} \subseteq [n]$ we create a gate $u_{\{i,j,k\}}$ which we label by~$\thr^3_{\geq 3}$;
    and finally, we create a single output gate $out$ with label $\thr^{\binom{n}{3}}_{\geq 1}$.
    (Recall the definition of $\thr^n_{\geq t}$ in equation~\eqref{eq:thr}.)
    For each triple $\{i,j,k\} \subseteq [n]$, we then add the four directed edges $(v_{\{i,j\}}, u_{\{i,j,k\}})$, $(v_{\{j,k\}}, u_{\{i,j,k\}})$, $(v_{\{i,k\}}, u_{\{i,j,k\}})$ and $(u_{\{i,j,k\}}, out)$.
    One can check that, when the input gates are initialised to encode the adjacency relationship of an $n$-vertex graph (with arbitrary vertex labelling), the output gate will return~$1$ if and only if that graph contains a triangle.

    This circuit is easily checked to be $\Gamma$-symmetric.
    Indeed, each symmetry $\sigma\in \Gamma$ corresponds to the action of a permutation~$\tau \in \Scal_n$ on pairs $\{i,j\} \subseteq [n]$.
    This symmetry straightforwardly extends to an automorphism of the circuit:
    simply take the permutation~$\pi$ which maps each~$v_{\{i, j\}}$ to~$v_{\{\tau(i), \tau(j)\}}$, each $u_{\{i,j,k\}}$ to $u_{\{\tau(i), \tau(j), \tau(k)\}}$, and maps~$out$ to itself.
\end{example}

\subsection{Overview of symmetric computation}
\label{sec:ClassicalResults}

The notion of symmetric circuits introduced above has proven to be robust, naturally capturing the intuition of performing computations in a symmetric manner.
In what follows, we briefly outline some of the key results on such circuits that motivated our investigations.
For a more comprehensive account of this rich area, we refer the reader to Dawar’s survey articles \cite{Dawar2020, Dawar2024}.

Symmetric circuits arose from work in \emph{descriptive complexity theory}, whose main aim is to characterize complexity classes by the type of logic needed to express their corresponding languages.
The search for a logic which captures precisely the languages decidable in polynomial time by a Turing machine (i.e., the class P) has led researchers to consider \emph{Fixed-Point Logic with Counting}, which we denote by FPC.
In short, FPC is an extension of first-order logic with a mechanism for defining predicates inductively and a mechanism for expressing the cardinality of definable sets
(informally, mechanisms for iteration and counting).
It has emerged as a logic of reference in the quest for a logic that captured P;
see Grohe's survey \cite{Grohe2008} and Dawar's survey \cite{Dawar2015} for details.

The logic FPC was first introduced by Immerman \cite{Immerman1986}, who posed the question of whether it completely captures P on the class of finite relational structures (such as graphs).
Even though this question was answered in the negative by Cai, F\"{u}rer and Immerman \cite{CaiFI1992} over three decades ago, FPC -- and some of its stronger variants -- have continued to be the focus of much research.
It proved to be remarkably expressive while at the same time admitting powerful techniques for proving inexpressibility results for it.

The fragment of P corresponding to FPC was later given a simple characterization in terms of Boolean circuits by Anderson and Dawar \cite{AndersonD2017}:
it corresponds precisely to those problems decidable by a uniform family of polynomial-size symmetric threshold circuits!
Inexpressibility results for FPC thus translate into super-polynomial lower bounds for symmetric circuits.
It is also possible to translate the logical inexpressibility arguments into more combinatorial ones, thus bypassing the logical formalism and directly proving precise super-polynomial (and even exponential) lower bounds for the size of symmetric circuits which decide specific problems;
see Dawar's presentation abstract \cite{Dawar2016} for how this can be done.

While symmetric threshold circuits were introduced because of their close relationship to the logic FPC, their definition is natural, robust and interesting in its own right.
They could just as well have been introduced by complexity theorists studying the extent to which \emph{asymmetry} can be used as a resource in computation, or by researchers in combinatorial optimisation studying symmetric linear programs that decide graph properties (see \cite{AtseriasDO2021}).
This notion has not received sufficient attention from the wider complexity theory community due to a language barrier, as results concerning symmetric computation are usually stated and proven in the language of logic, which can often present difficulties.

Despite not being able to efficiently decide all problems in~P, many powerful polynomial-time algorithmic techniques can be efficiently expressed in the setting of symmetric threshold circuits.
For instance, the ellipsoid methods for solving linear programs \cite{AndersonDH2015} and for solving semidefinite programs \cite{DawarW2017} can be efficiently implemented in this framework.
As a consequence, one can solve many non-trivial graph optimisation problems in symmetric polynomial time, such as finding the size of a maximum matching or the capacity of a maximum flow.
Moreover, a result of Grohe \cite{Grohe2012, Grohe2017} shows that any class of graphs defined by excluded minors can be efficiently recognised by symmetric threshold circuits.

On the negative side, it is known that no family of polynomial-size symmetric threshold circuits can decide whether a graph is 3-colourable or whether it contains a Hamiltonian cycle.
Other examples of problems with super-polynomial lower bounds are the constraint satisfaction problems 3-SAT and XOR-SAT \cite{Dawar2016}.

Finally, beyond its connection with logic, symmetric threshold circuits have been shown to be closely connected to the theory of linear programming.
Indeed, it turns out that families of \emph{symmetric linear programs} that decide a property of graphs are equivalent, with at most polynomial blow-up in size, to families of symmetric threshold circuits \cite{AtseriasDO2021}.

\subsection{Symmetric reversible circuits}
\label{sec:reversible}

As a first step towards quantizing the above notion of symmetric circuits, we give an equivalent definition of that same class in terms of \emph{reversible circuits}.
As quantum gates represent unitary operations which can be inverted, reversible classical computation forms a natural bridge between the usual notion of Boolean circuits and that of quantum circuits.
In particular, we will transition from the representation of circuits as directed acyclic graphs into one that is closer to the usual representation of quantum circuits.

Note that any Boolean map $f: \bset{n} \to \bset{}$ can be made reversible by introducing one extra ``workspace'' bit:
just consider the $(n+1)$-bit reversible map
\begin{equation} \label{eq:head}
    (a_1, \dots, a_n,\, b) \mapsto \big(a_1, \dots, a_n,\, b\oplus f(a_1, \dots, a_n)\big),
\end{equation}
where $\oplus$ denotes addition modulo $2$.
The input bits are repeated in the output to make the whole process reversible, and the value of the original function is encoded in the last bit of the output.

Informally, reversible circuits are networks of \emph{wires} that carry bit values to \emph{gates} that perform elementary operations on the bits.
The wires are specified by a set~$W$ of \emph{wire labels}, while gates are reversible functions from $\bset{W}$ to $\bset{W}$.

As in the Boolean setting considered before, the only gates we allow in our circuits will be the $\NOT$ gate and the reversible analogues of threshold and equality gates
(as in equation~\eqref{eq:head} above).

The wire in which a $\NOT$ gate acts is specified in the subscript:
for $h\in W$, we denote by~$\NOT_h$ the map from $\bset{W}$ to $\bset{W}$ that flips the bit at index~$h$ and leaves all others unchanged.
In a similar way, we will use the following notation for threshold-type gates:

\begin{definition}[Reversible threshold-type gates]
    Given a set $S \subset W$, an element $h\in W\setminus S$ and an integer $t\geq 0$, we denote by $\thr^{S, h}_{\geq t}: \bset{W} \to \bset{W}$ the map that takes $(a_w)_{w\in W}$ to $(b_w)_{w\in W}$ where
    $$b_w=
    \begin{cases}
    a_w & \text{if }\, w\neq h, \\
    a_h \oplus \one\Big\{ \sum_{s\in S} a_s \geq t \Big\} & \text{if }\, w=h.
    \end{cases}$$
    Likewise, we denote by $\Eq^{S, h}_{t}$ the map that takes $(a_w)_{w\in W}$ to $(b_w)_{w\in W}$ where
    $$b_w=
    \begin{cases}
    a_w & \text{if }\, w\neq h, \\
    a_h \oplus \one\Big\{ \sum_{s\in S} a_s = t \Big\} & \text{if }\, w=h.
    \end{cases}$$
\end{definition}

\paragraph{Layered reversible circuits.}
In order to consider circuit symmetries, we need to impose some extra structure on the reversible circuits. 
This is done by grouping the gates into \emph{layers} $\Lcal_1, \dots, \Lcal_D$ of pairwise commuting operations.
The layers of a circuit are linearly ordered, so the gates from a prior layer are implemented before those at a later layer, and the wires never feed back to a prior location in the circuit.
Since gates within a single layer~$\Lcal_i$ commute, their application has no intrinsic order.

In summary, the reversible circuits we consider here are formally defined as follows:

\begin{definition}[Layered reversible circuits]
    A layered reversible circuit is specified by a set of wire labels~$W$ and a sequence of layers $(\Lcal_1, \dots, \Lcal_D)$.
    Each layer $\Lcal_i$ is a set of pairwise commuting operations from $\bset{W}$ to $\bset{W}$, where each of these operations is either a $\NOT$ gate (of the form $\NOT_h$ for some $h\in W$) or a reversible threshold-type gate (of the form $\thr^{S,h}_{\geq t}$ or $\Eq^{S,h}_t$ for some $S\subset W$, $h\in W\setminus S$ and $t\geq 0$).
\end{definition}

\begin{remark}
    One could also define reversible circuits without specifying its layers, but only its wire labels and an ordering of its gates.
    In such a case, two circuits whose collection of gates is the same but their ordering differ by permutations within commuting layers are seen to be isomorphic.
    We explicitly give a choice of layers when defining reversible circuits to simplify the presentation when considering circuit symmetries
    (where isomorphic circuits are treated as equivalent).
\end{remark}

\paragraph{Symmetry and automorphisms.}
Given a permutation $\pi\in \Scal_W$, we define its action on sets and gates in the following way:
\begin{itemize}
    \item If $S\subseteq W$ is a set, then $\pi(S) = \{\pi(s): s\in S\}$.
    \item If $h\in W$, then $\pi(\NOT_h) = \NOT_{\pi(h)}$.
    \item If $S\subseteq W$ and $h\in W\setminus S$, then $\pi(\thr^{S, h}_{\geq t}) = \thr^{\pi(S), \pi(h)}_{\geq t}$ and $\pi(\Eq^{S, h}_t) = \Eq^{\pi(S), \pi(h)}_t$.
\end{itemize}
An \emph{automorphism} of a reversible circuit is then a permutation of its wire labels which leaves all of its layers unchanged.
More precisely:

\begin{definition}[Reversible circuit automorphism]
    Let $C = (W, \Lcal_1, \dots, \Lcal_D)$ be a layered reversible circuit.
    An automorphism of~$C$ is a permutation $\pi\in \Scal_W$ such that the following holds:
    for every layer~$\Lcal_i$ and every gate $g\in \Lcal_i$, the function~$\pi(g)$ is also a gate belonging to~$\Lcal_i$.
\end{definition}

\paragraph{Symmetric circuit definition.}
We specify a subset $X\subseteq W$ to label the ``input bits'' for our circuits.
Those are the bits that encode the specific input of the computation, and thus correspond to the structure on which our circuits act (as in the previous section).
All other wires will contain the ``workspace bits'', which are initialised~$0$ and are only present to help in the computation.

Suppose that the structure~$X$ on which a reversible circuit~$C$ acts has an automorphism group $\Gamma\leq \Scal_X$.
We say that~$C$ is \emph{$\Gamma$-symmetric} if every symmetry in~$\Gamma$ can be extended to an automorphism of~$C$.
More precisely:

\begin{definition}[Symmetric reversible circuits]
    Let~$C$ be a layered reversible circuit with wire labels~$W$ and input labels~$X\subseteq W$, and let $\Gamma\leq \Scal_X$ be a group.
    We say that~$C$ is $\Gamma$-symmetric if, for every $\sigma\in \Gamma$, there is an automorphism~$\pi$ of~$C$ such that $\pi(x) = \sigma(x)$ for all $x\in X$.
\end{definition}

As a concrete example of symmetric reversible circuits, we revisit our earlier example (Example~\ref{ex:triangle_bool}) concerning triangle-free graphs:

\begin{example}[Triangle-freeness] \label{ex:triangle_rev}
    We wish to construct a symmetric reversible circuit on $n$-vertex graphs that decides whether some given graph contains a triangle.
    As the desired circuit acts on $n$-vertex graphs, we take~$X$ to be the set of all potential edges in the graph:
    $$X = \big\{ \{i, j\}:\, 1\leq i< j\leq n\big\}.$$
    The symmetries of the problem correspond to pair permutations induced by vertex permutations:
    $$\Gamma = \big\{\{i, j\} \mapsto \{\tau(i), \tau(j)\}:\, \tau \in \Scal_n\big\}.$$
    
    The threshold circuit constructed in Example~\ref{ex:triangle_bool} can be easily made reversible by adding one extra workspace wire for each non-input gate used.
    More precisely, we consider the circuit with wire labels
    $$W := X \cup \big\{ \{i, j, k\}:\, 1\leq i< j < k\leq n\big\} \cup \{out\}$$
    which is composed of two layers, $\Lcal_1$ and $\Lcal_2$.
    The first layer~$\Lcal_1$ is composed of all $\binom{n}{3}$ gates $\thr^{S_{i,j,k}, \{i, j, k\}}_{\geq 3}$ where $1\leq i< j< k\leq n$ and $S_{i,j,k} = \big\{\{i, j\}, \{i, k\}, \{j, k\}\big\}$
    (note that these gates pairwise commute).
    The second layer~$\Lcal_2$ contains a single gate $\thr^{S_{\Delta},\, out}_{\geq 1}$ where
    $$S_{\Delta}= \big\{\{i, j, k\}:\, 1\leq i< j< k\leq n \big\}.$$
    
    At the start, each input wire labelled by a pair $\{i, j\} \in X$ is initialised~$1$ or~$0$ depending on whether or not there is an edge between vertices~$i$ and~$j$;
    all workspace wires are initialised~$0$.
    After applying layer~$\Lcal_1$, the bit corresponding to a wire labelled $\{i, j, k\}$ will be~1 if and only if the vertices labelled by~$i$, $j$ and~$k$ are pairwise connected (that is, if they form a triangle).
    After layer~$\Lcal_2$, the bit corresponding to wire $out$ will be~$1$ if and only if there is a triangle in the graph, which is the property we wished to compute.
    This circuit is $\Gamma$-symmetric:
    every symmetry $\sigma\in \Gamma$ corresponds to the action of a permutation~$\tau \in \Scal_n$ on pairs $\{i,j\} \in X$, and can be extended to a permutation $\sigma'$ of all wire labels by defining
    $$\sigma'(\{i,j,k\}) = \{\tau(i), \tau(j), \tau(k)\} \quad \text{and} \quad \sigma'(out) = out.$$
    Since the action of~$\sigma'$ only permutes the gates within layer~$\Lcal_1$ (and leaves~$\Lcal_2$ invariant), it corresponds to an automorphism of the circuit.
\end{example}

\paragraph{Equivalence of symmetric threshold and reversible circuits.}
Even though the definition of symmetric reversible circuits might seem very different from the earlier notion of symmetric threshold circuits, one can show that they are \emph{equivalent}
up to a constant factor on the number of gates used.
This is given by the following proposition, whose proof will be provided in Appendix~\ref{sec:appendix}.

\begin{prop}[Equivalence of definitions]
\label{prop:equiv}
    Let~$X$ be a finite set and $\Gamma\leq \Scal_X$ be a group.
    Then:
    \begin{itemize}
        \item[$(i)$] Any $\Gamma$-symmetric threshold circuit with~$s$ gates can be converted into an equivalent $\Gamma$-symmetric reversible circuit which uses at most~$2s$ gates and~$s$ workspace bits.
        \item[$(ii)$] Any $\Gamma$-symmetric reversible circuit with~$s$ gates can be converted into an equivalent $\Gamma$-symmetric threshold circuit with at most~$4s$ (non-input) gates.
    \end{itemize}
\end{prop}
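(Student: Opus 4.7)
The plan for part $(i)$ is a canonical reversibilisation. I would introduce one workspace wire $w_g$ for every non-input gate $g$ of the threshold circuit $C$, and replace $g$ by a reversible analogue that writes its output onto $w_g$: a NOT gate with input $a$ becomes $\Eq^{\{a\}, w_g}_0$; a gate $\thr^n_{\geq t}$ whose input wires form a set $S_g$ becomes $\thr^{S_g, w_g}_{\geq t}$; equality gates are handled analogously. Since each $w_g$ starts at $0$, the reversible gate correctly records the Boolean value of the original gate. To group the gates into layers of pairwise commuting operations, I would assign to every gate of $C$ its DAG depth (length of the longest path from an input gate) and place all reversible gates sharing a depth in the same layer; such gates write to distinct target wires and read only from wires at strictly smaller depths, so their reads and writes do not conflict. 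For $\Gamma$-symmetry, any automorphism $\pi$ of $C$ preserves the DAG structure and hence the depth of every gate. The induced permutation $\pi'$ on the reversible circuit's wires, defined by $\pi'(x) = \sigma(x)$ for $x \in X$ and $\pi'(w_g) = w_{\pi(g)}$ on workspace wires, carries each layer to itself while preserving gate types and arities, hence is a circuit automorphism. The overall gate count is at most $2s$ with $s$ workspace wires.

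For part $(ii)$, the plan is to unroll the reversible circuit in time. For each wire $w \in W$ and each $0 \leq i \leq D$, I would introduce a node $v_{w,i}$ in the threshold circuit representing the bit on wire $w$ after layer $\Lcal_i$ has been applied. Initial values come from input gates $v_{x,0}$ for $x \in X$ and from constant-$0$ nodes (realisable by a fan-in-zero gate $\thr^{0}_{\geq 1}$) for workspace wires. For each reversible gate $\thr^{S, h}_{\geq t}$ appearing in layer $\Lcal_i$, I would build $v_{h, i}$ as $v_{h, i-1} \oplus \thr_{\geq t}\bigl(\{v_{s, i-1} : s \in S\}\bigr)$, using one threshold gate for the inner value and one equality gate $\Eq^2_1$ for the XOR; NOT and equality gates are handled analogously, and wires untouched in layer $\Lcal_i$ simply reuse their previous node. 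A conservative count yields at most $4$ new threshold gates per reversible gate, hence at most $4s$ non-input gates overall, with the designated output being $v_{w_{\mathrm{out}}, D}$. For $\Gamma$-symmetry, any automorphism $\pi$ of the reversible circuit permutes wires while preserving each layer, so the induced map $v_{w,i} \mapsto v_{\pi(w), i}$ (extended canonically to the auxiliary nodes introduced per reversible gate) is an automorphism of the resulting threshold circuit that extends the action of $\sigma$ on the input nodes.

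The main obstacle in both directions is not the Boolean accounting, which is essentially bookkeeping, but ensuring that the construction is sufficiently \emph{canonical} for every source-circuit automorphism to lift faithfully. Every design choice—the assignment of workspace wires, the grouping into commuting layers, the decomposition of a reversible update into threshold gadgets—must depend only on invariants that the source automorphisms already preserve, such as DAG depth, gate type, or layer index. With these canonical choices made uniformly, the verification of $\Gamma$-symmetry reduces to checking that the construction rules commute with the source automorphisms, which is then mechanical.
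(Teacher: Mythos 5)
Part $(i)$ of your proposal is essentially the paper's construction --- layer by DAG depth, introduce one fresh head wire per non-input gate, replace each Boolean gate with its reversible threshold/equality analogue, and lift an automorphism $\pi$ of the threshold circuit to $\pi'$ on the reversible wire set by acting on $X$-labels via $\sigma$ and on workspace wires via $w_g \mapsto w_{\pi(g)}$. Your streamlined $\NOT$ gadget ($\Eq^{\{a\}, w_g}_0$ in one gate, instead of the paper's $\NOT_v$ followed by $\Eq^{\{u\},v}_1$) is fine, and omitting the input-copying layer $\Lcal_0$ is also fine; that layer is a convenience in the paper, not a necessity.

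Part $(ii)$ has a genuine gap. A single layer $\Lcal_i$ of a layered reversible circuit may legitimately contain \emph{several} threshold-type gates sharing the \emph{same} head $h$: two gates $\thr^{S_1,h}_{\geq t_1}$ and $\thr^{S_2,h}_{\geq t_2}$ with $h\notin S_1\cup S_2$ commute, and by equation~\eqref{eq:commuting_thr} the aggregate effect of all gates in $\Lcal_i$ with head $h$ is to XOR onto $a_h$ the indicators $\one\{|a_{S}|\geq t\}$ of \emph{all} of them. Your rule defines $v_{h,i}$ ``for each reversible gate $\thr^{S,h}_{\geq t}$ appearing in layer $\Lcal_i$'' as $v_{h,i-1}\oplus \thr_{\geq t}(\cdots)$; when two or more such gates share a head, this produces conflicting definitions of the same node $v_{h,i}$, and the $\Eq^2_1$ gadget (a binary XOR) simply cannot absorb the extra contributions. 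Nor can you dodge the problem by splitting the offending gates into separate layers: if an automorphism of the reversible circuit maps one gate with head $h$ to another gate with the same head in the same layer, any symmetry-respecting split is impossible, because automorphisms must map each layer to itself. This is exactly why the paper's proof first establishes the parity-via-threshold gadget (a depth-two circuit built from $\Eq^m_{2k+1}$ gates and one $\thr^{\cdot}_{\geq 1}$) and then, for each head $h$ written in $\Lcal_i$, XORs the previous value $v_{h,\ell_i(h)}$ with the indicators of \emph{all} gates of $\Lcal_i$ having head $h$; this unordered multi-ary XOR is a fully symmetric function of its inputs, so it lifts automorphisms cleanly. Replacing your $\Eq^2_1$ with that gadget (and re-doing the gate count) repairs the construction, and lands you on the paper's proof.

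A secondary issue: your constant-$0$ nodes for the workspace wires, realised as fan-in-zero gates $\thr^{0}_{\geq 1}$, are not legal in this model. By definition, a vertex of in-degree $0$ in a threshold circuit over $X$ is an \emph{input} gate and must be labelled by an element of $X$; a non-input gate must have positive fan-in. The paper realises constant $0$ via a gate that reads \emph{all} $|X|$ input gates with an unsatisfiable parameter (an $\Eq^{|X|}_{|X|+1}$), and reading all inputs symmetrically is precisely what keeps this step canonical under $\Gamma$ --- the kind of detail your own closing paragraph correctly identifies as the crux, but which the fan-in-zero shortcut bypasses without justification.
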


\begin{remark}
    The number of workspace bits given in this proposition is generally sub-optimal, and may be reduced by cleverly reusing the workspace bits as shown by Bennett \cite{Bennett1973, Bennett1989}.
    We will not discuss such improvements in this work.
\end{remark}

Thus, all the results concerning symmetric threshold circuits given in the last subsection translate without changes to our notion of layered reversible circuits.

\paragraph{Power and robustness of the gate set.}
Finally, we show that our notion of symmetric reversible circuits is sufficiently powerful to efficiently compute any reversible function possessing a rich enough symmetry group, which we shall call \emph{partition symmetry}.
This notion of symmetry is motivated by the fact that reversible threshold functions $\thr^{S, h}_{\geq t}$ are not fully symmetric (i.e., permutation-invariant), as the role of the ``head''~$h$ differs from the elements in the support~$S$;
however, within each of these two classes, their action is fully symmetric.
A generalisation of this type of symmetry gives rise to the following natural notion:

\begin{definition}[Partition-symmetric functions]
    A function $F: \prod_{i=1}^k \bset{n_i} \to \prod_{i=1}^k \bset{n_i}$ is said to be partition-symmetric if its action commutes with the natural action of the group $\prod_{i=1}^k \Scal_{n_i}$.
\end{definition}

In Appendix~\ref{sec:appendix} we prove the following result:

\begin{theorem}[Computing partition-symmetric functions] \label{thm:part_sym}
    Let $k\geq 1$ and $n_1, \dots, n_k$ be integers, and denote $n = n_1 +\dots + n_k$.
    Any partition-symmetric reversible function
    $F: \prod_{i=1}^k \bset{n_i} \to \prod_{i=1}^k \bset{n_i}$
    can be implemented by a partition-symmetric reversible circuit using $O\big(n\prod_{i=1}^k (n_i+1)\big)$ gates and $O(n)$ workspace bits, all of which are returned to zero at the end of the computation.
\end{theorem}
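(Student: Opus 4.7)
The plan is to exploit the orbit structure of $F$ under $\Gamma := \prod_{i=1}^k \Scal_{n_i}$ and implement $F$ profile-by-profile. The orbits of $\Gamma$ on $\prod_i \bset{n_i}$ are parametrised by weight profiles $w = (w_1, \dots, w_k) \in \prod_i \{0, \dots, n_i\}$, of which there are exactly $\prod_i(n_i + 1)$. A stabiliser argument first shows that any partition-symmetric reversible $F$ acts on each block by either the identity or the bitwise complement, with the choice depending only on the orbit: for $a = (\one_{S_1}, \dots, \one_{S_k})$ with $|S_i| = w_i$, the stabiliser $\prod_i (\Scal_{S_i} \times \Scal_{S_i^c})$ of $a$ must also fix $F(a)$, forcing each block of $F(a)$ to be constant on $S_i$ and on $S_i^c$; this leaves four a priori options per block ($\mathbf{0}$, $a_i$, $\bar{a}_i$, $\mathbf{1}$), of which the two constant ones coincide with $\mathrm{id}$/$\mathrm{comp}$ when $w_i \in \{0, n_i\}$ and are ruled out otherwise by injectivity. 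Hence $F$ is described by a choice $\phi_i(w) \in \{\mathrm{id}, \mathrm{comp}\}$ for each $(i, w)$, which induces a permutation $F_*$ of profiles with $F_*(w)_i \in \{w_i, n_i - w_i\}$.

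The circuit is built in three phases, all using ancillas whose labels (weight values, profile tuples, block indices) are fixed under $\Gamma$. Phase~I computes indicator ancillas $w_{i, w_i} = \one\{|a_i| = w_i\}$ for every block $i$ and every value $w_i$ via equality gates $\Eq_{w_i}^{[n_i], w_{i, w_i}}$, using $n + k$ gates and ancillas. Phase~II iterates over every profile $w$: it computes $r_w = \one\{\text{input profile is } w\}$ as a single threshold gate on the $k$ relevant $w_{i, w_i}$'s; for each block $i$ with $\phi_i(w) = \mathrm{comp}$, it applies the commuting layer $\{\thr_{\geq 1}^{\{r_w\}, j} : j \in [n_i]\}$ to conditionally bit-flip block $i$; then it uncomputes $r_w$. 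Since exactly one $r_w$ fires per input, this correctly realises $F$ in $O(n)$ gates per profile. Phase~III uncomputes the Phase~I indicators: first compute dual indicators $\tilde w_{i, w'_i} = \one\{|b_i| = w'_i\}$ from the current (output) blocks; then for each profile $w$, compute $R_w = \one\{\text{output profile is } F_*(w)\}$, XOR $R_w$ into each $w_{i, (w)_i}$, and uncompute $R_w$; finally uncompute the $\tilde w$'s using equality gates on the output blocks.

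The total gate count is $O(n) + O\big(n\prod_i(n_i+1)\big) + O\big(n + k\prod_i(n_i+1)\big) = O\big(n\prod_i(n_i+1)\big)$, and the workspace is $2(n+k) + O(1) = O(n)$ bits. Every ancilla is returned to zero: for an input with profile $w^*$, exactly one $R_w$ fires (namely $R_{w^*}$, since $F_*$ is a bijection), and its XOR precisely cancels the unique Phase~I indicator $w_{i, w^*_i}$ that had been set. Symmetry is immediate: each ancilla label is fixed by $\Gamma$, and each bit-flip layer in Phase~II forms a $\Gamma$-invariant family whose gates are permuted among themselves by any $\sigma \in \Gamma$, so $\sigma$ extends to an automorphism of the circuit. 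The main obstacle is Phase~III: once $F$ has been applied, the blocks have been modified and the Phase~I indicators no longer match the current block weights; the resolution is to use the known permutation $F_*$ to translate between input and output profiles, which pinpoints the unique Phase~I indicator to cancel and keeps the XOR bookkeeping clean.
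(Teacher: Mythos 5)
Your proof is correct and, at the top level, matches the paper's strategy: first characterise any partition-symmetric reversible $F$ as acting on each block by the identity or the bitwise complement, with the choice depending only on the input weight profile; then realise this via ancillas carrying weight-indicator bits and conditional complementation gates. Two differences are worth recording. For the characterisation lemma, you argue via the point stabiliser of a representative $a$: the subgroup $\prod_i(\Scal_{S_i}\times\Scal_{S_i^c})$ must fix $F(a)$, forcing each block of $F(a)$ to be constant on $S_i$ and on $S_i^c$, after which injectivity of $F$ excludes the two constant options when $0<w_i<n_i$. The paper instead builds explicit orbit representatives $v_\ell(b,w)$ and recovers $F$ from its values there by equivariance and reversibility. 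Both routes are short; yours is the more conceptual. Your circuit also inserts an intermediate per-profile ancilla $r_w$ rather than applying $\Eq_k^{W(s),(i,j)}$ directly to the input wires as the paper does, but this is a cosmetic indirection with the same $O\bigl(n\prod_i(n_i+1)\bigr)$ cost.

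The more substantive difference is the uncomputation, which the paper brushes off as ``the usual uncomputation technique.'' You are right that this is not routine here: after the conditional complements, the input blocks hold $F(a)$, so re-applying the Phase~I equality gates would XOR in $\one\{|F(a)_i|=t\}$ rather than cancel $\one\{|a_i|=t\}$, and a naive Bennett copy-out fails as well because recovering $a$ from $F(a)$ requires exactly the indicators one is trying to clean. Your resolution --- compute indicators of the \emph{output} weight profile, use the fact that $F_*$ is a bijection on profiles (a consequence of equivariance plus injectivity, since distinct orbits with equal $F_*$-image would have their disjoint preimages injected into a single orbit of the same size) to determine which Phase~I indicator must fire, XOR it away, and finally uncompute the output indicators against the unchanged output blocks --- is correct, respects the symmetry since every new ancilla label is $\Gamma$-fixed, and stays within the stated $O(n)$ ancilla and $O\bigl(n\prod_i(n_i+1)\bigr)$ gate budgets. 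This makes explicit a step that the paper leaves implicit.
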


Other than serving as a testament to the power of symmetric circuits, this result also shows that their definition of is significantly \emph{robust}:
if we were to add any number of partition-symmetric functions (with a small number of parts) to our allowed gate set, this would only modestly change the complexity of computing any given function.
In particular, the class of polynomial-time symmetrically computable functions is unaffected by such alterations in the gate set.

\medskip

We are finally ready to introduce the notion of symmetric \emph{quantum} circuits, which emerges quite naturally from the definition of symmetric reversible circuits.

\subsection{Symmetric quantum circuits}
\label{sec:QuCircuits}

A quantum circuit is composed of a set of \emph{qubit wires} (labelled by a finite set~$W$) and a collection of \emph{quantum gates}.
Each wire $w\in W$ is associated with a two-dimensional complex Hilbert space~$\Hcal_w$ equipped with a fixed orthonormal basis denoted $\{\ket{0}_w,\, \ket{1}_w\}$, called the \emph{computational basis}.
The Hilbert spaces associated with the qubit wires are pairwise orthogonal, and the \emph{state space} of the wire set~$W$ is the tensor product of the Hilbert spaces of all constituent wires:
$\Hcal_W = \bigotimes_{w\in W} \Hcal_w$.
The computational basis of the circuit is formed by tensoring the computational bases of each wire:
it is given by
$$\bigg\{ \bigotimes_{w\in W} \ket{a_w}_w:\: a_w \in \{0, 1\} \text{ for all } w\in W\bigg\}.$$
We omit the indices when they are clear from context.

Quantum gates are unitary maps acting on the state space~$\Hcal_W$.
As before, we restrict our attention to gates which are either of threshold-type or which act on a single qubit wire.
The definition of quantum threshold gates is given as follows:

\begin{definition}[Quantum threshold-type gates]
    Let~$W$ be a set of wire labels.
    Given a set $S \subset W$, an element $h\in W\setminus S$ and an integer $t\geq 0$, we denote by $\thr^{S, h}_{\geq t}: \Hcal_W \to \Hcal_W$ the linear map that takes a computational basis state $\bigotimes_{w\in W} \ket{a_w}_w$ to $\bigotimes_{w\in W} \ket{b_w}_w$ where
    $$b_w=
    \begin{cases}
    a_w & \text{if }\, w\neq h, \\
    a_h \oplus \one\Big\{ \sum_{s\in S} a_s \geq t \Big\} & \text{if }\, w=h.
    \end{cases}$$
    Likewise, $\Eq^{S, h}_{t}: \Hcal_W \to \Hcal_W$ denotes the linear map that takes $\bigotimes_{w\in W} \ket{a_w}_w$ to $\bigotimes_{w\in W} \ket{b_w}_w$ where
    $$b_w=
    \begin{cases}
    a_w & \text{if }\, w\neq h, \\
    a_h \oplus \one\Big\{ \sum_{s\in S} a_s = t \Big\} & \text{if }\, w=h.
    \end{cases}$$
\end{definition}

This definition is readily checked to give unitary maps.
Note that both the CNOT and the Toffoli gates are examples of quantum threshold gates.

\begin{remark}
One can easily show that the equality gates $\Eq^{S, h}_{t}$ are unnecessary for the definition of symmetric quantum circuits, as they can be efficiently (and symmetrically) implemented using threshold gates.
We include them in the gate set for convenience when describing explicit circuits.
\end{remark}

We also allow in our circuits any unitary that acts only on a single qubit.
This can be seen as a natural quantum extension of the gate set we allowed in reversible circuits, as the $\NOT$ gate is the only non-trivial single-bit reversible map.
Important examples of single-qubit unitaries which will be used later on are the Hadamard matrix~$H$, the Pauli matrices $X$, $Y$ and $Z$, and the parametrised \emph{phase gates}~$P_{\alpha}$ and \emph{rotation gates} $R_y(\theta)$ defined by
\begin{equation*}
    P_{\alpha} :=
    \begin{pmatrix}
        1 & 0\\
        0 & e^{i\alpha}
    \end{pmatrix},
    \quad R_y(\theta) := e^{i\theta Y/2} =
    \begin{pmatrix}
        \cos(\theta/2) & -\sin(\theta/2)\\
        \sin(\theta/2) & \cos(\theta/2)
    \end{pmatrix}.
\end{equation*}
We denote the set of all $2\times 2$ unitary matrices by $\U(2)$.
Given a unitary $U\in \U(2)$ and a label $w\in W$, we denote by~$U_w$ the unitary map on~$\Hcal_W$ that is the tensor product of~$U$ acting on wire~$w$ with the identity on every other wire.
(More precisely, this action is determined by the matrix~$U$ when we assume~$\ket{0}_w$ is the vector indexing the first row/column of~$U$ and~$\ket{1}_w$ indexes the second row/column.)

\paragraph{Notation for quantum circuits.}
Given a quantum circuit~$\Ccal$, we denote by $s(\Ccal)$ its total number of gates and by $a(\Ccal)$ its number of workspace qubits.\footnote{The~$s$ is for \emph{size} and the~$a$ is for \emph{ancilla}.}
For technical reasons, we will sometimes need to consider the number $h(\Ccal)$ of qubit wires acting as the ``head''~$h$ of some quantum threshold gate $\thr^{S,h}_{\geq t}$ or $\Eq^{S,h}_{t}$.

As in the classical reversible setting, to incorporate symmetries into our circuits we will need to group gates into \emph{layers} of pairwise-commuting operations.
The intuition is that adjacent commuting gates in a quantum circuit do not have a well-defined order of implementation, and thus exchanging their order should make no difference to the circuit.
We arrive at the following notion.

\begin{definition}[Layered quantum circuits]
    A layered quantum circuit is specified by a set of wires~$W$ and a sequence $(\Lcal_1, \dots, \Lcal_D)$ of layers, where:
    \begin{itemize}
        \item Each wire $w\in W$ is associated with a two-dimensional complex Hilbert space~$\Hcal_w$ equipped with a fixed orthonormal basis $\{\ket{0}_w,\, \ket{1}_w\}$;
        the state space of the circuit is $\Hcal_W = \bigotimes_{w\in W} \Hcal_w$.
        \item Each layer $\Lcal_i$ is a set of pairwise commuting quantum gates, where each of these gates is either a single-qubit gate (of the form $U_h$ for some $U\in \U(2)$ and $h\in W$) or a quantum threshold gate (of the form $\thr^{S,h}_{\geq t}$ or $\Eq^{S,h}_t$ for some $S\subset W$, $h\in W\setminus S$ and $t\geq 0$).
    \end{itemize}
\end{definition}

Note that a given quantum circuit may be expressed in many different ways as a layered quantum circuit;
likewise, any ordering of gates within the layers gives rise to a different (but equivalent) sequential implementation of the layered quantum circuit.
This freedom in reordering gates without changing the circuit is essential for the definition of symmetry in quantum circuits.

\paragraph{Symmetry and automorphisms.}
A permutation $\pi\in \Scal_W$ acts on subsets of~$W$ element-wise, and it acts on gates in the following way:
\begin{itemize}
    \item If $h\in W$ and $U\in \U(2)$, then $\pi(U_h) = U_{\pi(h)}$.
    \item If $S\subseteq W$ and $h\in W\setminus S$, then $\pi(\thr^{S, h}_{\geq t}) = \thr^{\pi(S), \pi(h)}_{\geq t}$ and $\pi(\Eq^{S, h}_t) = \Eq^{\pi(S), \pi(h)}_t$.
\end{itemize}
The permutation~$\pi$ also acts on a layer~$\Lcal$ element-wise, i.e., by acting individually on each of its gates.

\begin{remark}
    A common, but more abstract, way of defining symmetric actions on a quantum circuit is through the natural representation~$R(\cdot)$ of the symmetric group~$\Scal_W$ on the state space~$\Hcal_W$ of the circuit;
    this representation is determined by
    \begin{equation} \label{eq:representation}
        R(\pi) \bigotimes_{w\in W} \ket{a_w}_w = \bigotimes_{w\in W} \ket{a_{\pi^{-1}(w)}}_w \quad \text{for $a\in \bset{W}$.}
    \end{equation}
    Using this notation, the action of $\pi\in \Scal_W$ on a gate~$g$ as given above can be expressed as $\pi(g) = R(\pi) g R(\pi)^{-1}$, which is the commonly-used notion of permutation action on quantum gates.
\end{remark}

An \emph{automorphism} of a layered quantum circuit is a permutation of its wires which leaves the circuit unchanged, in the sense that it maps every layer to itself.

\begin{definition}[Quantum circuit automorphism] \label{def:automorphism}
    Let~$\Ccal$ be a layered quantum circuit with wire labels~$W$ and layers $\Lcal_1, \dots, \Lcal_D$.
    An automorphism of~$\Ccal$ is a permutation $\pi\in \Scal_W$ which induces a permutation of each layer~$\Lcal_i$:
    $\pi(g) \in \Lcal_i$ for all $g\in \Lcal_i$ and all $i\in [D]$.
\end{definition}

\paragraph{Symmetric circuit definition.}
Our quantum circuits will act on some specified structure~$X$, which is assumed to label a subset of the qubits
(as in Example~\ref{ex:triangle_rev} from the last section).
Those qubits labelled by~$X$ are called ``active qubits'', while the remaining are called ``workspace qubits''.
Given a group of symmetries $\Gamma$ acting on the structure~$X$, we say that a quantum circuit~$\Ccal$ is $\Gamma$-symmetric if every symmetry in~$\Gamma$ can be extended to an automorphism of~$\Ccal$.

\begin{definition}[Symmetric quantum circuits]
    Let~$\Ccal$ be a quantum circuit with wire labels~$W$ and an ordered sequence of gates $(g_1, \dots, g_m)$, let~$X\subseteq W$ be a subset representing the active qubits and let $\Gamma\leq \Scal_X$ be a group.
    We say that~$\Ccal$ is $\Gamma$-symmetric if its gates can be grouped into layers of pairwise-commuting operations so that the following holds:
    for every $\sigma\in \Gamma$, there is an automorphism~$\pi$ of~$\Ccal$ such that $\pi(x) = \sigma(x)$ for all $x\in X$.
\end{definition}

\paragraph{The input and output of a quantum circuit.}
The definition of symmetric quantum circuits given above makes no reference to the input or output of the computation.
The reason for this is that, depending on the task at hand, there are different ways in which to initialise the state at the beginning of the circuit and different ways in which to read out the output at the end of the computation.
It is more convenient to specify these details when considering each specific problem;
below we give some examples and general rules.

The first general rule is that \emph{workspace qubits are always initialised~$\ket{0}$}.
This is done because we do not wish workspace qubits to encode any extra information pertaining to the computation.

When dealing with classical problems (i.e., those whose input and output are both bit strings), the input qubits are those labelled by~$X$ and are initialised on a computational basis state encoding the specific input to the problem.
If this is a decision problem, then the output will be encoded into a designated workspace qubit and may be read by measuring it in the computational basis.

For state preparation circuits, all qubits are initialised~$\ket{0}$ and the desired state~$\ket{\psi}$ will be encoded in the active qubits (those labelled by~$X$), while all workspace qubits are required to end at state~$\ket{0}$.
It is then common in quantum computing to disregard the workspace qubits and say that the circuit prepares the state~$\ket{\psi}$.

We will also consider circuits that implement algorithmic subroutines, which might not have well-defined inputs or outputs but will instead depend on the larger algorithm they are constituents of.

\paragraph{The role of measurements.}
Our notion of symmetric quantum circuits does not allow intermediate measurements, and one might wonder whether this unnecessarily restricts their power.
Indeed, if one were to allow (say) two-qubit entangled measurements, then these measurements would be able to break the symmetries of the circuit and might lead to a richer class of final quantum states at the end of the computation.
However, such a breaking of symmetries is precisely what we wish to avoid in our computational framework.

A simple way to remedy this is to allow only single-qubit intermediate measurements, for instance by adding such measurement operators to our allowed gate set.
It is easy to show that this possibility does not give any extra power to symmetric circuits, due to a ``symmetric principle of deferred measurement''.
More precisely, we can simulate measurements in the computational basis by using an extra workspace qubit for each qubit measured, this extra qubit being entangled with the measured one by way of a CNOT gate.
If we do this for all measurements (always using new workspace qubits) and then erase these measurements (in particular replacing subsequent classical controlled gates by quantum controlled gates), the new circuit will have the same symmetries as the original one and the same output distribution, but now all measurements are performed at the end.
General single-qubit measurements can be performed similarly, using extra single-qubit unitary gates before and after the CNOT.

One may then allow for intermediate single-qubit measurements without significant alterations to the class of symmetric circuits obtained.
We have chosen not to include such measurements in our definition for simplicity of exposition.

\paragraph{The complexity class symBQP.}
One can also define computational classes of problems that can be solved efficiently by symmetric quantum circuits.
More precisely, let $(X_n)_{n}$ be a sequence of structures, each one having an intrinsic symmetry group~$\Gamma_n$.
A language $L\subseteq \bigcup_n \bset{X_n}$ is in complexity class symBQP/poly if the following holds:
For every~$n$, there exists a poly$(|X_n|)$-size $\Gamma_n$-symmetric quantum circuit~$\Ccal_n$ such that, if one initialises the active qubits with~$x_n \in \bset{X_n}$, then the output qubit when measured at the end of the computation will be~$\one[x\in L]$ with probability at least~$2/3$.
A language $L\subseteq \bigcup_n \bset{X_n}$ is in complexity class symBQP if it is in symBQP/poly and there exists a polynomial-time Turing machine which, on input~$1^{|X_n|}$, outputs the description of a $\Gamma_n$-symmetric quantum circuit~$\Ccal_n$ as above.

These classes comprise all efficiently solvable symmetric problems and represent rich and interesting objects of study.
They are quantum analogues of the corresponding classical symmetric complexity classes (which have tight connections to logic~\cite{Dawar2024}) and merit further study.

\section{Building blocks}
\label{sec:build_block}

To facilitate the construction of symmetric quantum circuits in later sections, we start by providing some useful \emph{building blocks}.
For the rest of this section, let~$X$ be a finite set corresponding to the structure on which our circuits act, let $\Gamma \leq \Scal_X$ be its group of symmetries and denote $n := |X|$.

The most important building block in the construction of symmetric circuits is the operation of \emph{circuit concatenation}, as it allows us to construct large symmetric circuits by gluing together several smaller ones.
While this operation can be made to preserve the symmetries of the original circuits, one must be careful in how to concatenate their workspace qubits;
this is illustrated as follows.

Let~$\Ccal_1$ and~$\Ccal_2$ be $\Gamma$-symmetric circuits over~$X$, and suppose that they use $a(\Ccal_1)$ and $a(\Ccal_2)$ workspace qubits respectively.
There are two natural ways in which we can implement their concatenation:
either by using the same set of $\max\{a(\Ccal_1), a(\Ccal_2)\}$ workspace qubits for both circuits, or by having $a(\Ccal_1)+a(\Ccal_2)$ workspace qubits in total and using disjoint subsets for each of~$\Ccal_1$ and~$\Ccal_2$.
These possibilities are shown in Figure~\ref{fig:concat} below.

\begin{figure}[ht]
    \centering
    \includegraphics[width=15cm]{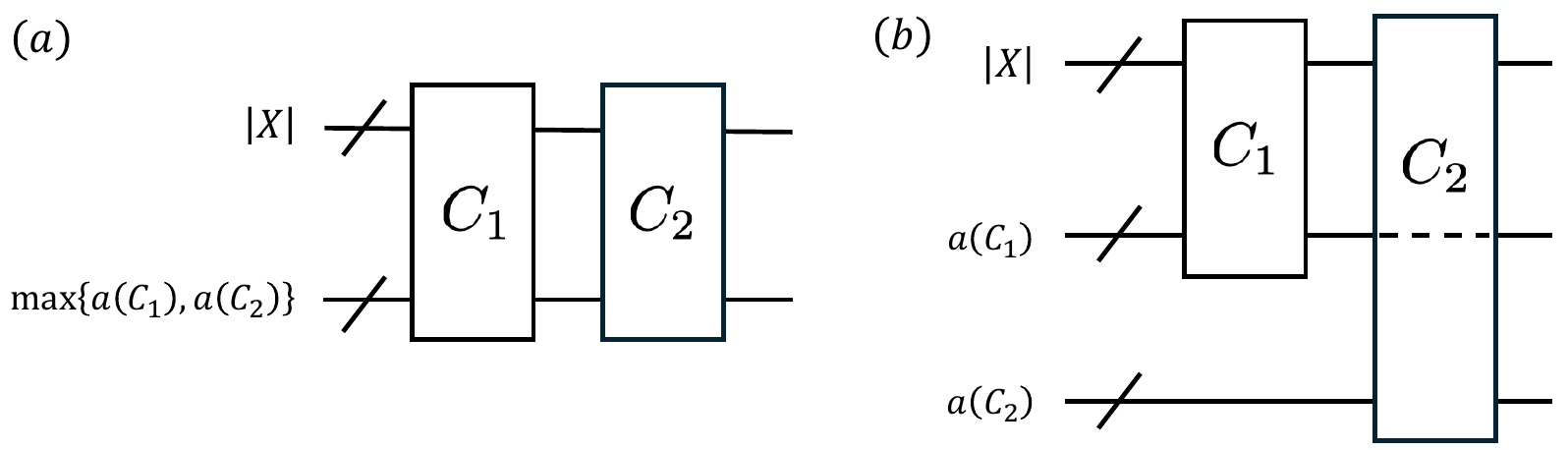}
    \caption{Two ways of concatenating circuits which use workspace qubits.}
    \label{fig:concat}
\end{figure}

One can see that the second form of implementation (Figure~\ref{fig:concat}(b)) will always be $\Gamma$-symmetric.
However, it is not necessarily the case that the first way of concatenating the circuits is $\Gamma$-symmetric, as we might need to permute the workspace qubits of each circuit in a different way when the active qubits are acted upon by a symmetry in~$\Gamma$.
It can only be guaranteed that the circuit in Figure~\ref{fig:concat}(a) will be $\Gamma$-symmetric if the workspace qubits of~$\Ccal_1$ and~$\Ccal_2$ have the same symmetries.

For ease of reference, we state the conclusion of the discussion above as the following lemma:

\begin{lemma}[Circuit concatenation]
\label{lem:concat}
    Suppose~$\Ccal_1$ and~$\Ccal_2$ are $\Gamma$-symmetric circuits.
    Then:
    \begin{itemize}
        \item[$(i)$] The concatenation of~$\Ccal_1$ and~$\Ccal_2$ where we use \emph{disjoint workspace qubits} for each circuit is also $\Gamma$-symmetric.
        \item[$(ii)$] If the workspace qubits of~$\Ccal_1$ and~$\Ccal_2$ have the same symmetries, then the concatenated circuit $\Ccal_1 \circ \Ccal_2$ using \emph{the same workspace qubits} is also $\Gamma$-symmetric.
    \end{itemize}
\end{lemma}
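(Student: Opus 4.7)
The plan is to verify directly the definition of $\Gamma$-symmetry for the concatenated circuit in each case, by taking a symmetry $\sigma\in \Gamma$ acting on the active qubits and explicitly building an automorphism of the full circuit that extends $\sigma$. In both cases, the layers of the concatenation $\Ccal_1 \circ \Ccal_2$ are just the disjoint union of the layer sequences of $\Ccal_1$ and $\Ccal_2$ (with the layers of $\Ccal_1$ preceding those of $\Ccal_2$), so an automorphism of $\Ccal_1 \circ \Ccal_2$ is precisely a permutation of all wires which restricts to an automorphism of $\Ccal_1$ on its wires and to an automorphism of $\Ccal_2$ on its wires.

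For part $(i)$, I would fix $\sigma \in \Gamma$ and use the $\Gamma$-symmetry of each $\Ccal_j$ to obtain an automorphism $\pi_j$ of $\Ccal_j$ extending $\sigma$ on the active qubits $X$. Since $\pi_1$ and $\pi_2$ both agree with $\sigma$ on $X$, and they act on the disjoint workspace sets $W_1 \setminus X$ and $W_2 \setminus X$ respectively, I can glue them into a single permutation $\pi$ of $W_1 \cup W_2$ defined by $\pi|_X = \sigma$, $\pi|_{W_1\setminus X} = \pi_1|_{W_1 \setminus X}$, and $\pi|_{W_2 \setminus X} = \pi_2|_{W_2 \setminus X}$. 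This $\pi$ is well-defined and each layer of $\Ccal_1 \circ \Ccal_2$ is preserved: layers coming from $\Ccal_j$ are permuted to themselves because $\pi$ agrees with the automorphism $\pi_j$ on all wires touched by those gates.

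For part $(ii)$, I interpret the condition ``the workspace qubits of $\Ccal_1$ and $\Ccal_2$ have the same symmetries'' as the existence, for every $\sigma\in \Gamma$, of a common permutation $\tau$ of the shared workspace set $W \setminus X$ such that the map $\pi$ defined by $\sigma$ on $X$ and $\tau$ on $W\setminus X$ is an automorphism of both $\Ccal_1$ and $\Ccal_2$. Given such a $\tau$, the same argument as before applies: $\pi$ permutes every layer of $\Ccal_1$ to itself (by automorphism of $\Ccal_1$) and likewise every layer of $\Ccal_2$, so it is an automorphism of $\Ccal_1 \circ \Ccal_2$ extending $\sigma$.

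There is no real obstacle here; the content is almost entirely definitional, and the only subtlety is being explicit about what ``having the same symmetries'' on workspace qubits means in part $(ii)$, namely that a \emph{single} extension $\tau$ can be chosen compatibly with both circuits. Once this is clarified, both statements follow immediately from the fact that the automorphism group of a concatenated layered circuit is the intersection, under the natural embedding, of the automorphism groups of the constituent circuits.
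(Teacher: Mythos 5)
Your proposal is correct and fills in, carefully and in the right way, what the paper leaves at the level of informal discussion — the paper does not supply a formal proof of this lemma, stating it as a ``conclusion of the discussion above''. Your gluing argument for part $(i)$ and your clarification of what ``same symmetries'' must mean in part $(ii)$ (a single common extension $\tau$ of each $\sigma$ that is simultaneously an automorphism of both circuits) are precisely the intended content.
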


Other than circuit concatenation, a fundamental operation that is easily seen to preserve symmetry is that of \emph{circuit inversion}:

\begin{lemma}[Circuit inversion]
\label{lem:inversion}
    Given a $\Gamma$-symmetric circuit~$\Ccal$ which implements a unitary~$U$, we can construct a $\Gamma$-symmetric circuit~$\Ccal^{-1}$ which implements~$U^{-1}$ and has the same complexity as~$\Ccal$.
\end{lemma}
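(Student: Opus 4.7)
The plan is to construct $\Ccal^{-1}$ directly from $\Ccal$ by reversing the order of layers and replacing each gate by its inverse. More precisely, if $\Ccal$ has wire set $W$ and layers $(\Lcal_1, \dots, \Lcal_D)$, then I define $\Ccal^{-1}$ to have the same wire set $W$, the same input labels $X$, and layers $(\Lcal_D', \dots, \Lcal_1')$ where $\Lcal_i' = \{g^{-1} : g \in \Lcal_i\}$. The rest of the proof is to verify that this is a valid layered quantum circuit in our model, that it implements $U^{-1}$, and that it inherits $\Gamma$-symmetry from $\Ccal$.

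The first step is to check that each $\Lcal_i'$ lies within our allowed gate set. Single-qubit gates are closed under inversion, since $U_h^{-1} = (U^{-1})_h$ with $U^{-1} \in \U(2)$. For threshold-type gates, observe that $\thr^{S,h}_{\geq t}$ and $\Eq^{S,h}_t$ are involutions: applying either twice performs two XORs on the head wire with the same indicator, returning to the original state. Hence their inverses coincide with themselves and trivially remain in the gate set. The second step is to verify that each $\Lcal_i'$ consists of pairwise-commuting operations, which is immediate since $g_1 g_2 = g_2 g_1$ implies $g_1^{-1} g_2^{-1} = g_2^{-1} g_1^{-1}$.

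For the symmetry claim, fix any $\sigma \in \Gamma$ and let $\pi \in \Scal_W$ be the automorphism of $\Ccal$ extending $\sigma$, so that $\pi(g) \in \Lcal_i$ for every $g \in \Lcal_i$ and every $i$. Using the representation-theoretic formulation, the action of $\pi$ on a gate is conjugation by $R(\pi)$, and conjugation commutes with taking inverses: $\pi(g^{-1}) = R(\pi) g^{-1} R(\pi)^{-1} = (R(\pi) g R(\pi)^{-1})^{-1} = \pi(g)^{-1}$. Therefore, for every gate $g^{-1} \in \Lcal_i'$ we have $\pi(g^{-1}) = \pi(g)^{-1} \in \Lcal_i'$ (since $\pi(g) \in \Lcal_i$), showing that the same permutation $\pi$ is an automorphism of $\Ccal^{-1}$ extending $\sigma$. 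The correctness of the implementation is the standard identity $(g_1 g_2 \cdots g_m)^{-1} = g_m^{-1} \cdots g_2^{-1} g_1^{-1}$, which gives exactly $U^{-1}$, and the complexity is preserved since the number of gates, wires, and layers is unchanged.

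There is essentially no significant obstacle here: the argument is algebraic and relies only on two elementary facts, namely that our gate set is closed under inversion (with threshold-type gates being involutions) and that the natural permutation action on gates, being implemented by conjugation, commutes with inversion. The only point requiring a moment's care is confirming involutivity of the threshold and equality gates, which follows directly from their definition as a single XOR on the head wire.
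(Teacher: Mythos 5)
Your proposal is correct and follows the same approach as the paper: reverse the order of layers and replace each gate by its inverse, noting that threshold-type gates are involutions so only single-qubit gates change. The paper's proof is terser and asserts the symmetry claim as "clear," whereas you verify it explicitly via the observation that the permutation action (conjugation by $R(\pi)$) commutes with inversion; this is a useful elaboration but not a different route.
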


\begin{proof}
    Since threshold gates are their own inverse, the circuit~$\Ccal^{-1}$ formed by replacing every single-qubit gate~$A$ of~$\Ccal$ by its inverse~$A^{\dagger}$ and then reversing the order of layers (with respect to $\Ccal$) will implement~$U^{-1}$.
    It is clear that~$\Ccal^{-1}$ will have the same symmetries and the same complexity as~$\Ccal$.
\end{proof}

The next lemma shows that it is also possible to perform controlled operations while preserving the symmetries of the original circuit, incurring in only a linear increase in circuit complexity.
Recall that~$h(\Ccal)$ denotes the number of wires which serve as the head of some threshold gate in circuit~$\Ccal$.

\begin{lemma}[Controlled operations] \label{lem:control}
    Given a $\Gamma$-symmetric circuit~$\Ccal$, we can construct an $(\{id_1\} \times\Gamma)$-symmetric circuit~c-$\Ccal$ which implements control-$\Ccal$ using $O(s(\Ccal))$ gates and $a(\Ccal)+h(\Ccal)$ workspace qubits.
\end{lemma}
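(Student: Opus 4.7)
The plan is to build c-$\Ccal$ by replacing each gate of $\Ccal$ with a constant-size gadget that implements its controlled version while preserving symmetry. I would first introduce a new active qubit $c$ for the control and, for each wire $h$ of $\Ccal$ that serves as the head of some threshold-type gate, introduce one fresh workspace qubit $h'$ initialised to $\ket{0}$; this accounts for the $h(\Ccal)$ extra workspace qubits.

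For each single-qubit gate $U_w$ of $\Ccal$, I would use the standard decomposition of controlled-$U$ into a constant number of single-qubit gates on $w$, two CNOTs (i.e.\ $\thr^{\{c\},w}_{\geq 1}$ gates), and at most one phase gate on $c$. For each threshold gate $\thr^{S,h}_{\geq t}$ (and analogously for $\Eq^{S,h}_t$), I would use the three-gate gadget
\begin{equation*}
    \thr^{S,h'}_{\geq t}, \qquad \thr^{\{c,h'\},h}_{\geq 2}, \qquad \thr^{S,h'}_{\geq t},
\end{equation*}
which computes the threshold into $h'$, flips $h$ via a Toffoli active only when $c=1$, and then uncomputes $h'$ back to $\ket{0}$. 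The net effect is to flip $h$ precisely when $c=1$ and the original threshold fires, while leaving $c$, $S$ and $h'$ unchanged; summing over all gates gives $O(s(\Ccal))$ gates in total. The layer structure is preserved by grouping each layer $\Lcal_i$ of $\Ccal$ into $O(1)$ layers of c-$\Ccal$ (compute, controlled operations, uncompute), with parallelism across $\Lcal_i$ being valid because commuting threshold gates must have distinct heads (two threshold gates sharing a head do not commute) and hence distinct ancillas.

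For symmetry, given $\sigma\in\Gamma$ with a corresponding automorphism $\pi_\sigma$ of $\Ccal$, I would extend it to an automorphism $\tilde\pi_\sigma$ of c-$\Ccal$ by fixing $c$ and sending each ancilla $h'$ to $\pi_\sigma(h)'$; this is well-defined since $\pi_\sigma$ permutes head wires among themselves. Each new layer is then mapped to itself, as for instance
\[\tilde\pi_\sigma\bigl(\thr^{\{c,h'\},h}_{\geq 2}\bigr) = \thr^{\{c,\pi_\sigma(h)'\},\pi_\sigma(h)}_{\geq 2},\]
which is the Toffoli in the gadget corresponding to $\pi_\sigma$'s image of the original gate, and similarly for the compute/uncompute layers and for the controlled-$U$ decomposition layers. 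The main obstacle is to ensure that the controlled-$U$ decomposition is chosen \emph{uniformly} across each $\Gamma$-orbit of single-qubit gates — which is possible because all gates in one orbit share the same underlying unitary $U$ (only the wire varies) — and to verify that the refined layering remains consistent under $\tilde\pi_\sigma$ throughout.
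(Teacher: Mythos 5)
Your construction is essentially the paper's: one fresh ancilla $h'$ per head wire $h$, the compute/Toffoli/uncompute gadget $\thr^{S,h'}_{\geq t}$, $\thr^{\{c,h'\},h}_{\geq 2}$, $\thr^{S,h'}_{\geq t}$ for threshold gates, the $e^{i\alpha}AXBXC$-type decomposition for controlled single-qubit gates with the uniform-over-orbits observation, and the automorphism extension $\tilde\pi_\sigma$ fixing $c$ and sending $h'\mapsto\pi_\sigma(h)'$. The gate count, ancilla count $a(\Ccal)+h(\Ccal)$, and layering into $O(1)$ refined layers per original layer all match.

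There is, however, one genuinely false claim in your justification of the parallelism step: you assert that ``two threshold gates sharing a head do not commute.'' They do. With the paper's definition, the head $h$ is always excluded from the support $S$, so two gates $\thr^{S_1,h}_{\geq t_1}$ and $\thr^{S_2,h}_{\geq t_2}$ both flip $h$ based on bits that $h$ does not affect; their composition (in either order) flips $h$ by the XOR of the two indicators, hence they commute. This means a single layer of $\Ccal$ can contain several threshold gates with the same head, in which case your per-head ancillas are \emph{not} distinct across those gadgets. The construction nevertheless still works: the shared ancilla $h'$ accumulates the XOR of the indicators in the compute sub-layer, the (single, shared) Toffoli $\thr^{\{c,h'\},h}_{\geq 2}$ then flips $h$ by $\one\{c=1\}$ times that XOR, and the uncompute sub-layer resets $h'$ — exactly reproducing the controlled action of the XOR of the original gates, per the identity in equation~\eqref{eq:commuting_thr}. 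So the step you are trying to shortcut is real (the paper flags it as ``not quite trivial''), and the correct justification is the XOR-composition formula for commuting threshold gates, not distinctness of heads. You should replace that sentence with an appeal to the commuting-gate composition identity rather than the false distinctness claim.
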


\begin{proof}
    We modify the original circuit one layer at a time, transforming each layer of $\Ccal$ into (at most) eight layers of c-$\Ccal$.
    Other than the original qubits of~$\Ccal$ and the control qubit, the controlled circuit will have an extra workspace qubit~$q'$ for each qubit~$q$ in~$\Ccal$ which is the head of some threshold gate.

    Start at the first layer~$\Lcal_1$ of~$\Ccal$, and decompose it into a layer~$\Lcal_1'$ containing only single-qubit gates and a layer~$\Lcal_1''$ containing only threshold gates.
    We can assume that there is at most one single-qubit gate in~$\Lcal_1'$ acting on any given qubit
    (otherwise we can first multiply them together).
    
    Let $U\in \U(2)$ be the gate in~$\Lcal_1'$ acting on a given qubit.
    By \cite[Corollary~4.2]{NielsenChuang2010}, there exist unitaries $A$, $B$, $C\in \U(2)$ and an angle $\alpha$ such that
    $$ABC = I \quad \text{and} \quad U = e^{i\alpha} AXBXC.$$
    We may then implement control-$U$ using four single-qubit gates and two CNOTs distributed across five layers, as shown in Figure~\ref{fig:single_qubit}.
    Doing so for all gates in~$\Lcal_1'$ and combining the resulting gates into the same five layers, we obtain five $\Gamma$-symmetric layers which together implement c-$\Lcal_1'$.

    \begin{figure}[ht]
        \centering
        \includegraphics[width=12cm]{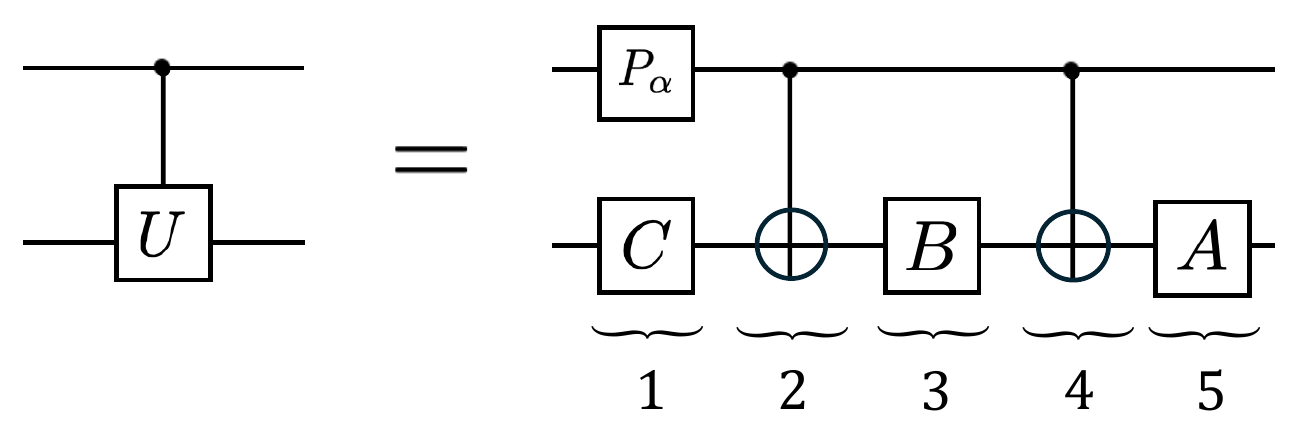}
        \caption{Implementing a controlled single-qubit gate in five layers.}
        \label{fig:single_qubit}
    \end{figure}

    Passing to~$\Lcal_1''$, each threshold gate is replaced by three threshold gates distributed across three layers of c-$\Ccal$ as shown in Figure~\ref{fig:threshold}.
    We use a different workspace qubit~$h'$ for each qubit~$h$ which is the head of a threshold gate, and group the resulting gates into the same three layers.
    Note that the resulting layers will each contain only pairwise-commuting gates, and that together they implement c-$\Lcal_1''$.
    (This is not quite trivial, but follows easily from equation~\eqref{eq:commuting_thr} in Appendix~\ref{sec:appendix}.)
    
    \begin{figure}[ht]
        \centering
        \includegraphics[width=14cm]{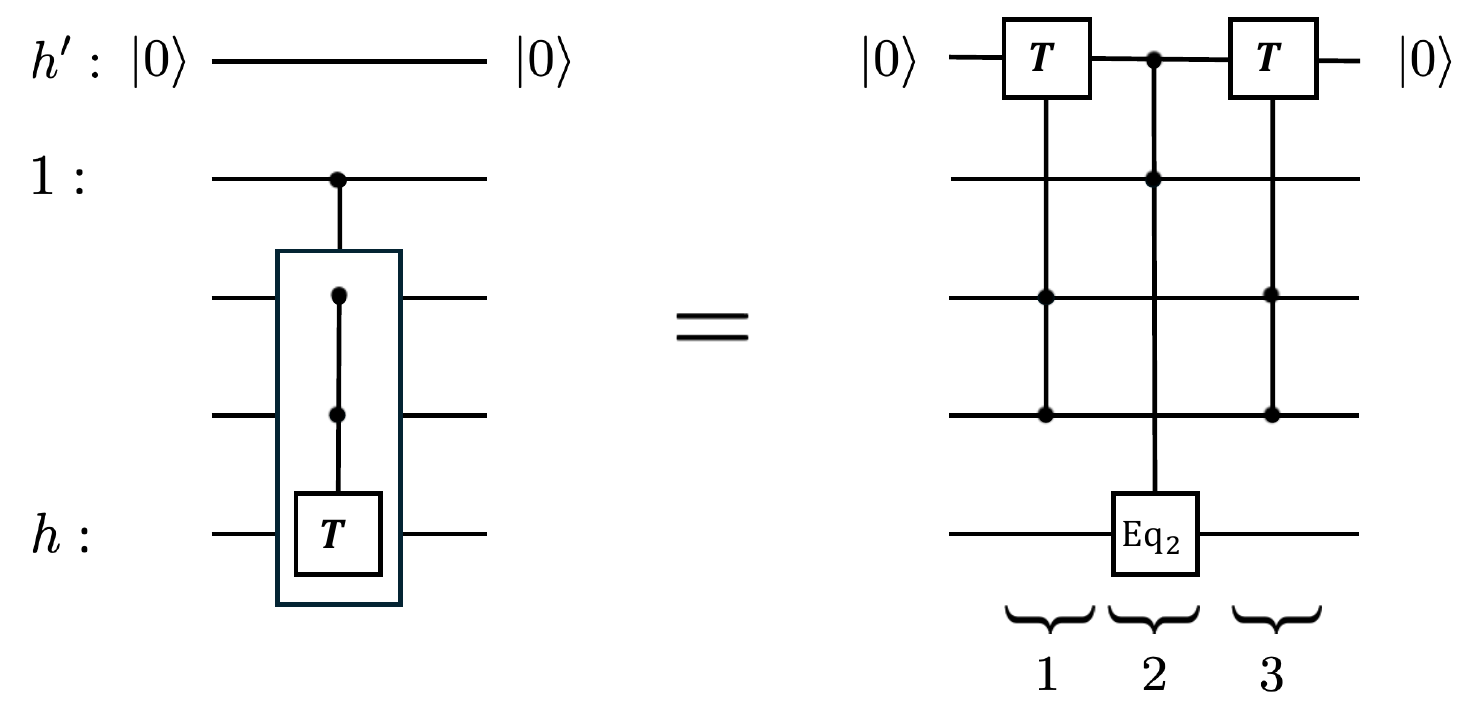}
        \caption{Implementing a controlled threshold gate in three layers.}
        \label{fig:threshold}
    \end{figure}

    Doing the same for each subsequent layer of~$\Ccal$ and then concatenating the resulting layers, by Lemma~\ref{lem:concat} we obtain a $\Gamma$-symmetric circuit implementing control-$\Ccal$.
    Note that we can reuse the same workspace qubits throughout the circuit c-$\Ccal$, as (by construction) those qubits respect the same symmetries within each layer.
\end{proof}

Another important building block for quantum circuits is the implementation of \emph{select operations} $\sum_j \ket{j}\!\bra{j} \otimes U_j$, where one of several unitaries $U_0, \dots, U_{k-1}$ is selected to be performed depending on the state of some chosen qubits.
The following lemma shows how to implement these select operations in a way that preserves the symmetries of the unitaries to be selected.

\begin{lemma}[Select operations]
\label{lem:select_op}
    Let $U_0, \dots, U_{k-1}$ be $\Gamma$-symmetric $n$-qubit unitaries, let $\theta_0, \dots, \theta_{k-1}$ be real numbers and denote $m := \lceil \log k \rceil$.
    Given $\Gamma$-symmetric circuits $\Ccal_0, \dots, \Ccal_{k-1}$ such that~$\Ccal_j$ implements~$U_j$ for $0\leq j < k$, we can construct an $\{id_m\} \times\Gamma$-symmetric circuit~$\Ccal_{sel}$ which implements
    \begin{equation*}
        \ket{j} \ket{\psi} \mapsto
        e^{i \theta_j} \ket{j} U_j\ket{\psi} \quad \text{for $0\leq j < k$}
    \end{equation*}
    using $O\big(k\log k + \sum_{j=0}^{k-1} s(\Ccal_j)\big)$ gates and $O\big(n + \sum_{j=0}^{k-1} a(\Ccal_j)\big)$ workspace qubits.
\end{lemma}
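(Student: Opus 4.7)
The plan is to process each $j \in \{0, \dots, k-1\}$ sequentially: compute a single flag qubit $f$ that equals $\ket{1}$ exactly when the $m$-qubit control register encodes $j$; use $f$ to apply controlled-$\Ccal_j$ together with the phase $e^{i\theta_j}$; then uncompute $f$. Because $\{id_m\}$ acts trivially on the control register, we may freely compute on individual control qubits (and on flag ancillas) without breaking the symmetry requirement.

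\paragraph{Implementation.} Let $C = \{c_0, \dots, c_{m-1}\}$ denote the control qubits and, for each $j$, let $S_j \subseteq C$ be the subset where $j$ has a zero bit. The flag is computed by applying $\NOT$ to each qubit in $S_j$, then the equality gate $\Eq^{C, f}_m$, then the same $\NOT$s again to restore the control register: this costs $O(m) = O(\log k)$ gates and uses one workspace qubit $f$. We next apply controlled-$\Ccal_j$ with $f$ as control via Lemma~\ref{lem:control}, followed by the single-qubit phase gate $P_{\theta_j}$ on $f$ (which multiplies only the $\ket{j}$ branch by $e^{i\theta_j}$), and finally uncompute $f$ by reversing the flag routine. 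Every step is $\{id_m\} \times \Gamma$-symmetric: the flag routine touches only control and workspace qubits (on which $\Gamma$ acts trivially), while controlled-$\Ccal_j$ is $(\{id_1\} \times \Gamma)$-symmetric by Lemma~\ref{lem:control} with $f$ playing the role of the single control. Concatenating over all $j$ via Lemma~\ref{lem:concat} gives the full $\Ccal_{sel}$, with gate count $k \cdot O(\log k)$ for flag (un)computations plus $\sum_j O(s(\Ccal_j))$ for the controlled circuits, matching the claimed $O(k\log k + \sum_j s(\Ccal_j))$ bound.

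\paragraph{The main obstacle.} The delicate point is the workspace accounting: Lemma~\ref{lem:concat}(ii) only permits reusing workspace across subcircuits when the ancillas carry matching $\Gamma$-actions, and naively summing all ancillas from each controlled-$\Ccal_j$ (which uses $a(\Ccal_j) + h(\Ccal_j)$ qubits) would blow past $O(n + \sum_j a(\Ccal_j))$. The resolution is to keep the internal workspace of each $\Ccal_j$ disjoint across $j$ (contributing $\sum_j a(\Ccal_j)$ qubits, each with its own intrinsic $\Gamma$-symmetry), while reusing across iterations both the single flag qubit and the $O(n)$ ``head-control'' ancillas introduced by Lemma~\ref{lem:control}. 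The latter qubits are indexed by the same labels as the head wires in $\Ccal_j$ — which are a subset of the active qubits and the shared data-workspace labels — so they carry a canonical $\Gamma$-action that is consistent across iterations, making their reuse safe under Lemma~\ref{lem:concat}(ii) and yielding the claimed $O(n + \sum_j a(\Ccal_j))$ total.
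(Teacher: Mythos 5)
Your proof is correct and follows essentially the same approach as the paper: express the select operation as a product of projector-controlled $U_j$ terms, implement each by computing a flag ancilla indicating whether the control register is $\ket{j}$, apply controlled-$\Ccal_j$ (via Lemma~\ref{lem:control}) and the phase on the flag, then uncompute, and concatenate via Lemma~\ref{lem:concat}. One small imprecision in your workspace accounting: since the internal workspaces of the $\Ccal_j$ are disjoint, the head-control ancillas of Lemma~\ref{lem:control} corresponding to \emph{workspace-wire} heads of $\Ccal_j$ cannot actually be shared across iterations (only the ones tied to active-qubit heads can); but these are bounded by $a(\Ccal_j)$ per $j$, so the total stays $O(n + \sum_j a(\Ccal_j))$ and your claimed bound goes through.
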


\begin{proof}
    The action of the circuit $\Ccal_{sel}$ we wish to implement can be written in the following way:
    \begin{equation} \label{eq:select}
        \prod_{j=0}^{k-1} \big( \ket{j}\!\bra{j} \otimes e^{i \theta_j} U_j + (I_{2^m} - \ket{j}\!\bra{j}) \otimes I_{2^n} \big).
    \end{equation}
    (The terms in this product commute, so the order of multiplication is irrelevant.)
    Each term in the product above can be implemented (with the help of workspace qubits) as shown in Figure~\ref{fig:select}.
    
    \begin{figure}[ht]
        \centering
        \includegraphics[width=14cm]{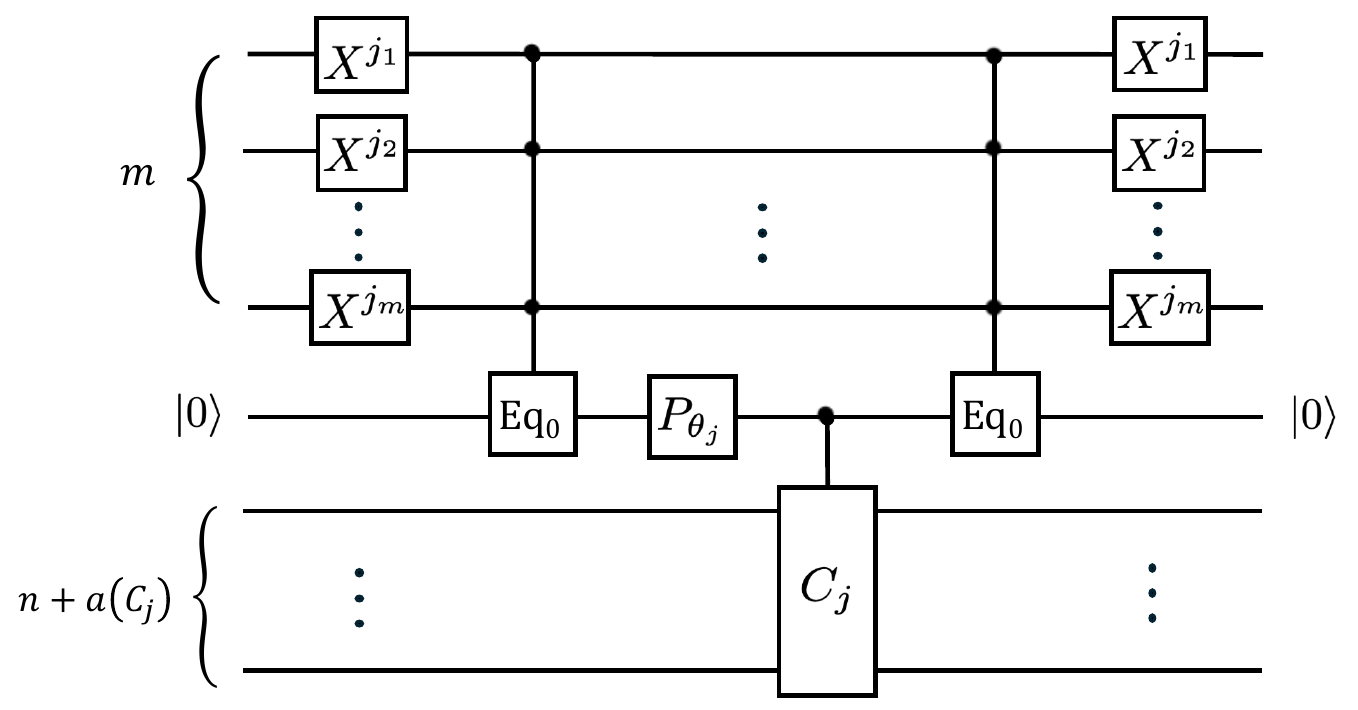}
        \caption{Implementing the circuit $\ket{j}\!\bra{j} \otimes e^{i \theta_j} U_j + (I_{2^m} - \ket{j}\!\bra{j}) \otimes I_{2^{n}}$ using one extra ancilla and a controlled $\Ccal_j$.}
        \label{fig:select}
    \end{figure}

    By Lemma~\ref{lem:control}, we can implement control-$\Ccal_j$ symmetrically using (at most) $n + 2a(\Ccal_j)$ ancillas and $O(s(\Ccal_j))$ gates.
    Note that the~$n$ ancillas coming from `doubling' the original active qubits can be reused in every controlled circuit c-$\Ccal_j$ to be implemented, as they respect the same symmetries
    (which ultimately come from~$\Gamma$).
    By Lemma~\ref{lem:concat} and the formula~\eqref{eq:select} above, we can implement the desired select operation by concatenating all such circuits for $0\leq j\leq k-1$.
    The result follows.
\end{proof}

\begin{remark}
    If the workspace qubits of each circuit~$\Ccal_j$ have the same symmetries, then we can decrease the number of workspace qubits used in the select circuit above by reusing the same qubits as in Figure~\ref{fig:concat}(a).
\end{remark}

Finally, we prove a technical result regarding how arbitrary permutation-symmetric unitaries decompose as a linear combination of simple symmetric unitaries.
This will allow us to prove a quantum analogue of Theorem~\ref{thm:part_sym}, stating that we can implement any partition-symmetric unitary while respecting its symmetries
(see Theorem~\ref{thm:qu_part_sym}).

\begin{lemma}[Decomposition of permutation-symmetric unitaries]
    Let $n\geq 1$ be an integer and denote $m = \binom{n+3}{3}$.
    There exist unitaries $V_1, \dots, V_m \in \U(2)$ and a constant $C(n)>0$ such that the following holds:
    any $\Scal_n$-symmetric $n$-qubit unitary $U$ can be written in the form
    $$U = \sum_{i = 1}^m \alpha_i V_i^{\otimes n}$$
    with $\sum_{i = 1}^m |\alpha_i| \leq C(n)$.
\end{lemma}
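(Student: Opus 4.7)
The plan is to identify $\Scal_n$-symmetric $n$-qubit operators with symmetric tensors in $M_2(\C)^{\otimes n}$, and then to use a polarization argument refined by the identity principle for totally real submanifolds. Let $M := M_2(\C)$, a $4$-dimensional complex vector space. Under the canonical identification of $n$-qubit operators with $M^{\otimes n}$, the conjugation action $U \mapsto R(\pi) U R(\pi)^{-1}$ corresponds to permutation of the $n$ tensor factors; hence the space of $\Scal_n$-symmetric operators is exactly $\Sym^n(M)$, which has dimension $\binom{n+3}{3} = m$.

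Next I would show that the set $\{V^{\otimes n} : V \in \U(2)\}$ spans $\Sym^n(M)$ over $\C$. The classical polarization identity already gives $\Sym^n(M) = \mathrm{span}_{\C}\{A^{\otimes n} : A \in M\}$, since any symmetric tensor is recovered from diagonal values via a signed sum of pure powers of partial sums. To upgrade this to unitaries, suppose for contradiction that $W := \mathrm{span}_{\C}\{V^{\otimes n} : V \in \U(2)\}$ is a proper subspace of $\Sym^n(M)$. Then there exists a nonzero linear functional $\phi$ vanishing on $W$; identifying $\Sym^n(M)^*$ with homogeneous polynomials of degree $n$ on $M \cong \C^4$, this yields a nonzero polynomial $p$ with $p(V) = 0$ for all $V \in \U(2)$. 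Now $\U(2)$ is a maximal totally real submanifold of $M$: its tangent space at any $U$ is a real $4$-dimensional subspace satisfying $T_U \U(2) \cap i \cdot T_U \U(2) = \{0\}$. Since $p$ is holomorphic, its differential $dp_U$ is complex-linear; the vanishing of $p$ on $\U(2)$ forces $dp_U$ to vanish on $T_U \U(2)$ and hence also on $i \cdot T_U \U(2)$, so $dp_U = 0$ identically on $\U(2)$. Applying the same argument inductively to every partial derivative of $p$ shows that all derivatives of $p$ vanish on $\U(2)$, and analyticity then yields $p \equiv 0$, a contradiction.

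Once spanning is established, I would select $V_1, \dots, V_m \in \U(2)$ for which $\{V_i^{\otimes n}\}$ is a basis of $\Sym^n(M)$. For any $\Scal_n$-symmetric unitary $U$, the expansion coefficients $\alpha_i$ are bounded linear functionals of $U$; since $\|U\| = 1$ in operator norm, the norms of the dual basis functionals provide a uniform bound $\sum_{i=1}^m |\alpha_i| \leq C(n)$ for some constant $C(n)$ depending only on $n$ (and on the chosen basis). The main obstacle in this plan is the upgrade from arbitrary matrices to unitaries in the spanning argument, which requires the identity principle for holomorphic functions on maximal totally real submanifolds; the polarization identity and the bound on $\sum_i |\alpha_i|$ are both routine consequences of finite-dimensional linear algebra.
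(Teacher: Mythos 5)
Your proof is correct, and it takes a genuinely different route on the key technical fact. Both you and the paper reduce the problem to showing that $\{V^{\otimes n} : V \in \U(2)\}$ spans the space of $\Scal_n$-symmetric operators $\Sym^n(M_2(\C))$, of dimension $\binom{n+3}{3}$; once that is in place, choosing a basis from this spanning set and bounding the resulting coordinates is routine finite-dimensional linear algebra. The paper simply cites Watrous (Theorem 7.11 and Proposition 7.10 in his book) for the spanning statement, whereas you give a self-contained argument: first polarization shows that $\{A^{\otimes n} : A \in M_2(\C)\}$ spans $\Sym^n(M_2(\C))$, and then you upgrade to unitaries by observing that a nonzero degree-$n$ homogeneous polynomial on $M_2(\C) \cong \C^4$ cannot vanish identically on $\U(2)$, because $\U(2)$ is a maximal totally real submanifold and so the identity principle applies. (Your iterated-differential argument there is correct: $\C$-linearity of $dp_U$ plus $T_U\U(2)\cap i\,T_U\U(2) = \{0\}$ and $\dim_\R T_U\U(2) = 4$ forces $dp_U = 0$, and induction on the order of differentiation kills the full Taylor expansion.) What the paper's route buys is brevity and, more substantively, a slightly more explicit form of the bound: by taking the Gram matrix $M = (\langle V_j^{\otimes n}, V_i^{\otimes n}\rangle)$ in the normalised trace inner product, the paper gets $\|\alpha\|_1 \leq m / \lambda_{\min}(M)$, whereas your argument only produces an unspecified $C(n)$ from norms of dual-basis functionals. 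What your route buys is independence from an external reference and a technique (identity principle on totally real submanifolds) that would transfer to other compact real forms, not just $\U(2)$. Neither approach gives an effective bound on $C(n)$, which the paper flags as an open issue.
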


\begin{proof}
    Let~$R(\cdot)$ denote the natural representation of the symmetric group~$\Scal_n$ on~$(\C^2)^{\otimes n}$, as given in equation~\eqref{eq:representation}.
    Denote by $(\C^{2\times 2})^{\otimes n}_{sym}$ the vector space of $2^n \times 2^n$ matrices which are invariant under conjugation by all permutation operators $R(\pi)$:
    $$(\C^{2\times 2})^{\otimes n}_{sym} := \big\{A\in (\C^{2\times 2})^{\otimes n}:\: R(\pi) A R(\pi)^\dagger = A \text{ for all } \pi\in \Scal_n \big\}.$$
    It was shown by Watrous \cite[Theorem 7.11]{Watrous2018} that
    $$(\C^{2\times 2})^{\otimes n}_{sym} = \vspan \{V^{\otimes n}:\: V\in \U(2)\},$$
    and that this vector space has dimension $m := \binom{n+3}{3}$
    (see Corollary 7.3 and Proposition 7.10 in \cite{Watrous2018}).
    It follows that there exist single-qubit unitaries $V_1, \dots, V_m \in \U(2)$ such that
    $$\vspan \{V_i^{\otimes n}:\: 1\leq i\leq m\} = \vspan \{V^{\otimes n}:\: V\in \U(2)\} = (\C^{2\times 2})^{\otimes n}_{sym}.$$
    Fix such a set of unitaries $V_1, \dots, V_m$.

    Now consider any $\Scal_n$-symmetric $n$-qubit unitary $U$.
    Since $U\in (\C^{2\times 2})^{\otimes n}_{sym}$, it can be written uniquely as a linear combination of $V_1^{\otimes n}, \dots, V_m^{\otimes n}$.
    To determine the coefficients in this linear combination it suffices to solve a linear system of equations, which we describe next.
    
    Denote by $\langle \cdot, \cdot\rangle$ the normalised trace inner product on $(\C^{2\times 2})^{\otimes n}$:
    $$\langle A, B\rangle = \frac{\tr(B^\dagger A)}{2^n} \quad \text{for } A, B \in (\C^{2\times 2})^{\otimes n}.$$
    By linear independence of the $V_i^{\otimes n}$, the coefficients $(\alpha_1, \dots, \alpha_m) \in \C^m$ in the decomposition $U = \sum_{i = 1}^m \alpha_i V_i^{\otimes n}$ are uniquely determined by the system of linear equations
    $$\langle U, V_i^{\otimes n}\rangle = \bigg\langle \sum_{j=1}^m \alpha_j V_j^{\otimes n},\, V_i^{\otimes n} \bigg\rangle = \sum_{j=1}^m \alpha_j \langle V_j^{\otimes n}, V_i^{\otimes n}\rangle \quad \text{for } 1\leq i \leq m.$$
    Indeed, write $M := \big(\langle V_j^{\otimes n}, V_i^{\otimes n}\rangle \big)_{i, j=1}^m$ and $u := \big(\langle U, V_i^{\otimes n}\rangle \big)_{i=1}^m$.
    Note that $M$ is positive \emph{definite}, since it is the Gram matrix of a set of linearly independent vectors;
    it follows that it is invertible and the operator norm of $M^{-1}$ is equal to $\lambda_{\min}(M)^{-1}$
    (where $\lambda_{\min}(M)>0$ is the smallest eigenvalue of $M$).
    The solution of the linear system above is $\alpha = M^{-1} u$, from which we conclude that
    $$\|\alpha\|_2 \leq \|M^{-1}\|\cdot \|u\|_2 = \lambda_{\min}(M)^{-1} \|u\|_2.$$
    
    By Cauchy-Schwarz we have that $|\langle U, V_i^{\otimes n}\rangle| \leq 1$ for all $i\in [m]$, so $\|u\|_2 \leq \sqrt{m}$, and (also by Cauchy-Schwarz) $\|\alpha\|_1 \leq \sqrt{m} \|\alpha\|_2$.
    It follows that $\|\alpha\|_1 \leq m/\lambda_{\min}(M)$, a bound that depends only on $n$ and our choice of unitaries $V_1, \dots, V_m$.
\end{proof}

\begin{remark}
    In principle it should be possible to obtain a good lower bound for the smallest eigenvalue of such a Gram matrix~$M$, which would translate to a good upper bound for $\|\alpha\|_1$ (and thus for the constant $C(n)$ in the statement).
    We have not been able to do so, but leave it as an open problem in Section~\ref{sec:Open}.
    Underlying this difficulty is the surprising imbalance on the amount of results known for symmetric \emph{states} versus symmetric \emph{unitaries}.
\end{remark}

Applying the previous lemma multiple times, one easily obtains an analogous result concerning partition-symmetric unitaries:

\begin{corollary}[Decomposition of partition-symmetric unitaries]
\label{cor:part_sym}
    Let $n_1,\dots, n_t$ be positive integers with $n_1+\dots+n_t = n$, let $\Gamma = \Scal_{n_1}\times \dots \times \Scal_{n_t}$ and denote $m := \max_{i\leq t} \binom{n_i+3}{3}$.
    There exist unitaries $V_1, \dots, V_m \in \U(2)$ and a constant $C(n)>0$ such that the following holds:
    any $\Gamma$-symmetric $n$-qubit unitary $U$ can be written in the form
    $$U = \sum_{i_1, \dots, i_t = 1}^m \alpha_{i_1, \dots, i_t} V_{i_1}^{\otimes n_1} \otimes \dots \otimes V_{i_t}^{\otimes n_t}$$
    with $\sum_{i_1, \dots, i_t = 1}^m |\alpha_{i_1, \dots, i_t}| \leq C(n)$.
\end{corollary}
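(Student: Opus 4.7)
The plan is to iterate the preceding lemma over the $t$ blocks of the partition. The three main steps are: (a) factorize the $\Gamma$-invariant operator subspace as a tensor product across blocks, (b) choose a single family $V_1, \dots, V_m$ that works simultaneously for every block, and (c) combine the per-block decompositions with a careful accounting of the $\ell_1$-norm.

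For step (a), I would invoke a standard fact from the representation theory of direct products: since $\Gamma = \Scal_{n_1} \times \dots \times \Scal_{n_t}$ acts with each factor $\Scal_{n_j}$ permuting only the qubits of its own block, the $\Gamma$-invariant subspace of operators satisfies
\[
    (\C^{2\times 2})^{\otimes n}_{\Gamma\text{-sym}} = \bigotimes_{j=1}^t (\C^{2\times 2})^{\otimes n_j}_{\mathrm{sym}},
\]
which has dimension $\prod_j m_j$ where $m_j := \binom{n_j+3}{3} \leq m$. For step (b), I would pick $V_1, \dots, V_m \in \U(2)$ at random from the Haar measure (or in any sufficiently generic way). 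For each fixed $j$, the subset $\{V_i^{\otimes n_j}: i \in [m_j]\}$ is almost surely linearly independent and hence a basis of $(\C^{2\times 2})^{\otimes n_j}_{\mathrm{sym}}$; intersecting these finitely many almost-sure events gives a single family satisfying the property for all $j$ simultaneously. Moreover, the proof of the preceding lemma in fact yields the following quantitative refinement, valid for general elements of the symmetric subspace and not only unitaries: whenever $\{V_1^{\otimes n_j}, \dots, V_{m_j}^{\otimes n_j}\}$ is linearly independent, every $A \in (\C^{2\times 2})^{\otimes n_j}_{\mathrm{sym}}$ decomposes as $A = \sum_i \beta_i V_i^{\otimes n_j}$ with $\sum_i |\beta_i| \leq K_j \|A\|_2$ for a constant $K_j$ depending only on $n_j$ and the fixed choice of unitaries. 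This extension is immediate, since unitarity was used in the preceding proof only to bound $\|u\|_2 \leq \sqrt{m_j}$ via $\|U\|_2 = 1$.

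For step (c), fix an orthonormal basis $\{e^{(j)}_k: k \in [m_j]\}$ of each factor $(\C^{2\times 2})^{\otimes n_j}_{\mathrm{sym}}$ with respect to the normalised trace inner product. By the factorization in step (a), the unitary $U$ expands as
\[
    U = \sum_{k_1, \dots, k_t} c_{k_1, \dots, k_t}\, e^{(1)}_{k_1} \otimes \dots \otimes e^{(t)}_{k_t},
\]
with $\sum_{k} |c_{k_1, \dots, k_t}|^2 = \|U\|_2^2 = 1$, whence Cauchy--Schwarz gives $\sum_{k} |c_{k_1, \dots, k_t}| \leq \sqrt{\prod_j m_j}$. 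Applying step (b) to each basis element, write $e^{(j)}_{k_j} = \sum_{i_j=1}^m \beta^{(j)}_{k_j, i_j} V_{i_j}^{\otimes n_j}$ (extending by zeros beyond index $m_j$) with $\sum_{i_j} |\beta^{(j)}_{k_j, i_j}| \leq K_j$. Substituting and collecting the coefficients of $V_{i_1}^{\otimes n_1} \otimes \dots \otimes V_{i_t}^{\otimes n_t}$ produces the desired decomposition, and the triangle inequality yields
\[
    \sum_{i_1, \dots, i_t} |\alpha_{i_1, \dots, i_t}| \leq \Big(\sum_{k_1, \dots, k_t} |c_{k_1, \dots, k_t}|\Big) \prod_{j=1}^t K_j \leq \sqrt{\prod_{j=1}^t m_j} \cdot \prod_{j=1}^t K_j =: C(n).
\]
The main obstacle -- which is fairly mild -- is ensuring that the genericity argument in step (b) really delivers a \emph{single} family $V_1, \dots, V_m$ serving every block at once; once this is settled, the rest of the argument reduces to applying the preceding lemma block-by-block and collecting terms.
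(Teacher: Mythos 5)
Your proof is correct, and it is essentially the natural way to flesh out the paper's one-line proof ("applying the previous lemma multiple times"). The three ingredients you identify — the tensor factorization of the $\Gamma$-invariant operator algebra, the genericity argument to get a \emph{single} family $V_1,\dots,V_m$ serving every block, and the quantitative refinement of the lemma to general operators bounded by $\|A\|_2$ — are exactly what is needed, and all three are carried out correctly. The one place where you take a slightly roundabout path is step (c): you insert an intermediate orthonormal basis for each factor, expand $U$ there, and then re-expand each basis vector in terms of the $V_i^{\otimes n_j}$. This works and gives a finite $C(n)$, but it is a little less direct (and gives a slightly worse constant) than simply repeating the lemma's Gram-matrix argument on the tensor-product basis $\{V_{i_1}^{\otimes n_1}\otimes\cdots\otimes V_{i_t}^{\otimes n_t}\}$: the normalised trace inner product factorizes as a product over blocks, so the Gram matrix is $M^{(1)}\otimes\cdots\otimes M^{(t)}$ and $\lambda_{\min}$ multiplies, yielding $\|\alpha\|_1 \le \prod_j m_j / \prod_j \lambda_{\min}(M^{(j)})$ immediately. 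Either route is fine for the purposes of the corollary.
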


\section{Symmetric quantum subroutines}
\label{sec:subroutines}

In this section we show how to perform a number of important quantum subroutines in a symmetric way.

\subsection{Amplitude amplification}

Amplitude amplification is a widely applicable procedure proposed by Brassard, Høyer, Mosca and Tapp \cite{BrassardHMT2000}, which informally allows us to amplify the ``good part'' of the outcome of a quantum algorithm.

The general setup is the following:
suppose we have a quantum circuit~$\Acal$ which prepares a superposition $\sqrt{p} \ket{\psi_1} + \sqrt{1-p} \ket{\psi_0}$, and we wish to increase the amplitude of the state~$\ket{\psi_1}$ in this superposition.
Amplitude amplification gives a general technique for accomplishing this task, assuming only that we can ``distinguish'' between the states~$\ket{\psi_1}$ and~$\ket{\psi_0}$, and that we know an approximation of the original amplitude~$\sqrt{p}$.

The following lemma (which gives the first part of Theorem~\ref{thm:subroutines}) states that we can perform this procedure in a way that respects the symmetries of the original algorithm~$\Acal$ and of the state distinguisher.
For simplicity, it assumes that we know the exact value of the amplitude~$\sqrt{p}$, and in return allows us to exactly prepare the desired state~$\ket{\psi_1}$.
A similar result can be easily obtained if we relax both the assumption and the outcome to be approximate.

\begin{lemma}[Amplitude amplification]
\label{lem:AA}
    Suppose we are given a $\Gamma$-symmetric circuit~$\Acal$ such that
    
    $$\Acal\ket{0}^n = \sqrt{p} \ket{\psi_1} + \sqrt{1-p} \ket{\psi_0},$$
    where $\ket{\psi_0}$ and $\ket{\psi_1}$ are orthogonal states and $p\in (0, 1)$ is known.
    Suppose we are also given a $\Gamma\times \{id_1\}$-symmetric circuit $\Dcal$ which distinguishes $\ket{\psi_1}$ from $\ket{\psi_0}$ in the following sense:
    $$\Dcal \ket{\psi_0}\ket{0} = \ket{\psi_0}\ket{0}, \quad \Dcal \ket{\psi_1}\ket{0} = \ket{\psi_1}\ket{1}.$$
    Then we can construct a $\Gamma$-symmetric circuit $\Ccal$ which maps $\ket{0}^n$ to $\ket{\psi_1}$, using $O\big((s(\Acal)+s(\Dcal))/\sqrt{p}\big)$ gates and $a(\Acal) + a(\Dcal)+3$ workspace qubits.
\end{lemma}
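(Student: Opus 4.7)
The plan is to implement the standard Grover-style iterate $Q = -\mathcal{A}\, S_0\, \mathcal{A}^{-1}\, S_f$ in a $\Gamma$-symmetric way, and then iterate it $O(1/\sqrt{p})$ times, with a small modification at the end to achieve \emph{exact} (rather than approximate) amplification given that $p$ is known. Since we are already given $\mathcal{A}$ and $\mathcal{A}^{-1}$ (the latter by Lemma~\ref{lem:inversion}) as $\Gamma$-symmetric circuits, the two non-trivial ingredients to build are the reflections $S_0 = I - 2\ket{0}^n\!\bra{0}^n$ about the initial state and $S_f = I - 2\ket{\psi_1}\!\bra{\psi_1}$ about the ``good'' state.

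To build $S_f$ symmetrically, I would apply $\mathcal{D}$ on the active qubits plus a fresh flag qubit initialised to $\ket{0}$, then apply a single-qubit $Z$ on the flag (picking up the sign $-1$ on $\ket{\psi_1}\ket{1}$), then apply $\mathcal{D}^{-1}$ to uncompute. Since $\mathcal{D}$ is $\Gamma\times\{id_1\}$-symmetric, the flag qubit and all workspace qubits of $\mathcal{D}$ are invariant under the action of $\Gamma$, so the full composition is $\Gamma$-symmetric. To build $S_0$ symmetrically I would introduce a fresh ancilla qubit $q$, apply the equality gate $\mathrm{Eq}^{X, q}_0$ (which flips $q$ iff every active qubit equals $0$), apply $Z$ on $q$, and then apply the same equality gate again to uncompute; because the support of this $\mathrm{Eq}$ gate is exactly $X$, the construction is manifestly $\Gamma$-symmetric. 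Concatenating these four blocks via Lemma~\ref{lem:concat} produces a $\Gamma$-symmetric circuit implementing $Q$ using $O(s(\mathcal{A})+s(\mathcal{D}))$ gates; the workspace qubits of $\mathcal{A}$, $\mathcal{D}$ and the two flags each admit a $\Gamma$-invariant extension, so they may be reused across all iterations without breaking symmetry.

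To amplify exactly, I would use the standard BHMT padding trick: prepend one additional single-qubit ancilla $q^\star$ and apply $R_y(\theta_0)$ to it for a carefully chosen $\theta_0$, so that the combined initial state has good-subspace amplitude exactly $\sqrt{p'}$ where $(2m+1)\arcsin\sqrt{p'} = \pi/2$ for $m = \lfloor \pi/(4\arcsin\sqrt{p}) \rfloor = \Theta(1/\sqrt p)$. Replace $\mathcal{D}$ by a distinguisher $\mathcal{D}'$ that flags the active qubits \emph{and} checks $q^\star=1$; this is a single extra Toffoli/threshold gate on top of $\mathcal{D}$, and since $q^\star$ is a singleton it is $\Gamma$-symmetric. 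Iterating the modified $Q'$ exactly $m$ times sends $\ket{0}^n\ket{0}_{q^\star}$ to $\ket{\psi_1}\ket{1}_{q^\star}$, after which $q^\star$ is disentangled. Summing the ancilla budget gives $a(\mathcal{A}) + a(\mathcal{D}) + 3$: namely, $a(\mathcal{A}) + a(\mathcal{D})$ for the two given circuits, plus one flag for $\mathcal{D}$, one flag for $S_0$, and the padding qubit $q^\star$.

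The main technical care point is not the gate-count estimate (which is immediate from Lemma~\ref{lem:concat} once one iteration is built) but rather the bookkeeping of workspace qubits across the $O(1/\sqrt{p})$ iterations: one must check that the ancillas introduced in $S_0$, in the $S_f$/$\mathcal{D}$ block, and in the padding are each $\Gamma$-invariant \emph{in the same way} inside each iteration, so that Lemma~\ref{lem:concat}(ii) allows them to be reused. This is where the condition in the hypothesis that $\mathcal{D}$ is $\Gamma\times\{id_1\}$-symmetric (rather than only $\Gamma$-symmetric on its first $n$ qubits) plays an essential role; without it the flag qubit would have to be duplicated every iteration, inflating the ancilla count linearly in $1/\sqrt p$.
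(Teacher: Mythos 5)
Your proposal follows essentially the same route as the paper: the BHMT double-reflection iterate, made exact by damping the good-state amplitude with a single-qubit $R_y$ rotation on a fresh ancilla, iterating a fixed number of times, and reusing the workspace ancillas across iterations via Lemma~\ref{lem:concat}(ii). The cosmetic differences (applying $Z$ directly on a flag instead of the paper's $\ket{-}$ phase-kickback ancilla; attaching the damping rotation to a ``padding qubit'' $q^\star$ and extending $\Dcal$ to check it, instead of the paper's $R_y(\beta)\ket{0}$ incorporated directly into the definition of $\ket{S}$ and $\ket{G}$) do not change the argument. However, there are two concrete issues you should fix.

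First, your iteration count $m = \lfloor \pi/(4\arcsin\sqrt{p}) \rfloor$ is not always large enough for the damping angle $\theta_0$ to exist. You need $\sin\tfrac{\pi}{4m+2} \leq \sqrt{p}$, i.e.\ $(4m+2)\arcsin\sqrt{p} \geq \pi$, but $\lfloor\cdot\rfloor$ can lose up to one unit: if $\pi/(4\arcsin\sqrt{p}) = 1.6$ (so $p \approx 0.222$) then $m=1$ and $(4m+2)\arcsin\sqrt{p} = 6\cdot \pi/6.4 < \pi$, so $\sin\tfrac{\pi}{6} = \tfrac12 > \sqrt{p}$ and no valid $\theta_0$ exists. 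Using $\lceil\cdot\rceil$, or the paper's $K=\lceil \pi/(4\sqrt{p})\rceil$ together with $\arcsin x \geq x$, repairs this.

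Second, you build $S_0$ with an $\Eq$ gate supported \emph{exactly} on $X$, giving $S_0 = I - 2\ket{0}^n\!\bra{0}^n\otimes I$. This is not a rank-one reflection about the starting state $\Acal\ket{0}^{n+a(\Acal)}$: in general $\Acal^{-1}\bigl(\ket{\psi_i}\ket{0}^{a(\Acal)}\bigr)$ need not have $\Acal$'s workspace at $\ket{0}$, so the projection onto ``active qubits $=0$'' has rank $2^{a(\Acal)}$, and $\Acal S_0\Acal^{-1}$ need not preserve the two-dimensional invariant subspace $\mathrm{span}\{\ket{\psi_0}\ket{0}^{a(\Acal)},\ket{\psi_1}\ket{0}^{a(\Acal)}\}$ on which the rotation argument relies. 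The correct support is $X$ together with the workspace qubits of $\Acal$ (plus the damping ancilla and the distinguisher flag), which is still $\Gamma$-invariant since $\Acal$ is $\Gamma$-symmetric, so the fix costs nothing in symmetry. This is also what makes the paper's $\Rcal_S$ a genuine rank-one reflection and justifies its remark that the workspace qubits ``stay at state $\ket{0}$ throughout.''
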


\begin{proof}
    We employ the usual double-reflection strategy, alternately reflecting through the ``good state'' we wish to obtain and a ``starting state'' we can prepare which has nonnegligible overlap with the good state.
    To arrive exactly at the state $\ket{\psi_1}$ at the end of the algorithm, we need to perform a well-chosen rotation $R_y(\beta)$ on an ancilla qubit and incorporate this ancilla into the good state to be reflected over.
    Implementing the reflections will in turn require two extra ancilla qubits, one starting at $\ket{0}$ for using in the ``distinguisher'' circuit $\Dcal$ and one starting at $\ket{-}:=\frac{\ket{0}-\ket{1}}{\sqrt{2}}$ for performing the phase kickback trick.

    Denoting $K := \big\lceil \frac{\pi}{4\sqrt{p}} \big\rceil$ and defining $\beta\in (0, \pi)$ to be the unique solution to the equation
    $$\sin\frac{\beta}{2} = \frac{1}{\sqrt{p}} \sin\Big(\frac{\pi}{4K+2}\Big),$$
    our quantum circuit~$\Ccal$ is given in Figure~\ref{fig:AA} below.
    
    \begin{figure}[ht]
        \centering
        \includegraphics[width=15cm]{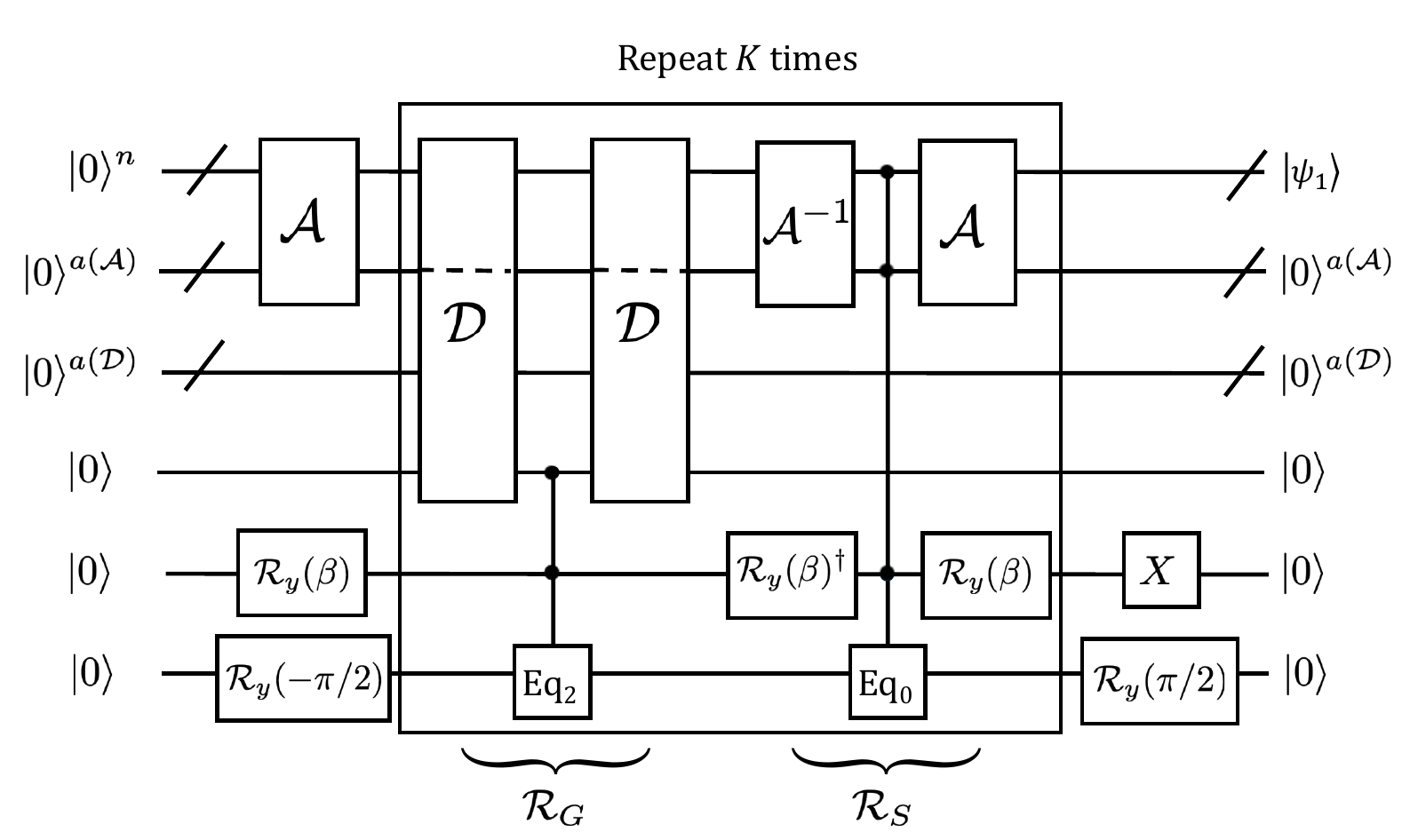}
        \caption{The circuit for symmetric amplitude amplification, with the reflections~$\Rcal_G$ and~$\Rcal_S$ highlighted.}
        \label{fig:AA}
    \end{figure}

    By Lemmas~\ref{lem:concat} and~$\ref{lem:inversion}$, this circuit is $\Gamma$-symmetric.
    We next outline the argument for its correctness.
    For notational convenience, we ignore the $a(\Acal)+a(\Dcal)$ workspace qubits which stay at state~$\ket{0}$ throughout our analysis.
    
    Denote by~$\ket{S}$ the state when we enter the ``repeat'' box, which we call the \emph{starting state};
    note that
    $$\ket{S} = \big(\sqrt{p} \ket{\psi_1} + \sqrt{1-p} \ket{\psi_0}\big) \ket{0} \big(R_y(\beta)\ket{0}\big) \ket{-}.$$
    The \emph{good state} we wish to obtain at the end of the ``repeat'' box is
    $$\ket{G} = \ket{\psi_1} \ket{0}\ket{1}\ket{-},$$
    so that we get the state $\ket{\psi_1}\ket{0}^3$ after applying the final layer of the circuit.
    Denote the two-dimensional space spanned by the starting state and the good state by~$V$.
    Let $\theta = \pi/(4K+2)$, so that $\braket{G | S} = \sin\theta$, and define the state $\ket{B} \in V$ by
    $$\ket{S} = \sin\theta \ket{G} + \cos\theta \ket{B}.$$

    Let~$\Rcal_G$ and~$\Rcal_S$ be the operators defined in Figure~\ref{fig:AA}.
    Restricted to the subspace~$V$, these operators act as reflections:
    \begin{align*}
        \Rcal_G \ket{G} &= -\ket{G}, &\Rcal_G \ket{B} = \ket{B}, \\
        \Rcal_S \ket{S} &= -\ket{S}, &\Rcal_S \ket{S^{\perp}} = \ket{S^{\perp}}
    \end{align*}
    where $\ket{S^{\perp}}\in V$ is orthogonal to~$\ket{S}$.
    The standard double-reflection argument (as presented in \cite{BrassardHMT2000}) shows that
    $$(\Rcal_S \Rcal_G)^j \ket{S} = \sin\big((2j+1)\theta\big) \ket{G} + \cos\big((2j+1)\theta\big) \ket{B}$$
    for all integers $j\geq 0$.
    Thus $(\Rcal_S \Rcal_G)^K \ket{S} = \ket{G}$, as wished.
\end{proof}

A useful variant of the amplitude amplification technique -- called \emph{oblivious amplitude amplification} -- was proposed in~\cite{BerryCCKS}.
It can be thought of as a generalisation of the previous technique where one replaces the starting state~$\ket{0}^n$ by an arbitrary (and possibly unknown) $n$-qubit state~$\ket{\psi}$.
Next, we show that this variant can also be performed in a symmetric way.

\begin{lemma}[Oblivious amplitude amplification]
\label{lem:obl_AA}
    Let $U$ be a $\Gamma$-symmetric $n$-qubit unitary.
    Suppose we are given a $\Gamma\times \{id_a\}$-symmetric $(n+a)$-qubit circuit $\Acal$ with action
    $$\Acal\ket{\psi} \ket{0}^a = \sqrt{p} (U\ket{\psi}) \ket{0}^a + \sqrt{1-p} \ket{\Phi_{\psi}} \quad \text{for all $\ket{\psi}$,}$$
    where $p\in (0, 1)$ is known and independent of $\ket{\psi}$, while $\ket{\Phi_{\psi}}$ is some normalised state that depends on $\ket{\psi}$ and has no support on basis states ending with $0^a$.
    Then we can construct a $\Gamma$-symmetric circuit which implements the unitary $U$ using $O\big(s(\Acal)/\sqrt{p}\big)$ gates and $a+2$ workspace qubits.
\end{lemma}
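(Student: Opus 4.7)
The plan is to adapt the proof of Lemma~\ref{lem:AA} to the oblivious setting, following the standard construction of Berry, Childs, Cleve, Kothari and Somma~\cite{BerryCCKS}. We will introduce a single additional ``exactness'' ancilla qubit~$c$ and consider the augmented $(n+a+1)$-qubit algorithm $\Acal'$ that first applies $R_y(\beta)$ to~$c$ (for a suitable angle~$\beta$) and then applies $\Acal$ to the original $n+a$ qubits. A direct calculation gives
$$\Acal'\ket{\psi}\ket{0^{a+1}} = \sqrt{p}\cos(\beta/2)\,U\ket{\psi}\ket{0^{a+1}} + \sqrt{1-p\cos^2(\beta/2)}\ket{\Phi'_\psi},$$
where $\ket{\Phi'_\psi}$ has no support on basis states whose $(a+1)$-qubit flag register equals $\ket{0^{a+1}}$. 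Let $R_0 := I_n \otimes (I_{a+1} - 2\ket{0^{a+1}}\bra{0^{a+1}})$ be the reflection about the flag state, which can be implemented using one extra phase-kickback ancilla in state $\ket{-}$ together with a single threshold-type gate. Setting $K := \lceil \pi/(4\arcsin\sqrt p)\rceil$ and choosing $\beta$ so that $\sqrt{p}\cos(\beta/2) = \sin(\pi/(4K+2))$ (which admits a unique solution $\beta \in (0, \pi)$ by our choice of $K$), the standard two-dimensional-invariant-subspace analysis of oblivious amplitude amplification shows that $K$ iterations of the alternating reflections based on $\Acal'$ and $R_0$, applied to $\Acal'\ket{\psi}\ket{0^{a+1}}$, produce \emph{exactly} $U\ket{\psi}\ket{0^{a+1}}$ regardless of the input $\ket{\psi}$.

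Symmetry preservation will follow directly from the building blocks of Section~\ref{sec:build_block}. The circuit $\Acal$ is $\Gamma\times\{id_a\}$-symmetric by hypothesis; extending trivially on the ancilla~$c$ and prepending the single-qubit rotation $R_y(\beta)_c$ (which acts only on one workspace qubit) yields a $\Gamma\times\{id_{a+1}\}$-symmetric~$\Acal'$. Its inverse is $\Gamma$-symmetric by Lemma~\ref{lem:inversion}. The reflection $R_0$ is $\Gamma$-symmetric, as it touches only the workspace flag register, and is implemented by a single threshold-type gate together with the auxiliary $\ket{-}$ ancilla. Concatenating these symmetric pieces via Lemma~\ref{lem:concat} produces a $\Gamma$-symmetric circuit whose action on the first $n$ qubits is exactly~$U$.

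The gate count is then immediate: each iteration costs $O(s(\Acal))$ gates, and with $K = O(1/\sqrt p)$ iterations the total is $O(s(\Acal)/\sqrt p)$ gates. The workspace comprises the $a$ flag qubits from $\Acal$, the qubit $c$ used for the exactness rotation, and the phase-kickback qubit used for $R_0$, for a total of $a+2$ workspace qubits. The main subtlety -- and what distinguishes oblivious amplitude amplification from the non-oblivious case handled by Lemma~\ref{lem:AA} -- is that the \emph{same} iteration must produce the target state for \emph{all} inputs $\ket{\psi}$ simultaneously. This is guaranteed by the key observation of~\cite{BerryCCKS}: the relevant two-dimensional invariant subspace depends on $\ket{\psi}$, but the rotation angle inside it is $\ket{\psi}$-independent, a fact that ultimately rests on $\Acal$ implementing the \emph{unitary} $U$ on the good component (rather than merely preparing some $\ket{\psi}$-dependent target state).
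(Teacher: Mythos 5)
Your proposal is correct and takes essentially the same route as the paper: both proofs augment $\Acal$ with an extra ancilla rotated by $R_y$ to tune the overlap to exactly $\sin\big(\pi/(4K+2)\big)$, use a phase-kickback ancilla initialised in $\ket{-}$ to implement the reflections, run $K=O(1/\sqrt{p})$ rounds of alternating reflections, and invoke the two-dimensional invariant-subspace lemma of~\cite{BerryCCKS} to guarantee the argument works uniformly over all $\ket{\psi}$. The paper uses $K=\lceil\pi/(4\sqrt{p})\rceil$ rather than your $K=\lceil\pi/(4\arcsin\sqrt{p})\rceil$, but both satisfy the required inequality $\sin(\pi/(4K+2))\leq\sqrt{p}$ and give the same asymptotics; the only thing you omit is the final $R_y(\pi/2)$ on the phase-kickback qubit to return it to $\ket{0}$, which is a trivial cleanup.
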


\begin{proof}
    The proof is similar to the one of the last lemma.
    
    Denoting $K := \big\lceil \frac{\pi}{4\sqrt{p}} \big\rceil$ and defining $\gamma\in (0, \pi)$ to be the unique solution to the equation
    $$\cos\frac{\gamma}{2} = \frac{1}{\sqrt{p}} \sin\Big(\frac{\pi}{4K+2}\Big),$$
    our quantum circuit for oblivious amplitude amplification is shown below in Figure~\ref{fig:oblivious_AA}.

    \begin{figure}[ht]
        \centering
        \includegraphics[width=14cm]{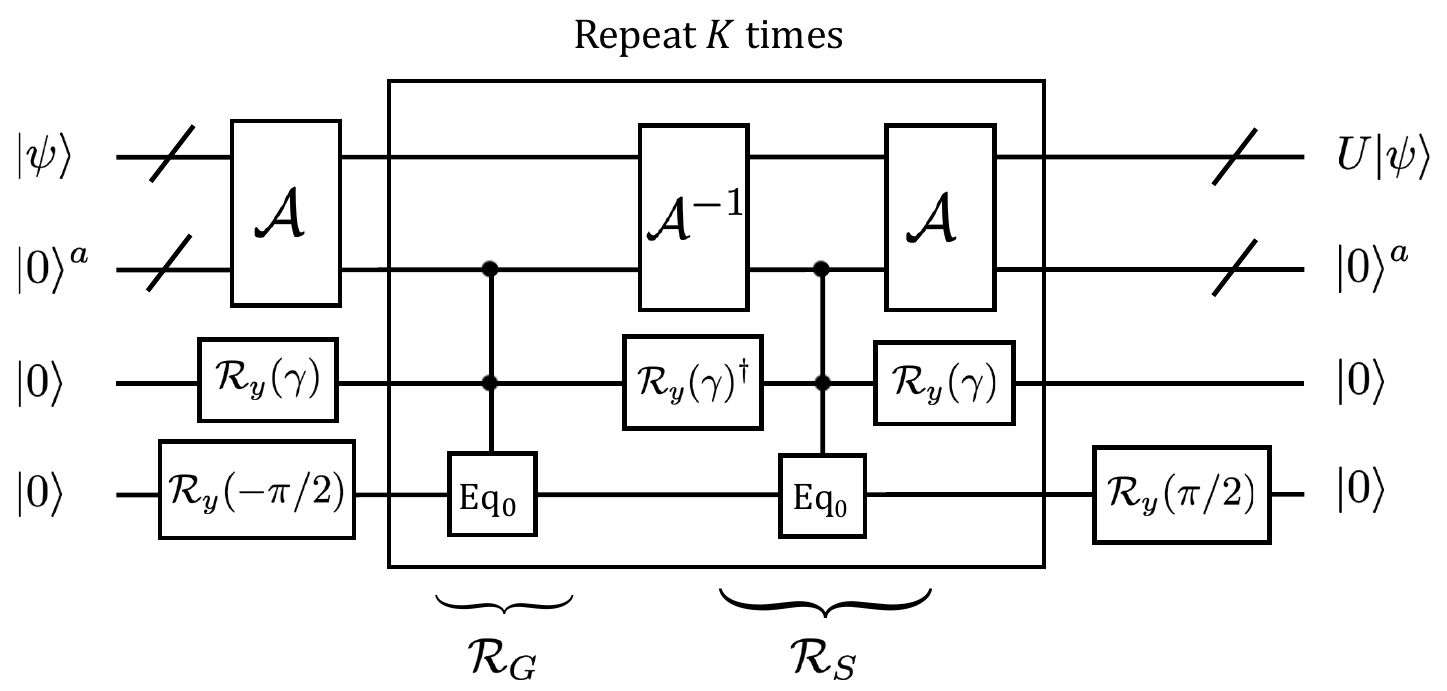}
        \caption{The circuit for oblivious amplitude amplification, with the reflections~$\Rcal_G$ and~$\Rcal_S$ highlighted.}
        \label{fig:oblivious_AA}
    \end{figure}

    Given an $n$-qubit quantum state~$\ket{\psi}$ on which our circuit acts, define the \emph{good state}
    $$\ket{G_{\psi}} = \big(U\ket{\psi}\big) \ket{0}^{a+1} \ket{-}$$
    and the \emph{starting state}
    $$\ket{S_{\psi}} = \big(\Acal \ket{\psi}\ket{0}^a\big) \big(R_y(\gamma) \ket{0}\big) \ket{-}.$$
    Denote by~$V_{\psi}$ the two-dimensional subspace spanned by~$\ket{G_{\psi}}$ and~$\ket{S_{\psi}}$.
    Let $\theta = \pi/(4K+2)$, so that $\braket{G_{\psi} | S_{\psi}} = \sin\theta$, and define the state $\ket{B_{\psi}} \in V_{\psi}$ by
    $$\ket{S_{\psi}} = \sin\theta \ket{G_{\psi}} + \cos\theta \ket{B_{\psi}}.$$

    Consider the operators~$\Rcal_G$ and~$\Rcal_S$ given in Figure~\ref{fig:oblivious_AA}.
    One can easily check that, restricted to the subspace~$V_{\psi}$, $\Rcal_G$ acts as a reflection through the line perpendicular to~$\ket{G_{\psi}}$:
    $$\Rcal_G \ket{G_{\psi}} = -\ket{G_{\psi}}, \quad \Rcal_G \ket{B_{\psi}} = \ket{B_{\psi}}.$$
    Moreover, within~$V_{\psi}$, $\Rcal_S$ acts as a reflection through the line perpendicular to~$\ket{S_{\psi}}$:
    $$\Rcal_S \ket{S_{\psi}} = -\ket{S_{\psi}}, \quad \Rcal_S \ket{S_{\psi}^{\perp}} = \ket{S_{\psi}^{\perp}}$$
    for $\ket{S_{\psi}^{\perp}} \in V_{\psi}$ orthogonal to~$\ket{S_{\psi}}$
    (see \cite[Lemma~3.7]{BerryCCKS}).

    It then follows from the standard double-reflection argument that the state obtained at the end of the ``repeat'' box is
    $$(\Rcal_S \Rcal_G)^K \ket{S_{\psi}} = \sin\big((2K+1)\theta\big) \ket{G_{\psi}} + \cos\big((2K+1)\theta\big) \ket{B_{\psi}} = \ket{G_{\psi}}.$$
    Applying a rotation $R_y(\pi/2)$ to the last qubit we obtain $(U\ket{\psi}) \ket{0}^{a+2}$, as desired.
\end{proof}

\subsection{Phase estimation}

Phase estimation is an efficient algorithmic procedure for estimating an eigenvalue of a known unitary~$U$ when given access to its corresponding eigenvector~$\ket{\psi}$.
It was first proposed by Kitaev \cite{Kitaev1995}, and later considered in  broader context in~\cite{CleveEMM1998}.

Suppose~$U$ is a unitary that we can implement via some quantum circuit~$\Acal$, and let~$\ket{\psi}$ be an eigenvector of~$U$.
Note that we can write $U\ket{\psi} = e^{2\pi i \theta} \ket{\psi}$ for some (unknown) phase $\theta\in [0, 1)$.
Quantum phase estimation allows us to estimate the value of this phase~$\theta$ up to precision~$\varepsilon$ using $O(1/\varepsilon)$ applications of control-$\Acal$, $O(1/\varepsilon^2)$ extra elementary gates and a single copy of the eigenstate~$\ket{\psi}$.

More precisely, denoting $k = \lceil\log(1/\varepsilon)\rceil$, what this procedure allows us to do is implement a circuit that maps $\ket{0}^k\ket{\psi}$ to a superposition
\begin{equation} \label{eq:phase_est}
    \bigg(\sum_{t=0}^{2^k-1} \alpha_t \ket{t}\bigg) \ket{\psi}, \quad \text{where} \quad |\alpha_t| = \bigg|\frac{\sin((2^k\theta-t)\pi)}{2^k \sin((\theta-t/2^k)\pi)}\bigg|
\end{equation}
(see \cite[Section~5]{CleveEMM1998}).
This gives an approximation of~$\theta$ because the amplitudes~$|\alpha_t|$ are highly concentrated around $t = 2^k \theta$.
In particular, measuring the first register of~\eqref{eq:phase_est} yields an integer~$t$ which satisfies\footnote{Here we use the distance $\|\cdot\|_{\R/\Z}$ in the torus since the phases $\theta=0$ and $\theta=1$ are  equivalent.}
\begin{equation} \label{eq:good_est}
    \|\theta - t/2^k\|_{\R/\Z} < \varepsilon
\end{equation}
with probability at least~$4/5$.

The following result shows that we can (efficiently) perform the phase estimation procedure in a way that also respects the symmetries of the circuit~$\Acal$.

\begin{lemma}[Phase estimation]
    Let~$U$ be a $\Gamma$-symmetric $n$-qubit unitary and let~$\ket{\psi}$ be an eigenstate of~$U$ such that $U\ket{\psi} = e^{2\pi i \theta} \ket{\psi}$ for some unknown phase $\theta\in [0, 1)$.
    Let $0<\varepsilon<1$ and denote $k = \lceil \log(1/\varepsilon)\rceil$.
    Given a $\Gamma$-symmetric circuit~$\Acal$ which implements~$U$, we can construct an $(\{id_k\}\times \Gamma)$-symmetric circuit that maps $\ket{0}^k\ket{\psi} \mapsto \big(\sum_{t=0}^{2^k-1} \alpha_t \ket{t}\big) \ket{\psi}$ as in equation~\eqref{eq:phase_est},
    using $O\big(s(\Acal)/\varepsilon + 1/\varepsilon^2\big)$ gates and $O(a(\Acal)+h(\Acal))$ workspace qubits.
\end{lemma}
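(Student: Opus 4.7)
The plan is to follow the standard phase estimation construction of Kitaev and of Cleve, Ekert, Macchiavello and Mosca, and verify that each step can be carried out while preserving the $\{id_k\}\times \Gamma$-symmetry.

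First I would initialise the $k$ counting qubits in $\ket{0}^k$ and apply a Hadamard to each one, producing the uniform superposition $\frac{1}{\sqrt{2^k}}\sum_{t=0}^{2^k-1} \ket{t}\ket{\psi}$. These Hadamards act only on the counting register and therefore trivially respect $\{id_k\}\times \Gamma$. Next I would, for each $j\in\{0,\dots,k-1\}$, apply the controlled operation $\ket{b}\ket{\phi}\mapsto \ket{b} U^{b\cdot 2^j}\ket{\phi}$ with the $j$th counting qubit as control and the $n$-qubit register as target; this produces the state $\frac{1}{\sqrt{2^k}}\sum_t e^{2\pi i t\theta}\ket{t}\ket{\psi}$. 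To do this symmetrically, I would invoke Lemma~\ref{lem:control} to build a $(\{id_1\}\times \Gamma)$-symmetric controlled-$\Acal$ circuit with $O(s(\Acal))$ gates and $a(\Acal)+h(\Acal)$ workspace qubits, and then simply compose it $2^j$ times with the same control qubit. Since the workspace of controlled-$\Acal$ always respects the same symmetries, Lemma~\ref{lem:concat}(ii) lets me reuse the same $a(\Acal)+h(\Acal)$ workspace qubits across all $\sum_{j=0}^{k-1} 2^j = 2^k-1 = O(1/\varepsilon)$ invocations. Finally I would apply the inverse quantum Fourier transform on the counting register using the standard $O(k^2)$-gate construction built from Hadamards and controlled phase gates; this again touches only the counting register, and so is $\{id_k\}\times \Gamma$-symmetric by inspection. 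By the analysis in \cite{CleveEMM1998} the resulting state is exactly the one in equation~\eqref{eq:phase_est}.

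For the size accounting, each controlled-$\Acal$ costs $O(s(\Acal))$ gates, so the controlled-$U^{2^j}$ stage contributes $O(s(\Acal)/\varepsilon)$ gates in total. The Hadamards add $O(k)$ gates and the inverse QFT adds $O(k^2)=O(\log^2(1/\varepsilon))$ gates, both comfortably absorbed into the $O(1/\varepsilon^2)$ term. The workspace count is $a(\Acal)+h(\Acal)$ for the (reused) controlled-$\Acal$ ancillas, matching the bound $O(a(\Acal)+h(\Acal))$ in the statement.

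I do not expect any serious obstacle; the one point that deserves care is verifying that the many uses of controlled-$\Acal$ can share workspace without breaking symmetry. This is where I would lean on the hypothesis of Lemma~\ref{lem:concat}(ii): the workspace of controlled-$\Acal$ built via Lemma~\ref{lem:control} is derived uniformly from $\Acal$'s own wires (the $a(\Acal)$ copies of its ancillas plus one ``head twin'' per threshold-head wire), so each copy of controlled-$\Acal$ uses ancillas with the same $\Gamma$-symmetry, and the same set of workspace qubits therefore extends every $\sigma\in\Gamma$ to an automorphism of the concatenated circuit. Combined with the fact that the counting register is only ever touched by gates that are invariant under the trivial action on it, the overall circuit is $\{id_k\}\times \Gamma$-symmetric as required.
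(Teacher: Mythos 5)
Your proof is correct, but it takes a genuinely different route from the paper's. The paper invokes Lemma~\ref{lem:select_op} once, with the $2^k$ circuits $\Ccal_j = \Acal^j$ (for $0\le j < 2^k$), to implement the select operation $\ket{j}\ket{\psi} \mapsto \ket{j}U^j\ket{\psi}$ in one shot. You instead use the ``logarithmic'' decomposition from the original CEMM construction: for each of the $k$ counting bits $j$, apply controlled-$\Acal$ a total of $2^j$ times with counting qubit $j$ as control, wrapping each block via Lemma~\ref{lem:control} and sharing workspace across blocks via Lemma~\ref{lem:concat}(ii). Both produce the right state by the analysis in \cite{CleveEMM1998}.

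The comparison is interesting, and slightly in your favour on the complexity accounting. A literal application of Lemma~\ref{lem:select_op} with $s(\Ccal_j)=j\cdot s(\Acal)$ gives $\sum_{j=0}^{2^k-1} s(\Ccal_j)=\Theta\big(s(\Acal)\, 2^{2k}\big)=\Theta\big(s(\Acal)/\varepsilon^2\big)$ inside the gate bound, which does not reduce to the claimed $O\big(s(\Acal)/\varepsilon + 1/\varepsilon^2\big)$ when $s(\Acal)$ grows. Your binary decomposition applies $\Acal$ exactly $\sum_{j<k} 2^j = 2^k-1 = O(1/\varepsilon)$ times in total, so the controlled stage costs $O\big(s(\Acal)/\varepsilon\big)$ as required, and the inverse QFT's $O(k^2)=O\big(\log^2(1/\varepsilon)\big)$ gates are comfortably within $O(1/\varepsilon^2)$. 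In other words, your route is the more careful justification of the lemma as stated. One small caveat worth adding explicitly: the controlled-phase gates in the standard $\mathrm{QFT}^\dagger$ are not in the paper's gate set (single-qubit unitaries plus threshold gates), but each factors into two CNOTs and single-qubit phase gates, so the $O(k^2)$ count and the trivial $\{id_k\}$-symmetry of the counting register are preserved. Your workspace-sharing argument via Lemma~\ref{lem:concat}(ii) is correct and matches the remark the paper makes in its own proof.
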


\begin{proof}
    The standard phase estimation algorithm is given as follows:
    \begin{itemize}
        \item[$(i)$] Apply $H^{\otimes k} \otimes I_{2^n}$ to $\ket{0}^k \ket{\psi}$;
        \item[$(ii)$] Apply a select operation $\ket{j}\ket{\psi} \mapsto \ket{j} U^j\ket{\psi}$ for $0\leq j< 2^k$;
        \item[$(iii)$] Apply QFT$_{2^k}^{\dagger} \otimes I_{2^n}$, where QFT$_{2^k}^{\dagger}$ is the inverse quantum Fourier transform on $\Z/2^k\Z$.
    \end{itemize}
    The computations done in \cite[Section~5]{CleveEMM1998} show that this algorithm maps $\ket{0}^n \ket{\psi}$ to the desired state given in~\eqref{eq:phase_est}.

    Steps~$(i)$ and~$(iii)$ of the algorithm above are already given by $(\{id_k\}\times \Gamma)$-symmetric circuits (for any $\Gamma \leq \Scal_n$).
    Since~$\Acal$ implements~$U$ and is $\Gamma$-symmetric, we can use the ``select'' circuit from Lemma~\ref{lem:select_op} with $\Ccal_j = \Acal^j$ to implement step~$(ii)$ symmetrically as well;
    note that we can reuse the same workspace qubits for all such circuits~$\Ccal_j$, as they have the same symmetries.
    The result follows from concatenating the circuits for each of these steps.
\end{proof}

Note that the probability of obtaining a good estimate for the desired phase~$\theta$ (in the sense of equation~\eqref{eq:good_est}) can be made to approach~$1$ using standard methods, which are also symmetry-preserving.

\subsection{Linear combination of unitaries}

An important technique in Hamiltonian simulation is known as \emph{linear combination of unitaries}, or LCU for short.
Intuitively, it gives a recipe for implementing any given linear combination $\sum_{i=0}^{k-1} \alpha_i U_i$ of unitaries $U_0, \dots, U_{k-1}$ when we know how to implement each of these unitaries individually.

This technique can also be seen as a more sophisticated ``building block'' for creating more complex quantum circuits from simpler ones.
This will be illustrated in Theorem~\ref{thm:sym_state} and Theorem~\ref{thm:qu_part_sym} in later sections.

As the final item of Theorem~\ref{thm:subroutines} stated in the Introduction, we now prove that LCU can be performed in such a way that respects the symmetries of the unitaries.

\begin{lemma}[LCU] \label{lem:LCU}
    Let $U_0, \dots, U_{k-1}$ be $\Gamma$-symmetric $n$-qubit unitaries, let $\alpha_0, \dots, \alpha_{k-1}$ be complex numbers and suppose $L := \big\|\sum_{i=0}^{k-1} \alpha_i U_i \ket{0}^n\big\|\neq 0$.
    Given $\Gamma$-symmetric circuits $\Ccal_0, \dots, \Ccal_{k-1}$ such that each $\Ccal_i$ implements $\ket{0}^n \mapsto U_i \ket{0}^n$,

    we can construct a $\Gamma$-symmetric circuit $\Ccal_{\textsc{lcu}}$ which implements
    $$\ket{0}^n \mapsto \frac{1}{L} \bigg(\sum_{i=0}^{k-1} \alpha_i U_i \bigg) \ket{0}^n$$
    using $O\big(\|\alpha\|_1 \big(k\log k + \sum_{i=0}^{k-1} s(\Ccal_i)\big)/L\big)$ gates and $O\big(n + \sum_{i=0}^{k-1}a(\Ccal_i)\big)$ workspace qubits.
\end{lemma}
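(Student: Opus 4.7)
The plan is to follow the standard LCU construction of Childs--Wiersema and then apply symmetric amplitude amplification (Lemma~\ref{lem:AA}) to boost the success amplitude to~$1$. All the intermediate pieces are already symmetric by prior lemmas, so the bulk of the work is accounting.

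First I would construct the ``one-shot LCU'' circuit $\Acal$ acting on an $m$-qubit control register (with $m = \lceil \log k\rceil$) and the $n$-qubit active register, initialised to $\ket{0}^m\ket{0}^n$. The circuit has three parts:
\begin{enumerate}
    \item A ``coefficient preparation'' $V$ on the control register satisfying $V\ket{0}^m = \frac{1}{\sqrt{\|\alpha\|_1}} \sum_{i=0}^{k-1} \sqrt{|\alpha_i|}\, \ket{i}$, built using any standard state-preparation routine on~$m$ qubits in $O(k)$ gates.
    \item A select operation $\ket{i}\ket{0}^n \mapsto e^{i\theta_i}\ket{i}\, U_i \ket{0}^n$ for $0\leq i<k$, where $e^{i\theta_i} = \alpha_i/|\alpha_i|$ is the phase of~$\alpha_i$.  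This is implemented symmetrically using Lemma~\ref{lem:select_op} in $O\big(k\log k + \sum_i s(\Ccal_i)\big)$ gates and $O\big(n + \sum_i a(\Ccal_i)\big)$ workspace qubits.
    \item The inverse coefficient preparation $V^\dagger$ on the control register.
\end{enumerate}
A direct calculation shows that
\[
\Acal\, \ket{0}^m\ket{0}^n
\;=\; \frac{L}{\|\alpha\|_1}\, \ket{0}^m\, \Big(\tfrac{1}{L}\!\sum_{i} \alpha_i U_i\Big)\ket{0}^n \;+\; \sqrt{1-\tfrac{L^2}{\|\alpha\|_1^2}}\, \ket{\Phi_0},
\]
where $\ket{\Phi_0}$ has no support on states whose control register is $\ket{0}^m$. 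Since $V$ and $V^\dagger$ act only on control qubits (which carry no $\Gamma$-action) and the select is $(\{id_m\}\times\Gamma)$-symmetric, the full circuit $\Acal$ is $\Gamma$-symmetric.

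Next I would specify the distinguisher $\Dcal$ that flags the ``good'' component: a single equality gate $\Eq^{S,\,\mathrm{flag}}_0$ on the $m$ control qubits $S$ writing its answer into one extra ancilla, preceded and followed by a $\NOT$ if we wish to detect the all-zero string (or we can negate the target).  This uses $O(1)$ gates, acts trivially on the active qubits, and is manifestly $\Gamma\times\{id_1\}$-symmetric.

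Finally I would invoke Lemma~\ref{lem:AA} with the above $\Acal$ and $\Dcal$ and with known amplitude $\sqrt{p} = L/\|\alpha\|_1$.  This yields a $\Gamma$-symmetric circuit that exactly prepares $\ket{0}^m \cdot \tfrac{1}{L}(\sum_i \alpha_i U_i)\ket{0}^n$; we then discard the workspace by viewing the control register as an ancilla (it is guaranteed to be in $\ket{0}^m$).  The number of gates is
\[
O\!\Big( \tfrac{1}{\sqrt{p}}\big(s(\Acal) + s(\Dcal)\big)\Big)
\;=\; O\!\Big( \tfrac{\|\alpha\|_1}{L}\big(k\log k + \textstyle\sum_i s(\Ccal_i)\big)\Big),
\]
matching the claim, and the workspace count $O\big(n + \sum_i a(\Ccal_i)\big)$ follows from the budgets of Lemma~\ref{lem:select_op} and Lemma~\ref{lem:AA} (the latter adds only $O(1)$).

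The only subtle point, and the step most in need of care, is verifying that $\Acal$ genuinely fits the hypothesis of Lemma~\ref{lem:AA}: the ``good state'' $\ket{0}^m \cdot \tfrac{1}{L}(\sum_i \alpha_i U_i)\ket{0}^n$ and its orthogonal complement $\ket{\Phi_0}$ must each be preserved by the reflections used in amplitude amplification in a $\Gamma$-symmetric way.  This follows because the amplitude-amplification reflections act on the control register and the single distinguisher ancilla (neither of which carries a $\Gamma$-action), together with calls to $\Acal$ and $\Acal^{-1}$, both of which are $\Gamma$-symmetric by construction (using Lemma~\ref{lem:inversion}).  Once this observation is made, the rest is a direct composition of the building blocks already developed.
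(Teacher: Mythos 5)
Your proposal is correct and follows essentially the same route as the paper: build the one-shot LCU circuit $\Acal = (W^\dagger \otimes I)\,\Ccal_{\mathrm{sel}}\,(W \otimes I)$ with $W\ket{0}^m \propto \sum_i \sqrt{|\alpha_i|}\ket{i}$, use Lemma~\ref{lem:select_op} for the phased select, observe that $\Acal\ket{0}^{m}\ket{0}^{n'}$ has the target state with amplitude $L/\|\alpha\|_1$ on the $\ket{0}^m$-flagged branch, and then invoke symmetric amplitude amplification (Lemma~\ref{lem:AA}) with an $\Eq_0$ distinguisher on the control register. The only nitpick is cosmetic: the $\Eq_0$ gate already fires on the all-zero control string, so the hedging about inserting $\NOT$ gates is unnecessary.
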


\begin{proof}
    Denote $m:= \lceil\log k\rceil$.
    By Lemma \ref{lem:select_op}, we can construct an $\{id_m\}\times \Gamma$-symmetric circuit $\Ccal_{sel}$ which implements the select operation
    \begin{equation*}
        \ket{j} \ket{\psi} \mapsto
        \frac{\alpha_j}{|\alpha_j|} \ket{j} U_j\ket{\psi} \quad \text{for $0\leq j < k$}
    \end{equation*}
    using $O\big(n + \sum_{j=0}^{k-1}a(\Ccal_j)\big)$ ancilla qubits and $O\big(k\log k + \sum_{j=0}^{k-1} s(\Ccal_j)\big)$ gates.
    We can also construct an ancilla-free $m$-qubit circuit $W$ of size $O(k)$ such that
    $$W \ket{0}^m = \frac{1}{\sqrt{\|\alpha\|_1}} \sum_{j=0}^{k-1} \sqrt{|\alpha_j|} \ket{j};$$
    see \cite{MVBS2005} for how this can be achieved.

    Denote $a':= a(\Ccal_{sel})$ and $n' := n+a'$ for convenience, and consider the circuit
    $$\Acal = \big(W^\dagger \otimes I_{2^{n'}}\big) \circ \Ccal_{sel} \circ \big(W \otimes I_{2^{n'}}\big).$$
    This circuit is $\{id_m\}\times \Gamma$-symmetric (by Lemma~\ref{lem:concat}), it uses $O\big(n + \sum_{j=0}^{k-1}a(\Ccal_j)\big)$ workspace qubits and has size $O\big(k\log k + \sum_{j=0}^{k-1} s(\Ccal_j)\big)$.
    A simple computation shows that
    $$(\bra{0}^m \otimes I_{2^{n'}}) \Acal \ket{0}^m \ket{0}^{n'} = \frac{1}{\|\alpha\|_1} \bigg(\sum_{j=0}^{k-1} \alpha_j U_j \ket{0}^n\bigg) \ket{0}^{a'}.$$
    Denoting $\ket{\psi_1} = \frac{1}{L} \ket{0}^m \big(\sum_{j=0}^{k-1} \alpha_j U_j \ket{0}^{n}\big) \ket{0}^{a'}$, we conclude that
    $$\Acal \ket{0}^m \ket{0}^{n'} = \frac{L}{\|\alpha\|_1} \ket{\psi_1} + \sqrt{1 - \frac{L^2}{\|\alpha\|_1^2}} \ket{\psi_0}$$
    for some $(m+n')$-qubit state $\ket{\psi_0}$ which has no support on basis states starting with $0^m$.
    We can then use amplitude amplification (Lemma~\ref{lem:AA}) on circuit $\Acal$ (with the circuit $\Dcal$ being a single gate $\Eq_0$ applied to the first $m$ qubits) in order to conclude the proof.
\end{proof}

\begin{remark}
    The bound obtained on the number of workspace qubits can be significantly improved in some cases, for instance when the workspace qubits of the circuits $\Ccal_j$ satisfy the same symmetries, or when there are few qubits which serve as the head of some threshold gate.
\end{remark}

In the previous lemma we have used in an essential way the fact that we started with a known state (in our case~$\ket{0}^n$), and the obtained circuit does not necessarily implement the linear combination $\frac{1}{L} \sum_i \alpha_i U_i$ for any other $n$-qubit states.
This is unavoidable in general, as the linear combination $\frac{1}{L} \sum_i \alpha_i U_i$ might not even be a unitary matrix.

In the special case where this linear combination of unitaries is also unitary, we can overcome this obstacle as shown in the following lemma:

\begin{lemma}[Oblivious LCU] \label{lem:obl_LCU}
    Let $U_0, \dots, U_{k-1}$ be $n$-qubit unitaries, let $\alpha_0, \dots, \alpha_{k-1}$ be complex numbers and suppose that $\sum_{i=0}^{k-1} \alpha_i U_i$ is also unitary.
    Given $\Gamma$-symmetric circuits $\Ccal_0, \dots, \Ccal_{k-1}$ which implement $U_0, \dots, U_{k-1}$, we can construct a $\Gamma$-symmetric circuit $\Ccal_{\textsc{lcu}}$ which implements $\sum_{i=0}^{k-1} \alpha_i U_i$ using
    $O\big(\|\alpha\|_1 \big(k\log k + \sum_{i=0}^{k-1} s(\Ccal_i)\big)\big)$
    gates and $O\big(n + \sum_{j=0}^{k-1}a(\Ccal_j)\big)$ workspace qubits.
\end{lemma}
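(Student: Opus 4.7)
The plan is to reduce the proof to an application of oblivious amplitude amplification (Lemma~\ref{lem:obl_AA}), mirroring the non-oblivious LCU construction from Lemma~\ref{lem:LCU} but exploiting the fact that $U := \sum_{i=0}^{k-1}\alpha_i U_i$ is unitary so that the ``success probability'' of the LCU gadget is the same on every input state.

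First I would build the main LCU gadget. Set $m := \lceil \log k\rceil$. Using Lemma~\ref{lem:select_op} with $\theta_j = \arg(\alpha_j)$, construct a $\{id_m\}\times\Gamma$-symmetric select circuit $\Ccal_{sel}$ that implements $\ket{j}\ket{\psi} \mapsto \frac{\alpha_j}{|\alpha_j|}\ket{j} U_j\ket{\psi}$, using $O\big(k\log k + \sum_i s(\Ccal_i)\big)$ gates and $O(n + \sum_i a(\Ccal_i))$ workspace qubits. In parallel, build the ancilla-free $m$-qubit state-preparation circuit $W$ of size $O(k)$ with $W\ket{0}^m = \frac{1}{\sqrt{\|\alpha\|_1}}\sum_j \sqrt{|\alpha_j|}\ket{j}$ as in \cite{MVBS2005}; since it acts only on the $m$ ancilla wires, the combined circuit $\Acal := (W^{\dagger}\otimes I) \circ \Ccal_{sel} \circ (W\otimes I)$ remains $\{id_m\}\times\Gamma$-symmetric by Lemma~\ref{lem:concat}.

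Next I would verify the algebraic identity that makes everything work on arbitrary inputs. A direct computation gives
\begin{equation*}
(\bra{0}^m \otimes I)\,\Acal\,(\ket{0}^m \otimes \ket{\psi}) \;=\; \frac{1}{\|\alpha\|_1}\sum_{j=0}^{k-1}\alpha_j U_j\ket{\psi} \;=\; \frac{1}{\|\alpha\|_1}\,U\ket{\psi},
\end{equation*}
for \emph{every} $\ket{\psi}$. Because $U$ is assumed unitary, $\|U\ket{\psi}\| = 1$, so after swapping the ancilla to the end we may write
\begin{equation*}
\Acal\,\ket{\psi}\ket{0}^m \;=\; \tfrac{1}{\|\alpha\|_1}\,(U\ket{\psi})\ket{0}^m \;+\; \sqrt{1 - \tfrac{1}{\|\alpha\|_1^2}}\,\ket{\Phi_\psi},
\end{equation*}
where $\ket{\Phi_\psi}$ is a normalised state with no support on basis vectors ending in $0^m$. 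Crucially the amplitude $p := 1/\|\alpha\|_1^2$ is independent of $\ket{\psi}$, which is exactly the hypothesis required by Lemma~\ref{lem:obl_AA}.

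Finally I would invoke oblivious amplitude amplification on $\Acal$ with this value of $p$. Lemma~\ref{lem:obl_AA} then yields a $\Gamma$-symmetric circuit that implements $U$ using $O\big(s(\Acal)/\sqrt{p}\big) = O\big(\|\alpha\|_1\,(k\log k + \sum_i s(\Ccal_i))\big)$ gates and only $O(m)$ additional workspace qubits on top of those used by $\Acal$, giving the claimed bound $O(n + \sum_i a(\Ccal_i))$ in total. The main conceptual point (and the only place where we use the extra hypothesis beyond Lemma~\ref{lem:LCU}) is verifying that $p$ is state-independent; once that is established, the symmetry-preservation arguments carry over unchanged from Lemma~\ref{lem:LCU}, since all building blocks ($\Ccal_{sel}$, $W$, and oblivious amplification) preserve $\Gamma$-symmetry on the active $n$ qubits. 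I do not foresee a serious technical obstacle here, as the construction is essentially identical to the non-oblivious case, with the assumption of unitarity of $\sum_i \alpha_i U_i$ being the bridge that lets us swap in Lemma~\ref{lem:obl_AA} for Lemma~\ref{lem:AA}.
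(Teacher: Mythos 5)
Your proposal is correct and follows essentially the same route as the paper's own proof: the same $W$–$\Ccal_{sel}$–$W^\dagger$ sandwich, the same observation that unitarity of $\sum_i\alpha_i U_i$ makes the overlap with $\ket{0}^m$ state-independent, and the same final appeal to oblivious amplitude amplification (Lemma~\ref{lem:obl_AA}). The only cosmetic difference is that the paper tracks the $a(\Ccal_{sel})$ workspace qubits explicitly in the intermediate state, whereas you suppress them; the argument is unchanged.
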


\begin{proof}
    Denote $m:= \lceil\log k\rceil$.
    We first construct an ancilla-free $m$-qubit circuit $W$ of size $O(k)$ such that
    $$W \ket{0}^m = \frac{1}{\sqrt{\|\alpha\|_1}} \sum_{j=0}^{k-1} \sqrt{|\alpha_j|} \ket{j}.$$
    By Lemma \ref{lem:select_op}, we can construct an $\{id_m\}\times \Gamma$-symmetric circuit $\Ccal_{sel}$ which implements
    \begin{equation*}
        \ket{j} \ket{\psi} \mapsto
        \frac{\alpha_j}{|\alpha_j|} \ket{j} U_j\ket{\psi} \quad \text{for $0\leq j < k$}
    \end{equation*}
    using $O\big(n + \sum_{j=0}^{k-1}a(\Ccal_j)\big)$ workspace qubits and $O\big(k\log k + \sum_{j=0}^{k-1} s(\Ccal_j)\big)$ gates.

    Denote $a' := a(\Ccal_{sel})$, $n' := n+a'$ and consider the circuit
    $$\Acal = \big(W^\dagger \otimes I_{2^{n'}}\big) \circ \Ccal_{sel} \circ \big(W \otimes I_{2^{n}}\big).$$
    A simple computation shows that
    $$\Acal \ket{0}^m \ket{\psi} \ket{0}^{a'} = \frac{1}{\|\alpha\|_1} \ket{0}^m \bigg(\sum_{j=0}^{k-1} \alpha_j U_j \ket{\psi} \bigg) \ket{0}^{a'} + \sqrt{1- \frac{1}{\|\alpha\|_1^2}} \ket{\Phi_\psi},$$
    where $\big(\bra{0}^m \otimes I_{2^{n'}}\big) \ket{\Phi_\psi} = 0$.
    Using oblivious amplitude amplification (Lemma~\ref{lem:obl_AA}), we can then construct a $\Gamma$-symmetric circuit which implements $\sum_{j=0}^{k-1} \alpha_j U_j$ using $O\big(\|\alpha\|_1 s(\Ccal_{sel})\big)$ gates and $m+a'+2$ workspace qubits.
    The result follows.
\end{proof}

\section{Symmetric state preparation}
\label{sec:state_prep}

One of the key subroutines in quantum computing is the task of \emph{state preparation}.
There are several approaches for realising this task: adiabatic algorithms, variational methods (e.g., VQE), and designing quantum circuits that directly prepare the state.
We consider the latter option: one is given a classical description of a quantum state $\ket{\phi}$ and is asked to generate a quantum circuit that prepares $\ket{\phi}$ from a fixed state $\ket{0}^n$.

This task is notoriously difficult in general: preparing an arbitrary $n$-qubit quantum state using only single- and two-qubit gate requires a circuit of depth $\Theta(n)$ and an exponential amount of ancillary qubits (in $n$)~\cite{zhang2022quantum}.

Symmetries can dramatically reduce the number of gates required to prepare that state. For example, the Greenberger-Horne-Zeilinger (GHZ) states can be prepared very efficiently: a GHZ state can be prepared using a linear number of CNOT gates~\cite{bennett1996mixed} in the number of qubits, as opposed to an exponential number of gates that might be required for an arbitrary entangled state. Similarly, preparing the well-known AKLT state~\cite{smith2024constant} requires only a constant-depth quantum circuit. Improvements that emerge from using symmetries for this task have been studied in the context of variational algorithms~\cite{gard2020efficient}.

If the desired class of quantum states satisfies a non-trivial group of symmetries, one might wonder whether its state-preparation circuit can be made to satisfy those same symmetries.

\subsection{Permutation symmetry}

A rich source of applications of symmetries to quantum information theory comes from considering the \emph{symmetric subspace} $\vee^n \C^2$.
This is the vector space formed by all $n$-qubit states which are invariant under permutations of the qubits, and has been proven useful in state estimation, optimal cloning and the quantum de Finetti theorem \cite{Harrow2013}.

Clearly, any quantum state that can be prepared by an $\Scal_n$-symmetric circuit when starting from $\ket{0}^n$ must belong to~$\vee^n \C^2$.
An immediate question is whether the opposite holds:
can every symmetric state $\ket{\phi}\in \vee^n \C^2$ be prepared by some permutation-symmetric quantum circuit?
Moreover, which of these states can be \emph{efficiently} prepared by a permutation-symmetric quantum circuit?

Remarkably, it turns out that the obvious necessary condition of symmetry is sufficient even for ensuring the efficiency of a symmetric state-preparation circuit.
In other words, in the context of $\Scal_n$-symmetric quantum circuits, there is an equivalence between states that are \emph{possible} to prepare and states that are \emph{efficient} to prepare.
As mentioned above, this is far from true in the general (unrestricted) circuit setting.

Towards proving this result, we first consider the special case of \emph{Dicke states}.
Given integers $n\geq k\geq 0$, the Dicke state $\ket{D^n_k}$ is defined by
$$\Ket{D^n_k} := \binom{n}{k}^{-1/2} \sum_{x\in \bset{n}:\: |x|=k} \ket{x}.$$

\begin{lemma}[Dicke state preparation] \label{lem:Dicke}
    Given integers $n\geq k\geq 0$, there exists an $\Scal_n$-symmetric circuit that maps $\ket{0}^n$ to $\ket{D^n_k}$ using 3 workspace qubits and $O(n k^{1/4})$ gates.
\end{lemma}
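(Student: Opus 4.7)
My approach is to apply the symmetric amplitude amplification primitive of Lemma~\ref{lem:AA} to a \emph{biased product state}, using an essentially trivial single-gate distinguisher. The key observation is that the tensor product $R_y(\theta)^{\otimes n}\ket{0}^n$ assigns the same amplitude $(\cos(\theta/2))^{n-|x|}(\sin(\theta/2))^{|x|}$ to every computational basis vector $\ket{x}$ of a given Hamming weight. Consequently, the projection of this state onto the weight-$k$ subspace is proportional to the Dicke state $\ket{D^n_k}$, and the only free parameter (the angle~$\theta$) will be tuned so that as much mass as possible lands on that subspace.

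The edge cases $k = 0$ and $k = n$ are handled respectively by the trivial empty circuit and by a single $\Scal_n$-symmetric layer of~$X$ gates, so assume $1 \leq k \leq n-1$. I would take $\theta$ to satisfy $\sin^2(\theta/2) = k/n$ and define $\Acal := R_y(\theta)^{\otimes n}$. This is manifestly an $\Scal_n$-symmetric circuit using $n$ single-qubit gates and no workspace qubits, and a direct expansion shows
\begin{equation*}
    \Acal \ket{0}^n = \sqrt{p}\,\ket{D^n_k} + \sqrt{1-p}\,\ket{\psi_0}, \quad p = \binom{n}{k}\Big(\frac{k}{n}\Big)^{\!k}\Big(\frac{n-k}{n}\Big)^{\!n-k},
\end{equation*}
where $\ket{\psi_0}$ is supported on basis vectors of Hamming weight different from~$k$ and is therefore orthogonal to $\ket{D^n_k}$. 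Stirling's formula gives $p = \Theta(\sqrt{n/(k(n-k))})$, and hence $1/\sqrt{p} = O(\min(k, n-k)^{1/4}) \leq O(k^{1/4})$.

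For the distinguisher I would take $\Dcal := \Eq^{[n], h}_k$, a single equality gate acting on the $n$ active qubits with flag qubit~$h$. This gate is $(\Scal_n \times \{id_1\})$-symmetric, it flips~$h$ exactly on the weight-$k$ subspace (so it satisfies $\Dcal\ket{\psi_0}\ket{0} = \ket{\psi_0}\ket{0}$ and $\Dcal\ket{D^n_k}\ket{0} = \ket{D^n_k}\ket{1}$ as required), and it has $s(\Dcal) = 1$ with $a(\Dcal) = 0$. Plugging $\Acal$ and $\Dcal$ into Lemma~\ref{lem:AA} yields an $\Scal_n$-symmetric circuit that maps $\ket{0}^n$ to $\ket{D^n_k}$, uses $a(\Acal) + a(\Dcal) + 3 = 3$ workspace qubits, and has gate count
\begin{equation*}
    O\big((s(\Acal) + s(\Dcal))/\sqrt{p}\big) = O\big(n \cdot k^{1/4}\big),
\end{equation*}
as desired.

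The main ``obstacle,'' such as it is, is verifying the Stirling estimate $p = \Theta(\sqrt{n/(k(n-k))})$ uniformly over $1 \leq k \leq n-1$, including very small~$k$ and~$k$ very close to~$n$, where one can alternatively verify directly (e.g., for $k = 1$, $p = ((n-1)/n)^{n-1} \to 1/e$) that~$p$ stays bounded away from zero by an absolute constant. Beyond this elementary computation, the argument is essentially an immediate invocation of the amplitude-amplification building block on the simplest possible symmetric ingredients: a uniform product rotation and a single weight-checking gate, both of which are already $\Scal_n$-symmetric by construction.
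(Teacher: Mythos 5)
Your proof is correct and follows essentially the same approach as the paper: prepare the biased product state $R_y(\theta)^{\otimes n}\ket{0}^n$ with $\sin^2(\theta/2)=k/n$, use a single $\Eq_k$ equality gate as the distinguisher, and invoke Lemma~\ref{lem:AA}; the paper differs only cosmetically in reducing to $k\le n/2$ and bounding $p$ via the explicit binomial inequalities of MacWilliams--Sloane rather than Stirling, which yields the same $p = \Theta(\sqrt{n/(k(n-k))})$ estimate.
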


\begin{proof}
We can assume that $1\leq k \leq n/2$, as the case $k=0$ is trivial and the case where $k > n/2$ is analogous.
Let $\theta = 2\arcsin{\sqrt{k/n}}$, so that
$$R_y(\theta) =
\begin{pmatrix}
    \sqrt{1-k/n} & -\sqrt{k/n} \\
    \sqrt{k/n} & \sqrt{1-k/n}
\end{pmatrix}.$$
Applying $R_y(\theta)$ to each of~$n$ qubits initialised $\ket{0}$, we obtain the state
$$R_y(\theta)^{\otimes n} \ket{0}^n = \sum_{x\in \bset{n}} \big(\sqrt{k/n}\big)^{|x|} \big(\sqrt{1-k/n}\big)^{n-|x|} \ket{x}.$$
Note that $\bra{D^n_k} R_y(\theta)^{\otimes n} \ket{0}^n = \sqrt{p}$, where
$$p = \binom{n}{k} \Big(\frac{k}{n}\Big)^{k} \Big(1 - \frac{k}{n}\Big)^{n-k};$$
it follows that we can write
$$R_y(\theta)^{\otimes n} \ket{0}^n = \sqrt{p} \ket{D^n_k} + \sqrt{1-p} \ket{\psi_0},$$
where $\ket{\psi_0}$ has zero amplitude on computational basis states of weight~$k$.

Applying symmetric amplitude amplification (Lemma~\ref{lem:AA}) to the circuit $\Acal = R_y(\theta)^{\otimes n}$ and with an $\Eq_k$ gate performing the part of the distinguisher circuit~$\Dcal$, we can construct an $\Scal_n$-symmetric circuit which maps~$\ket{0}^n$ to~$\ket{D^n_k}$ using~$3$ workspace qubits and $O(n/\sqrt{p})$ gates.
This circuit is shown in Figure~\ref{fig:Dicke}.

\begin{figure}[ht]
    \centering
    \includegraphics[width=15cm]{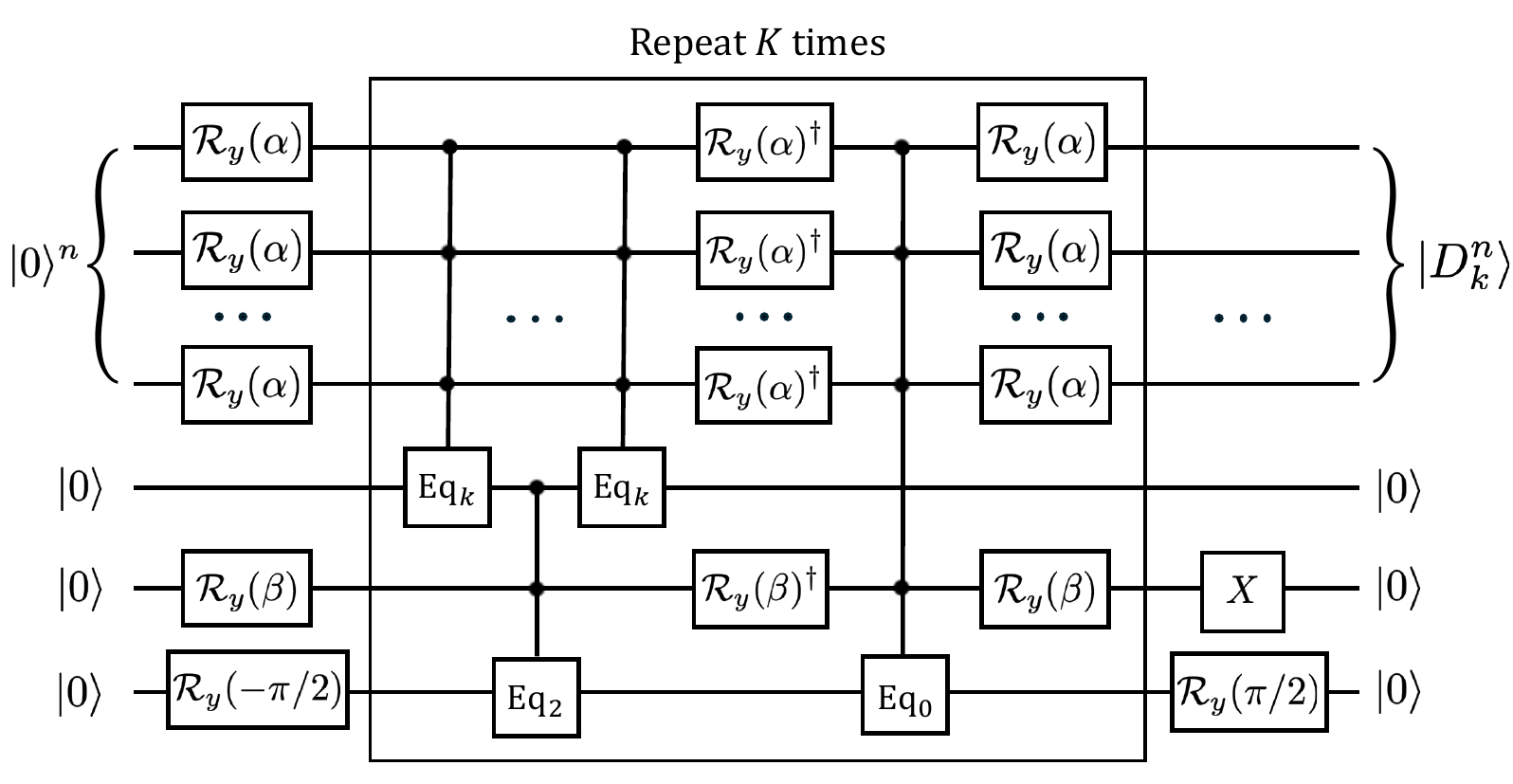}
    \caption{An $\Scal_n$-symmetric circuit for preparing the Dicke state~$\ket{D^n_k}$.}
    \label{fig:Dicke}
\end{figure}

It remains to estimate the value of~$\sqrt{p}$.
This can be done using the binomial inequalities
$$\sqrt{\frac{n}{8 k(n-k)}} 2^{n H(k/n)} \leq \binom{n}{k} \leq \sqrt{\frac{n}{2\pi k(n-k)}} 2^{n H(k/n)}$$
given in \cite[Chapter~10, Lemma~7]{MacWilliamsS1977}, where $H(\cdot)$ denotes the binary entropy function.
By definition, we can write $p = \binom{n}{k} 2^{-n H(k/n)}$, from which we conclude that $p = \Theta(k^{-1/2})$.
The number of gates in the circuit is then $\Theta(nk^{1/4})$, as wished.
\end{proof}

\begin{theorem}[Symmetric state preparation] \label{thm:sym_state}
    Given the classical description of a symmetric $n$-qubit state~$\ket{\varphi}$, we can construct an $\Scal_n$-symmetric quantum circuit~$\Ccal$ which prepares $\ket{\varphi}$ starting from $\ket{0}^n$ using $O(n^{2.75})$ gates and~$6$ workspace qubits.
\end{theorem}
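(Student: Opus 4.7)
The plan is to decompose the target symmetric state in the Dicke basis and synthesize it via symmetric linear combination of unitaries. Since the Dicke states $\ket{D^n_0}, \ket{D^n_1}, \dots, \ket{D^n_n}$ form an orthonormal basis of the symmetric subspace $\vee^n \C^2$, the given state admits a unique decomposition
$$\ket{\varphi} = \sum_{k=0}^n \alpha_k \ket{D^n_k}, \qquad \sum_{k=0}^n |\alpha_k|^2 = 1,$$
whose coefficients can be read off from the classical description of $\ket{\varphi}$. The strategy is then to prepare each Dicke state by an $\Scal_n$-symmetric circuit and combine them with weights $\alpha_k$ via symmetric LCU.

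First, for each $k \in \{0, 1, \dots, n\}$, I invoke Lemma~\ref{lem:Dicke} to obtain an $\Scal_n$-symmetric circuit $\Ccal_k$ of size $O(n k^{1/4})$ that maps $\ket{0}^n \mapsto \ket{D^n_k}$ using only $3$ workspace qubits. I then apply the symmetric LCU construction of Lemma~\ref{lem:LCU} to the family $\{\Ccal_k\}_{k=0}^n$ with the coefficients $\alpha_k$. Observing that $L = \big\|\sum_k \alpha_k \ket{D^n_k}\big\| = \|\ket{\varphi}\| = 1$, the output is an $\Scal_n$-symmetric circuit that implements the desired map $\ket{0}^n \mapsto \ket{\varphi}$.

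For the gate count, Lemma~\ref{lem:LCU} charges
$$O\!\left(\|\alpha\|_1 \left((n+1)\log(n+1) + \sum_{k=0}^n s(\Ccal_k)\right)\right)$$
gates. Cauchy--Schwarz applied to the unit $\ell_2$-vector $\alpha \in \C^{n+1}$ gives $\|\alpha\|_1 \leq \sqrt{n+1} = O(\sqrt{n})$, while the sum $\sum_{k=0}^n s(\Ccal_k) = O\!\left(n \sum_{k=0}^n k^{1/4}\right) = O(n \cdot n^{5/4}) = O(n^{9/4})$ dominates the logarithmic term. Combining, the total cost is $O(\sqrt{n} \cdot n^{9/4}) = O(n^{11/4}) = O(n^{2.75})$, matching the target exponent.

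The main technical subtlety lies in the workspace bookkeeping. Applied naively, Lemma~\ref{lem:LCU} would allocate disjoint ancillas for every controlled-$\Ccal_k$ inside the select operation, giving an $O(n)$ workspace overhead -- far more than the stated constant. However, since all $\Ccal_k$ are $\Scal_n$-symmetric and their three ancillas carry the same symmetry type, the remark following Lemma~\ref{lem:select_op} allows the same three workspace qubits to be reused throughout the select step. A careful reuse of the additional ancillas introduced by the internal amplitude amplification (Lemma~\ref{lem:AA}) and the phase-kickback ancilla should then bring the total down to a small constant. Verifying that this reuse genuinely preserves the full $\Scal_n$-symmetry of the composite circuit is, I expect, the most delicate part of the argument; by contrast, the gate-count bound falls out of routine estimates once the LCU-plus-Dicke strategy is in place.
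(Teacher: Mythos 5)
Your proposal follows exactly the paper's argument: decompose $\ket{\varphi}$ in the Dicke basis, invoke Lemma~\ref{lem:Dicke} to get each $\Ccal_k$ in $O(nk^{1/4})$ gates with $3$ ancillas, feed the family into the symmetric LCU (Lemma~\ref{lem:LCU}) with $L=1$, and close via Cauchy--Schwarz ($\|\alpha\|_1\le\sqrt{n+1}$) to land on $O(n^{11/4})$ gates. The gate count is handled correctly and identically. Worth noting: the subtlety you flag around the ancilla count is real, and the paper's own proof does not resolve it either --- it concludes with ``$O(n)$ workspace qubits,'' in tension with the ``$6$ workspace qubits'' claimed in the statement; you are, if anything, more explicit than the paper in acknowledging that reusing the $3$ Dicke ancillas across the select operation (justified by the remark after Lemma~\ref{lem:select_op}, since all $\Ccal_k$ carry the same symmetry) and recycling the amplitude-amplification/phase-kickback ancillas are what would be needed to reach a constant, and that this bookkeeping still requires verification.
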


\begin{proof}
Since the set of Dicke states $\big\{\ket{D^n_k}:\: 0\leq k \leq n\big\}$ forms an orthonormal basis of the symmetric subspace, we can decompose any symmetric $n$-qubit state $\ket{\varphi}$ as
$$\ket{\varphi} = \sum_{k=0}^n \alpha_{k} \Ket{D^n_k},$$
where $\alpha_0, \dots, \alpha_n$ are complex numbers with $|\alpha_0|^2 + \dots + |\alpha_n|^2 = 1$.
By the Cauchy-Schwarz inequality, we have
$$\sum_{k=0}^n |\alpha_k| \leq \sqrt{n+1}.$$

For each $0 \leq k \leq n$, denote by $\Ccal_k$ the symmetric circuit that maps $\ket{0}^n$ to $\ket{D^n_k}$ constructed in Lemma~\ref{lem:Dicke}.
These circuits use 3 ancilla qubits and $O(n^{5/4})$ gates.
Using linear combination of unitaries (Lemma~\ref{lem:LCU}), we can construct an $\Scal_n$-symmetric circuit $\Ccal$ such that
$$\Ccal \ket{0}^n = \bigg(\sum_{k=0}^n \alpha_k \Ccal_k\bigg) \ket{0}^n = \ket{\varphi},$$
and which uses $O(n^{11/4})$ gates and $O(n)$ workspace qubits.
\end{proof}

\subsection{Towards general symmetries}

One can also consider quantum states which are symmetric with respect to other symmetry groups.
An interesting class of examples are the \emph{polynomial phase states} studied in \cite{ArunachalamBSY2023}, which contain as special cases graph states, hypergraph states and states produced by measurement-based quantum computing.

\begin{definition}[Polynomial phase state]
    Given a polynomial $P: \F_2^n \to \F_2$, the phase state associated to~$P$ is
    $$\ket{\varphi_P} := 2^{-n/2} \sum_{x\in \bset{n}} (-1)^{P(x)} \ket{x}.$$
\end{definition}

Denote by $\Pol_{\leq d}(\F_2^n)$ the set of all polynomials of degree at most~$d$ on~$n$ variables over~$\F_2$, and let $$[n]_{\leq d} := \{J\subseteq [n]:\, |J|\leq d\}.$$
There is a natural bijection between $\Pol_{\leq d}(\F_2^n)$ and the set $\bset{[n]_{\leq d}}$ given by the \emph{list of coefficients}:
for every $P\in \Pol_{\leq d}(\F_2^n)$ there is a unique list $(a_J:\, J\in [n]_{\leq d})$ such that
$$P(x_1, \dots, x_n) = \sum_{J\subseteq [n]:\, |J|\leq d} a_J \prod_{j\in J} x_j \mod{2}.$$
Denote by~$\ket{P} = \bigotimes_{J\in [n]_{\leq d}} \ket{a_J}$ the computational basis state encoding this list of coefficients.

The group of symmetries of the set $[n]_{\leq d}$ corresponds to the pointwise action of the permutation group~$\Scal_n$ on subsets $J\subseteq [n]$ of size at most~$d$;
denote this group by~$\Scal^n_{\leq d}$.
We next show how to construct a linear-size $\Scal^n_{\leq d}$-symmetric circuit which, on input $\ket{P} \in (\C^2)^{[n]_{\leq d}}$, prepares the corresponding phase state~$\ket{\varphi_P}$.

\begin{theorem}[Phase state preparation]
    Given integers $n, d \geq 1$, we can construct an $\Scal^n_{\leq d}$-symmetric circuit~$\Ccal$ of size $O(n^d)$ such that
    $$\Ccal \ket{P} \ket{0}^n = \ket{P} \ket{\varphi_P} \quad \text{for all $P\in \Pol_{\leq d}(\F_2^n)$.}$$
\end{theorem}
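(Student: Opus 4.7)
The plan is to create the uniform superposition on the output register and then apply the phase $(-1)^{P(x)}$ by expanding it into monomials and handling each monomial symmetrically. After applying Hadamards to all $n$ output qubits, the joint state becomes $\ket{P} \otimes 2^{-n/2}\sum_{x\in\bset{n}} \ket{x}$. It remains to multiply each $\ket{x}$ by $\prod_{J\in [n]_{\leq d}} (-1)^{a_J \prod_{j\in J} x_j}$. The key observation is that $a_J \prod_{j\in J} x_j = \one\{a_J + \sum_{j\in J} x_j = |J|+1\}$, so ``the $J$-th monomial is active'' is exactly the predicate computed by an equality gate $\Eq^{S_J, \cdot}_{|J|+1}$, where $S_J := \{a_J\} \cup \{x_j : j\in J\}$.

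To turn this bit flip into a phase, I introduce one workspace qubit $w_k$ for each possible monomial degree $k\in\{0,\ldots,d\}$, prepared in $\ket{-}$ by an $X$ followed by an $H$. For each such $k$ I then apply, in a single layer, the gates $\Eq^{S_J, w_k}_{k+1}$ for every $J\in [n]_{\leq d}$ with $|J|=k$; phase kickback from $w_k=\ket{-}$ multiplies the rest of the state by $(-1)^{a_J \prod_{j\in J} x_j}$ for each such $J$. Since these gates share the head $w_k$ and otherwise act diagonally in the computational basis, they pairwise commute and may legitimately sit in one layer. After the phase block I uncompute each $w_k$ to $\ket{0}$ via the reverse $H, X$ sequence.

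The main thing to verify is the symmetry. For $\tau\in\Scal_n$ inducing $\sigma\in\Scal^n_{\leq d}$, extend $\sigma$ to a permutation $\pi$ of all wires by $\pi(a_J)=a_{\tau(J)}$, $\pi(x_j)=x_{\tau(j)}$ and $\pi(w_k)=w_k$. The Hadamards on $\{x_j\}$ and the single-qubit $X, H$ gates on $\{w_k\}$ are mapped to themselves as sets. For each equality gate, $\pi(S_J) = \{a_{\tau(J)}\} \cup \{x_{j'} : j'\in \tau(J)\} = S_{\tau(J)}$ and $|\tau(J)|=|J|=k$, so $\pi\big(\Eq^{S_J, w_k}_{k+1}\big) = \Eq^{S_{\tau(J)},\, w_k}_{k+1}$, another gate in the same layer. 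Hence $\pi$ is an automorphism extending $\sigma$, and the circuit is $\Scal^n_{\leq d}$-symmetric.

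Finally, the gate count: the initial Hadamard layer uses $n$ gates; preparation and uncomputation of the workspace contribute $O(d)$ gates; and the phase blocks contribute $\sum_{k=0}^d \binom{n}{k} = O(n^d)$ equality gates, giving $O(n^d)$ total, as required. There is no real obstacle here: once one spots the encoding $a_J\prod_{j\in J} x_j = \one\{a_J + \sum_{j\in J} x_j = |J|+1\}$, a single equality gate per monomial suffices, and the symmetry of the construction follows because the natural indexing of both the workspace qubits and the gates respects the action of $\Scal_n$ on $[n]_{\leq d}$.
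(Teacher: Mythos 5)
Your proof is correct, and it takes a genuinely different route from the paper's. You keep the $a_J$ wires in the computational basis, introduce $d+1$ workspace qubits $w_0,\dots,w_d$ in state $\ket{-}$, and realize each phase $(-1)^{a_J\prod_{j\in J}x_j}$ by phase kickback from an equality gate $\Eq^{S_J,\,w_k}_{k+1}$ with $S_J=\{a_J\}\cup\{x_j:j\in J\}$, exploiting your encoding $a_J\prod_{j\in J}x_j = \one\{a_J+\sum_{j\in J}x_j = |J|+1\}$. The paper instead avoids any extra workspace: it applies Hadamards to \emph{all} wires (including the $a_J$ register), putting each $a_J$ wire into $\tfrac{1}{\sqrt{2}}(\ket{0}+(-1)^{a_J}\ket{1})$, then applies $X$ on the $\emptyset$ wire and the multi-controlled NOTs $\thr^{J,J}_{\geq|J|}$ from the $x_j$'s with $j\in J$ onto the $a_J$ wire --- each $a_J$ wire serving as its own phase ancilla --- and closes with a Hadamard layer on $[n]_{\leq d}$ to restore $\ket{P}$. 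Both constructions implement the same product of diagonal $CZ$-type gates on $\{a_J\}\cup\{x_j:j\in J\}$ and cost $O(n^d)$ gates; the paper's circuit is ancilla-free, while yours spends $d+1$ ancillas but never applies non-diagonal unitaries to the active $[n]_{\leq d}$ register. Your symmetry extension ($\pi(a_J)=a_{\tau(J)}$, $\pi(x_j)=x_{\tau(j)}$, $\pi(w_k)=w_k$) and your commutativity argument for gates sharing the head $w_k$ are both sound.
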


\begin{proof}
    We start by applying a layer of Hadamard gates to all qubits, obtaining the state
    $$\bigg(\bigotimes_{J\in [n]_{\leq d}} \frac{\ket{0} + (-1)^{a_J}\ket{1}}{\sqrt{2}}\bigg) \otimes \bigg(\frac{1}{2^{n/2}} \sum_{x\in \F_2^n} \ket{x}\bigg).$$
    We then apply an~$X$ gate to the wire indexed by~$\emptyset$ and, for each set $J\subseteq[n]$ of size $1\leq |J|\leq d$, we apply the threshold gate\footnote{The gate $\thr^{J, J}_{\geq |J|}$ amounts to a multi-controlled NOT gate where the control qubits are indexed by the elements in~$J$ and the target qubit is indexed by the set~$J$.} $\thr^{J, J}_{\geq |J|}$.
    Note that all of these gates pairwise commute, and the resulting state after the second layer is
    $$\bigg(\bigotimes_{J\in [n]_{\leq d}} \frac{\ket{0} + (-1)^{a_J}\ket{1}}{\sqrt{2}}\bigg) \otimes \bigg(\frac{1}{2^{n/2}} \sum_{x\in \F_2^n} \prod_{J\in [n]_{\leq d}} (-1)^{a_J \prod_{j\in J} x_j} \ket{x}\bigg).$$
    Finally, we end with a layer of Hadamard gates applied to all input wires (i.e., those indexed by $[n]_{\leq d}$);
    this produces the state
    $$\bigg(\bigotimes_{J\in [n]_{\leq d}} \ket{a_J}\bigg) \otimes \bigg(\frac{1}{2^{n/2}} \sum_{x\in \F_2^n} \prod_{J\in [n]_{\leq d}} (-1)^{a_J \prod_{j\in J} x_j} \ket{x}\bigg) = \ket{P} \ket{\varphi_P}.$$
    The circuit thus constructed is easily checked to be $\Gamma$-symmetric.
\end{proof}

Finally, another family of symmetric states originates from the solutions to certain symmetric problems.

Indeed, suppose we have a decision problem on $n$-bit strings which is invariant under the action of some symmetry group $\Gamma \leq \Scal_n$.
In that case, the uniform superposition over all accepting bit strings (written in the computational basis) will be a $\Gamma$-symmetric quantum state.
The following result says that we can prepare this quantum state symmetrically whenever we can solve the associated decision problem symmetrically.

\begin{theorem}[Superposition over accepting states]
    Let $f: \bset{n} \to \bset{}$ be a $\Gamma$-symmetric Boolean function, and denote $p:= |f^{-1}(\{1\})|/2^n$.
    Given a $\Gamma$-symmetric circuit~$\Ccal$ which implements~$f$, we can construct a $\Gamma$-symmetric circuit that prepares the state
    $$\ket{\psi_f} := \frac{1}{\sqrt{p2^n}} \sum_{x: f(x)=1} \ket{x}$$
    using $O\big((n+s(\Ccal))/\sqrt{p}\big)$ gates and $a(\Ccal)+3$ workspace qubits.
\end{theorem}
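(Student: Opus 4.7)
The natural strategy is to view this as an instance of amplitude amplification: the uniform superposition $\frac{1}{\sqrt{2^n}}\sum_x \ket{x}$ has overlap exactly $\sqrt{p}$ with the target state~$\ket{\psi_f}$, and the circuit~$\Ccal$ already acts as a symmetric distinguisher between accepting and rejecting inputs. Concretely, I would:

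\begin{enumerate}
    \item Take $\Acal = H^{\otimes n}$ acting on the $n$ active qubits. This is an ancilla-free, $\Gamma$-symmetric circuit of size~$n$ for any~$\Gamma\leq \Scal_n$, and it prepares
    $$\Acal\ket{0}^n = \frac{1}{\sqrt{2^n}}\sum_{x\in\bset{n}} \ket{x} = \sqrt{p}\,\ket{\psi_f} + \sqrt{1-p}\,\ket{\psi_0},$$
    where $\ket{\psi_0}$ is the uniform superposition over the rejecting inputs, which is orthogonal to $\ket{\psi_f}$.
    \item Use~$\Ccal$ itself as the distinguishing subroutine~$\Dcal$. By the conventions set out in the introduction, $\Ccal$ takes $\ket{x}\ket{0}^{a(\Ccal)}$ and writes the value $f(x)$ onto its designated output qubit while cleaning up all other workspace qubits to $\ket{0}$; by linearity, it sends $\ket{\psi_i}\ket{0}^{a(\Ccal)} \mapsto \ket{\psi_i}\ket{i}\ket{0}^{a(\Ccal)-1}$ for $i\in\{0,1\}$, which is exactly the flagging behaviour required by Lemma~\ref{lem:AA}. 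Promoting the designated output qubit from ``workspace'' to ``active flag qubit'' (and leaving the remaining $a(\Ccal)-1$ qubits as workspace) gives a $\Gamma\times\{id_1\}$-symmetric circuit, because the output qubit is by design fixed by every automorphism extending a $\sigma\in\Gamma$.
    \item Feed $\Acal$ and $\Dcal = \Ccal$ into symmetric amplitude amplification (Lemma~\ref{lem:AA}) with the known amplitude $\sqrt{p}$. The output is a $\Gamma$-symmetric circuit that maps $\ket{0}^n$ exactly to $\ket{\psi_f}$, using
    $$O\!\bigg(\frac{s(\Acal) + s(\Dcal)}{\sqrt{p}}\bigg) = O\!\bigg(\frac{n + s(\Ccal)}{\sqrt{p}}\bigg)$$
    gates, and $a(\Acal) + a(\Dcal) + 3 = 0 + (a(\Ccal)-1) + 3$ workspace qubits beyond the flag qubit itself, for a total of $a(\Ccal)+3$.
\end{enumerate}

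The plan is essentially a direct instantiation of Lemma~\ref{lem:AA}, so there is no genuinely hard step. The only point that requires slight care is the bookkeeping in step~2: one must verify that the designated output qubit of the symmetric circuit~$\Ccal$ can legitimately be treated as the single ``flag'' qubit on which the trivial group $\{id_1\}$ acts, i.e.\ that some automorphism extending each $\sigma\in\Gamma$ can be chosen to fix this qubit. This is immediate because the output qubit is canonically distinguished (it is the one whose measurement decodes $f$), so any automorphism of~$\Ccal$ must fix its orbit and we may choose it to fix the qubit itself; after this is noted, the workspace and gate-count bookkeeping is routine and matches the stated bounds.
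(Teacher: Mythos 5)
Your plan is exactly the paper's proof: apply symmetric amplitude amplification (Lemma~\ref{lem:AA}) with $\Acal = H^{\otimes n}$, $\Dcal = \Ccal$ and $\ket{\psi_1} = \ket{\psi_f}$. The extra bookkeeping you supply in step~2 about promoting the designated output qubit to the flag qubit is a reasonable (and correct) clarification of what ``$\Dcal = \Ccal$'' means, but it is the same argument.
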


\begin{proof}
    This immediately follows by applying symmetric amplitude amplification (Lemma~\ref{lem:AA}) to circuits $\Acal = H^{\otimes n}$ and $\Dcal = \Ccal$, with $\ket{\psi_1} = \ket{\psi_f}$.
\end{proof}

\section{Symmetric quantum advantage} \label{sec:advantage}

In this section, we show how the decision problem XOR-SAT can be decided by polynomial-size symmetric quantum circuits.
This implies an exponential quantum advantage in the symmetric setting, since it is known that XOR-SAT on~$n$ variables and $m\geq n$ constraints requires $2^{\Omega(n)}$-sized symmetric threshold circuits to solve.
We will first formally define the decision problem XOR-SAT, then explain how this classical lower bound can be obtained, and finally provide our efficient symmetric quantum algorithm.

Consider a set of~$m$ linear equations over~$n$ Boolean variables $x_1, x_2, \dots, x_n \in \{0,1\}$, where each equation is of the form
\[
x_{i_1} \oplus x_{i_2} \oplus \cdots \oplus x_{i_k} = b,
\]
where $\oplus$ denotes the exclusive-or (addition modulo $2$), $i_1, \dots, i_k \in [n]$, and $b \in \{0,1\}$.
Equivalently, one may represent an instance of a problem as a system of linear equations over the finite field $\mathrm{GF}(2)$:
\[
A \mathbf{x} = \mathbf{b} \pmod{2},
\]
where $A \in \{0,1\}^{m \times n}$, $\mathbf{x} \in \{0,1\}^n$, and $\mathbf{b} \in \{0,1\}^m$.
Given the pair $(A,\mathbf{b})$ as input, the XOR-SAT problem asks:
does there exist an assignment $\mathbf{x} \in \{0,1\}^n$ that satisfies all equations simultaneously?
This is a special case of the Boolean satisfiability problem where each clause is an XOR (parity) constraint.
Note that this problem is invariant under permutation of constraints and permutation of variables, that is, it is $\Scal_m \times \Scal_n$-symmetric.

It was first established by Atserias, Bulatov and Dawar \cite{AtseriasBD2009} that XOR-SAT is not expressible in fixed-point logic with counting (FPC), which implies super-polynomial lower bounds on the size of families of symmetric circuits solving this problem \cite{AndersonD2017}.
This result was later strengthened by Atserias and Dawar \cite{AtseriasD19}:
even the easier problem of separating satisfiable instances of XOR-SAT (on~$n$ variables and $m = \Theta(n)$ constraints) from those that are not $(1/2+\varepsilon)$-satisfiable, any FPC-formula requires $\Omega(n)$ variables.
This in turn implies a lower bound of $2^{\Omega(n)}$ for the size of $\Scal_m \times \Scal_n$-symmetric threshold circuits that solve this (easier) approximate version of XOR-SAT on~$m = \Theta(n)$ constraints and~$n$ variables.

Here we show that symmetric quantum circuits can solve the XOR-SAT efficiently:

\begin{theorem}
    There is a $\Scal_m \times \Scal_n$-symmetric quantum circuit of size $\Tilde{O}(m^2 n^2)$ that solves XOR-SAT on~$m$ constraints and~$n$ variables with probability~1.
\end{theorem}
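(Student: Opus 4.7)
The plan is to realize a symmetric quantum analogue of Gauss--Jordan elimination. Classically this procedure decides whether $A\mathbf{x} = \mathbf{b}$ has a solution in polynomial time, but its non-symmetric step is the deterministic choice of a pivot at each stage. The crucial observation is that the conclusion---the rank of $A$ and of $[A|\mathbf{b}]$, and hence satisfiability---is invariant under the specific sequence of pivot choices. Consequently, we can keep all pivot choices in quantum superposition without introducing error, preserving the $\Scal_m \times \Scal_n$-symmetry throughout.

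Concretely, I would introduce row-flag qubits $p_i$ (for $i \in [m]$) and column-flag qubits $q_j$ (for $j \in [n]$), all initialised $\ket{0}$, which permute naturally under $\Scal_m$ and $\Scal_n$ respectively. The circuit then runs $\min(m,n)$ identical iterations, each performing four symmetric sub-steps. First, detect (via threshold gates on the A- and flag qubits) whether any \emph{eligible} pair $(i,j)$ exists---meaning $p_i = q_j = 0$ and the current value of $A_{ij}$ equals~$1$. Second, coherently prepare a uniform superposition over eligible pairs on a position register labelled by pairs in $[m]\times [n]$, via symmetric amplitude amplification (Lemma~\ref{lem:AA}) applied to the uniform state on $[m]\times [n]$ with an ``is-eligible'' test playing the role of the distinguisher~$\Dcal$. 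Third, conditioned on the selected pivot, apply the Gauss--Jordan pivot operation that XORs the selected row---including its $b$-entry---into every other row having a $1$ in the selected column; this select-type operation is implemented symmetrically via Lemma~\ref{lem:select_op}. Fourth, flip the associated flag qubits. Iterations that find no eligible pair act as a no-op in the relevant branch. After the loop, the system is unsatisfiable iff some row $i$ has $p_i = 0$ yet $b_i = 1$, which is detectable by a single symmetric threshold check on the flag and b-qubits.

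The hardest part will be justifying correctness and symmetry in the presence of the coherent pivot superposition. Different pivot sequences evolve the matrix differently, so the A/b-qubits become entangled with the pivot-history ancillas; since the number of eligible pairs also differs across branches, the amplitude amplification step requires a branch-aware depth, computed coherently from a threshold count of eligible pairs. The zero-error guarantee then follows from the pivot-invariance of rank: every branch reaches the same Boolean value for the final satisfiability indicator, so measurement yields the correct answer with certainty. Preserving $\Scal_m \times \Scal_n$-symmetry requires indexing the pivot-position register by pairs in $[m]\times [n]$ so that the selector register transforms equivariantly under the group action, and applying Lemma~\ref{lem:select_op} in a manner compatible with this labelling. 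Accounting for the amplitude amplification and select-operation overheads from Section~\ref{sec:subroutines}, a careful complexity analysis yields the claimed bound of $\tilde{O}(m^2 n^2)$ total gates.
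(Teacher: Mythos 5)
Your approach is genuinely different from the paper's, and unfortunately it has gaps that I believe are fatal as stated.

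The paper avoids ever running Gaussian elimination ``in a symmetric way'': instead, it creates a symmetric superposition (via $\Scal_n$- and $\Scal_m$-symmetric Dicke-state preparations and a single exact amplitude-amplification step with \emph{classically known} success probability) over relabelled instances $(A',b')$ stored on \emph{fresh} workspace qubits indexed by $[\tilde m]\times[\tilde n]$. Since that register is closed under, and fixed pointwise by, the induced action of $\Scal_m\times\Scal_n$, the group acts trivially on it; an ordinary, completely asymmetric Gaussian-elimination circuit on those qubits is therefore vacuously $\Scal_m\times\Scal_n$-symmetric. That single observation sidesteps every difficulty you encounter.

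Your route---coherent pivoting---faces an obstacle that is not merely technical. You need, in each iteration, to transform $\ket{0}$ into the uniform superposition over the set of currently eligible pairs. That set (and hence the success probability $p$) lives in superposition, since it depends on the matrix after earlier pivots in that branch. Lemma~\ref{lem:AA} requires $p$ to be a fixed classically-known constant, and Lemma~\ref{lem:obl_AA} requires $p$ to be independent of the input state; neither applies. A ``branch-aware number of Grover iterations'' is not a quantum circuit: the number of gates cannot be placed in superposition. One could try to compute the eligible-pair count $c$ coherently (threshold gates do let you do this) and then select among $mn$ different exact-AA subcircuits conditioned on $c$, but this is a substantial additional construction that you do not provide, it blows up the gate count by a further polynomial factor, and you would still need zero-error behaviour on the branch $c=0$. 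You acknowledge this is ``the hardest part'', but it is exactly where the proof is missing, not merely hard.

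There is a second gap: your pivot application is a select operation whose select register is indexed by $[m]\times[n]$ and is required to transform \emph{nontrivially} under $\Scal_m\times\Scal_n$, while the individual unitaries $U_{(i,j)}$ (pivot on row $i$, column $j$) are not themselves $\Scal_m\times\Scal_n$-symmetric. Lemma~\ref{lem:select_op} assumes each $U_j$ is $\Gamma$-symmetric and that the select register is acted on by $\{\mathrm{id}\}$, so it does not cover your situation. You would need to prove a new ``equivariant select'' lemma; it is plausible such a lemma holds, but it is not in the paper and you do not supply it. Finally, even granting both constructions, the complexity claim is unsupported: a back-of-the-envelope count gives at least $\min(m,n)$ iterations times the cost of a select over $mn$ branches and an AA over $mn$ positions, and it is not clear this stays within $\tilde O(m^2 n^2)$ without the explicit optimizations the paper achieves by working on a group-fixed register.

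The pivot-invariance observation that motivates your proof (``satisfiability does not depend on the pivot sequence'') is correct and is precisely the kind of invariance that makes the problem approachable, but the way the paper uses invariance is different: invariance of satisfiability under \emph{duplicating rows and columns}, which lets one manufacture a symmetric superposition of instances on which the group acts trivially, rather than invariance under pivot order, which would require a coherent amplification step you cannot carry out with the tools available.
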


Our construction uses the fact that solvability of an XOR-SAT instance $(A, b)$ is invariant with respect to doubling constraints (i.e. repeating some row of $(A, b)$) and to doubling variables (i.e. repeating some column of~$A$).
Indeed, a doubled constraint has the same truth value as the original one, while doubling some variable~$y$ to $(y', y'')$ is equivalent to considering the same constraints acting on a new variable $z := y' \oplus y''$.

The main idea needed for the circuit is to symmetrically create a uniform superposition of (slightly larger) modified instances $(A', b')$, each of which is solvable if and only if the original instance $(A, b)$ is solvable.
This superposition state is created from the original instance by entangling the columns of matrix~$A$ to $O(n\log n)$ Dicke states $\ket{D_1^n}$, each of which ``chooses'' a column to be copied into~$A'$, and then doing a similar procedure with the rows of $(A,b)$.
We can apply amplitude amplification to make sure all rows and columns of $(A,b)$ are copied into $(A',b')$ at least once.
By the fact given in the last paragraph, this implies that each modified instance $(A', b')$ with nonzero amplitude in the resulting state will have the same XOR-SAT truth value as the original instance.
We then solve the XOR-SAT problem coherently (using Gaussian elimination, say) in this superposition state and output the answer.

The crucial insight is that the final superposition state is invariant with respect to permuting rows and columns of the instance $(A,b)$, and thus performing Gaussian elimination on it is an $\Scal_m \times \Scal_n$-symmetric operation.
Since each step used in the creation of the superposition state can be implemented by an efficient symmetric quantum circuit (by Lemmas~\ref{lem:AA} and~\ref{lem:Dicke}), the outlined algorithm gives rise to an efficient symmetric quantum circuit.
The details are given below.

\begin{proof}
Let~$X$ be a set of~$m$ elements labelling the constraints and~$Y$ be a set of~$n$ elements labelling the variables, so that the problem is $\Scal_X \times \Scal_Y$-symmetric.
Denote the XOR-SAT instance by $(A, b)$, where $A\in \bset{X\times Y}$ and $b\in \bset{X}$, and let $\Tilde{m} = \lfloor10 m \log m\rfloor$, $\Tilde{n} = \lfloor10 n \log n\rfloor$.

We will implement the strategy outlined above, with the quantum circuit being constructed step by step.
For $i\in [\Tilde{m}]$, $j\in [\Tilde{n}]$, denote
$$R(i) = \big\{\text{row}(i; x):\: x\in X\big\} ,\quad C(j) = \big\{\text{col}(j; y):\: y\in Y\big\},$$
and let $R = \bigcup_{i\in [\Tilde{m}]} R(i)$ and $C = \bigcup_{j\in [\Tilde{n}]} C(j)$.
The circuit will have workspace qubits labelled by the elements of $R\cup C$, as well as other workspace qubits to be specified later.
By Lemma~\ref{lem:Dicke}, for each $i\in [\Tilde{m}]$ we can construct an $O(m)$-size $\Scal_m$-symmetric quantum circuit on qubits $R(i)$ that maps $\ket{0}^m$ to $\ket{D^m_1}$.
(Each such circuit introduces three extra ancillas starting and ending at~$\ket{0}$, which we will ignore as they have no significant impact.)
Likewise, for each $j\in [\Tilde{n}]$ we construct an $O(n)$-size $\Scal_n$-symmetric quantum circuit on qubits $C(j)$ that maps $\ket{0}^n$ to $\ket{D^n_1}$.
The resulting state on qubits $R\cup C$ after this is
$$\bigotimes_{i\in [\tilde{m}]} \bigg(\frac{1}{\sqrt{m}}\sum_{x'\in X} \bigotimes_{x\in X} \ket{\one\{x=x'\}}_{\text{row}(i;x)}\bigg) \bigotimes_{j\in [\tilde{n}]} \bigg(\frac{1}{\sqrt{n}}\sum_{y'\in Y} \bigotimes_{y\in Y} \ket{\one\{y=y'\}}_{\text{col}(j;y)}\bigg).$$
Intuitively, having value~$\ket{1}$ in qubit $\text{row}(i;x)$ indicates that row $x$ of $(A, b)$ has been chosen to be the $i$-th row of the modified instance $(A', b')$, and value~$\ket{1}$ in qubit $\text{col}(j;y)$ means that column $y$ of~$A$ has been chosen to be the $j$-th column of~$A'$.

Next we check whether all elements $x\in X$, $y\in Y$ have been chosen at least once, which can be done as follows.
Introduce new qubits with labels $\text{use}(x)$ for all~$x$, $\text{use}(y)$ for all~$y$ and $\text{use}(all)$, each initialised~$\ket{0}$.
Apply gates $\thr^{S_x, \text{use}(x)}_{\geq 1}$ and $\thr^{S_y, \text{use}(y)}_{\geq 1}$, where
$$S_x = \big\{\text{row}(i;x):\: i\in [\tilde{m}]\big\}, \quad S_y = \big\{\text{col}(j;y):\: j\in [\tilde{n}]\big\},$$
for all $x\in X$ and $y\in Y$.
Note that all of these gates pairwise commute.
Finally, apply the gate $\thr^{\text{Use, use}(all)}_{\geq m+n}$ where
$$\text{Use} = \big\{\text{use}(x):\: x\in X\big\} \cup \big\{\text{use}(y):\: y\in Y\big\},$$
and uncompute the qubits in $\text{Use}$ by applying gates $\thr^{S_x, \text{use}(x)}_{\geq 1}$, $x\in X$ and $\thr^{S_y, \text{use}(y)}_{\geq 1}$, $y\in Y$ a second time.
Note that the circuit obtained up to now is symmetric with respect to the natural action of $\Scal_X\times \Scal_Y$, where $(\sigma, \tau) \in \Scal_X\times \Scal_Y$ maps the qubits with labels $\text{row}(i;x)$, $\text{col}(j;y)$, $\text{use}(x)$ and $\text{use}(y)$ to those with labels $\text{row}(i; \sigma(x))$, $\text{col}(j; \tau(y))$, $\text{use}(\sigma(x))$ and $\text{use}(\tau(y))$, respectively.

If we were to measure qubit $\text{use}(all)$ and obtain~$\ket{1}$, the state~$\ket{\phi_{RC}}$ on qubits $R\cup C$ would collapse to the uniform superposition over all computational basis states of the form
\begin{equation} \label{eq:phiRC}
    \bigg(\bigotimes_{i\in [\tilde{m}]} \bigotimes_{x\in X} \ket{\one\{x=x_i\}}_{\text{row}(i;x)}\bigg) \otimes \bigg(\bigotimes_{j\in [\tilde{n}]} \bigotimes_{y\in Y} \ket{\one\{y=y_j\}}_{\text{col}(j;y)}\bigg)
\end{equation}
where $(x_1, \dots, x_{\tilde{m}}) \in X^{\tilde{m}}$, $(y_1, \dots, y_{\tilde{n}}) \in Y^{\tilde{n}}$, every element of~$X$ appears at least once as some~$x_i$ and every element of~$Y$ appears at least once as some~$y_j$.
Moreover, by our choices of $\tilde{m}$, $\tilde{n}$, and the coupon collector theorem, the amplitude of~$\ket{1}$ on qubit $\text{use}(all)$ is larger than~$1/4$ (and we can compute it explicitly).
It follows from symmetric amplitude amplification (Lemma~\ref{lem:AA}) that we can construct an $\Scal_X\times \Scal_Y$-symmetric quantum circuit of size $\tilde{O}(m^2+n^2)$ that prepares the state $\ket{\phi_{RC}}$ defined above on qubits $R\cup C$.

Now we use the obtained state $\ket{\phi_{RC}}$ and the instance $(A,b)\in \bset{X\times Y}\times \bset{X}$ to obtain a symmetric superposition over modified instances $(A',b') \in \bset{\tilde{m}\times \tilde{n}}\times \bset{\tilde{m}}$ which all have the same XOR-SAT value as $(A,b)$.
Denote by $(x,y)$ and~$x$ the labels of the input qubits where the values of $A(x,y)$ and $b(x)$ are stored, respectively.
We introduce qubits labelled $(i,j)$ and~$i$ for $i\in [\tilde{m}]$, $j\in [\tilde{n}]$, which will store the values of $A'(i,j)$ and $b'(i)$.
We implement the gates $\thr^{S(x,y,i,j), (i,j)}_{\geq 3}$ and $\thr^{S(x,i), i}_{\geq 2}$ for all $x\in X$, $y\in Y$, $i\in [\tilde{m}]$ and $j\in [\tilde{n}]$, where
$$S(x,y,i,j) = \big\{(x,y),\, \text{row}(i; x),\, \text{col}(j; y)\big\}, \quad S(x,i) = \big\{x,\, \text{row}(i; x)\big\}.$$
Note that all of these gates pairwise commute and, in the component of $\ket{\phi_{RC}}$ given by equation~\eqref{eq:phiRC}, the states on qubits $(i,j)$ and~$i$ after these operations are
$$\ket{A'(i,j)} :=\ket{A(x_i, y_j)} \quad \text{and} \quad \ket{b'(i)} := \ket{b(x_i)},$$
respectively.
Since every label $x\in X$, $y\in Y$ appears at least once as some $x_i$ or $y_j$, and since XOR-SAT is invariant with respect to doubling constraints and variables, it follows that each instance $(A',b')$ with nonzero amplitude in the resulting state is satisfiable iff $(A,b)$ is satisfiable.

Finally, we make two remarks on the symmetries of the resulting quantum circuit.
First, it is symmetric with respect to the natural action of $\Scal_X \times \Scal_Y$;
this follows from our construction and the symmetric concatenation lemma (Lemma~\ref{lem:concat}).
Second, the action of $\Scal_X \times \Scal_Y$ on the qubits $(i,j)$ and~$i$ is trivial for all $i\in [\tilde{m}]$, $j\in [\tilde{n}]$:
since the set of all gates acting on any of these qubits is pairwise commuting and closed under permutations $(\sigma, \tau)\in \Scal_X \times \Scal_Y$, this group induces no action on these qubits.
As such, we can implement an $O(\tilde{n}^2 \tilde{m})$-size circuit that solves XOR-SAT on the instance $(A',b')$ encoded in the qubits $(i,j)$ and~$i$ (for $i\in [\tilde{m}]$, $j\in [\tilde{n}]$), and the final circuit thus obtained will also be $\Scal_m \times \Scal_n$-symmetric.
The main contribution to the size of the final circuit are the gates $\thr^{S(x,y,i,j), (i,j)}_{\geq 3}$, of which there are $mn \tilde{m}\tilde{n} = \tilde{O}(m^2n^2)$ many.
The result follows.
\end{proof}

\section{Examples of symmetric quantum algorithms}
\label{sec:examples}

We now show how to express some symmetric quantum algorithms using the formalism of symmetric quantum circuits.

\subsection{Group-equivariant quantum neural networks}

In the field of quantum machine learning, group-equivariant quantum neural networks (QNNs) have emerged as appropriate architectures for learning highly symmetric properties.
When incorporating the underlying invariances in the data as geometric priors into their learning architectures, one significantly restricts the space of functions explored by
the QNN model without weakening its predictive power.
Such models can lead to better performance, both in training and
generalization, than those without inductive biases \cite{ragone2022representation}.

The symmetry constraints imposed on group-equivariant QNNs are strongly connected to our notion of symmetry for quantum circuits.
Indeed, such QNNs can be regarded as particular instantiations of symmetric quantum circuits, albeit with different gate sets.

Here we illustrate how this difference in gate sets -- and in perspectives -- is of little regard.
We show that the permutation-equivariant QNNs proposed in \cite{SLNSC2024} can be naturally translated into $\Scal_n$-symmetric circuits with only a constant-factor increase on the number of gates (though at the expense of adding $\Theta(n^2)$ workspace qubits).
The procedure outlined here is not specific to the permutation group, and should work for more general group-equivariant QNNs as well.

The permutation-equivariant QNNs of \cite{SLNSC2024} consist of series of `layers' of parametrised unitaries, each of the form $e^{-i\theta \Hcal}$ for some $\Hcal\in \{\Hcal_X, \Hcal_Y, \Hcal_{ZZ}\}$, where
$$\Hcal_X = \frac{1}{n} \sum_{j=1}^n X_j, \quad \Hcal_Y = \frac{1}{n} \sum_{j=1}^n Y_j, \quad \Hcal_{ZZ} = \frac{2}{n(n-1)} \sum_{j<k} Z_j Z_k.$$
In addition, the observables considered in the context of QNNs are of the form
\begin{equation} \label{eq:obs}
    \frac{1}{n} \sum_{j=1}^n \chi_j, \quad \frac{2}{n(n-1)} \sum_{j<k} \chi_j \chi_k, \quad \prod_{j=1}^n \chi_j
\end{equation}
for some Pauli matrix~$\chi$.
We will show how to implement each layer of the QNN and each observable by an $\Scal_n$-symmetric quantum circuit.
Concatenating those circuits gives the desired symmetric implementation of the neural networks.

The layers corresponding to $\Hcal_X$ equate to the unitary
$$e^{-i\theta \Hcal_X} = \prod_{j=1}^n e^{-i(\theta/n) X_j} = \bigotimes_{j=1}^n e^{-i(\theta/n) X},$$
which is an $\Scal_n$-symmetric layer of~$n$ single-qubit gates.
Similarly for those layers corresponding to $\Hcal_Y$.
The layers corresponding to $\Hcal_{ZZ}$ consist of two-qubit gates $e^{-i\alpha Z_j Z_k}$ which do not directly belong to our gate set;
to implement them, we introduce $\binom{n}{2}$ workspace qubits.

For each unordered pair of distinct indices $j, k\in [n]$, we add a workspace qubit labelled $\{j,k\}$.
Note that
$$e^{-i\theta \Hcal_{ZZ}} = \prod_{j<k} e^{-i\frac{2\theta}{n(n-1)} Z_j Z_k},$$
and that all of the terms in this product commute.
Setting $\alpha = 2\theta/(n(n-1))$, the unitary $e^{-i\alpha Z_j Z_k}$ can be implemented as follows:
apply a single-qubit gate $e^{-\alpha}I$ to qubit $\{j,k\}$, followed by CNOTs from qubits~$j$ and~$k$ to qubit~$\{j,k\}$, followed by a phase gate $P_{2\alpha}$ on qubit $\{j,k\}$, followed by CNOTs from qubits~$j$ and~$k$ to qubit~$\{j,k\}$.
This is shown in Figure~\ref{fig:exp_ZZ}.

\begin{figure}[ht]
    \centering
    \includegraphics[width=12cm]{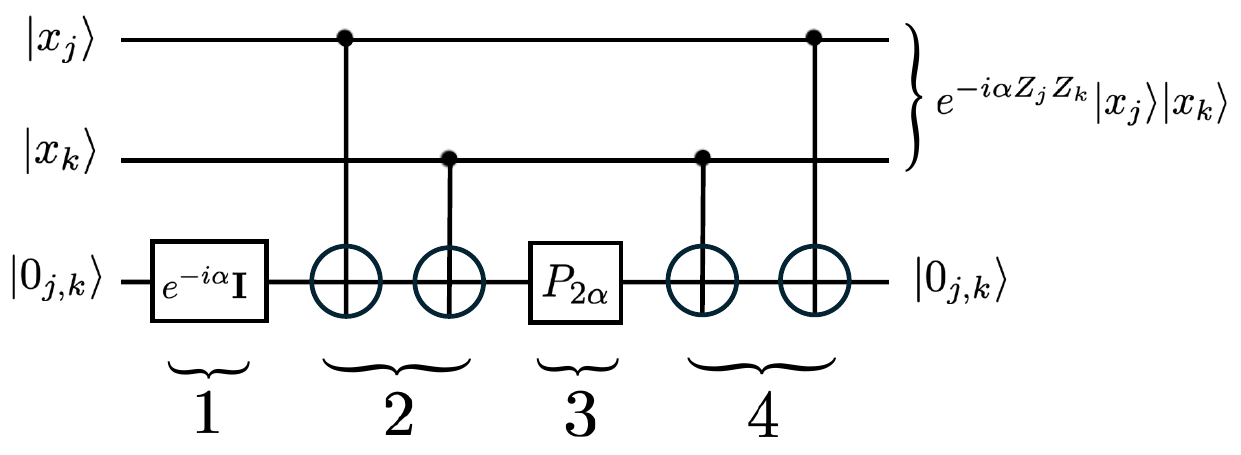}
    \caption{A circuit which implements $e^{-i\alpha Z_j Z_k}$.}
    \label{fig:exp_ZZ}
\end{figure}

Doing so for every pair of indices $j, k$ and grouping the resulting gates into four layers of analogous gates, we obtain an $\Scal_n$-symmetric circuit that implements $e^{-i\theta H_{ZZ}}$ using $O(n^2)$ gates and $O(n^2)$ workspace qubits (which are returned to $\ket{0}$ at the end).

For the observables, let us consider the case where the matrix~$\chi$ in Equation~\eqref{eq:obs} corresponds to~$Z$;
the other cases can be easily reduced to this one.
Note that, for any bit string $x\in \bset{n}$, we have
\begin{align*}
    \bigg(\frac{1}{n} \sum_{j=1}^n Z_j\bigg) \ket{x} &= \bigg(\frac{1}{n} \sum_{j=1}^n (-1)^{x_j}\bigg) \ket{x} = \bigg(1- \frac{2|x|}{n}\bigg) \ket{x}, \\
    \bigg(\prod_{j=1}^n Z_j\bigg) \ket{x} &= \bigg(\prod_{j=1}^n (-1)^{x_j}\bigg) \ket{x} = (-1)^{|x|} \ket{x}
\end{align*}
and
$$\bigg(\frac{2}{n(n-1)} \sum_{j<k} Z_j Z_k\bigg) \ket{x} = \bigg(\frac{2}{n(n-1)} \sum_{j<k} (-1)^{x_j+x_k}\bigg) \ket{x}.$$
The first two can be determined symmetrically using threshold-type gates.
For the third, we introduce $\binom{n}{2}$ extra workspace qubits to encode the values of $y_{i,j} := x_i\oplus x_j$.
We can then express the value of the observable on $\ket{x}$ as $1-4|y|/(n(n-1))$, which can also be symmetrically determined using threshold-type gates.

\subsection{Restriction to the symmetric subspace}

We have shown in Theorem~\ref{thm:sym_state} that every state in the symmetric subspace $\vee^n \C^2$ can be efficiently prepared by an $\Scal_n$-symmetric circuit.
We now show how these circuits can also efficiently implement any symmetric unitary when restricted to the symmetric subspace:

\begin{theorem}[Symmetric subspace unitaries] \label{thm:sym_subspace}
    Given an $\Scal_n$-symmetric unitary $U\in \U(2^n)$, we can implement the restricted action of~$U$ on the symmetric subspace as an $\Scal_n$-symmetric circuit with $O(n^{3.75})$ gates and $O(n)$ workspace qubits.
\end{theorem}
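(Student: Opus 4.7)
The plan is to apply a Householder-type decomposition to the restriction $U|_{\vee^n \C^2}$, which is a unitary on the $(n+1)$-dimensional symmetric subspace, and then implement each factor of this decomposition symmetrically. By the standard QR factorisation via Householder reflections (carried out inside the $(n+1)$-dimensional space $\vee^n \C^2$), one can write
\[
    U|_{\vee^n \C^2} \;=\; R_{\phi_1} R_{\phi_2} \cdots R_{\phi_n}\, D,
\]
where each $R_{\phi_j} = I - 2\ket{\phi_j}\!\bra{\phi_j}$ is a reflection around some symmetric state $\ket{\phi_j} \in \vee^n \C^2$ and $D$ is a unitary diagonal in the Dicke basis. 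The states $\ket{\phi_j}$ and the phases defining $D$ are determined by a classical preprocessing step from the description of $U$.

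For each reflection $R_{\phi_j}$, I would invoke Theorem~\ref{thm:sym_state} to obtain an $\Scal_n$-symmetric circuit $V_j$ of size $O(n^{2.75})$ using $a = O(1)$ workspace qubits and satisfying $V_j \ket{0}^n \ket{0}^a = \ket{\phi_j}\ket{0}^a$. A direct computation then gives
\[
    V_j\,\bigl(I - 2\ket{0^{n+a}}\!\bra{0^{n+a}}\bigr)\, V_j^\dagger \;=\; I - 2\ket{\phi_j, 0^a}\!\bra{\phi_j, 0^a},
\]
so that on any input $\ket{\psi}\ket{0}^a$ with $\ket{\psi} \in \vee^n \C^2$ the net effect is to apply $R_{\phi_j}$ to $\ket{\psi}$ while returning the workspace to $\ket{0}^a$. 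The reflection $I - 2\ket{0^{n+a}}\!\bra{0^{n+a}}$ itself is realisable as a uniform $\Scal_n$-symmetric circuit of size $O(n)$: apply $X$ to each of the $n+a$ qubits, apply the equality gate $\Eq^{[n+a], h}_{n+a}$ with a fresh ancilla $h$, apply $Z$ on $h$, uncompute the equality gate, and reapply $X$ uniformly. Each reflection is thus realised with $O(n^{2.75})$ gates while preserving $\Scal_n$-symmetry, by Lemma~\ref{lem:concat} and Lemma~\ref{lem:inversion}.

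For the diagonal factor $D$, write $D\ket{D^n_k} = e^{i\theta_k}\ket{D^n_k}$ with the convention $\theta_0 = 0$ (the overall phase is irrelevant). I would compute the \emph{thermometer encoding} of the Hamming weight $|x|$ into $n$ workspace qubits $a_1, \dots, a_n$ via the threshold gates $\thr^{[n], a_t}_{\geq t}$ for $t = 1, \dots, n$, then apply the phase gate $P_{\theta_t - \theta_{t-1}}$ to each $a_t$, and finally uncompute the thermometer encoding by re-applying the same threshold gates. The cumulative phase collected on the basis state $\ket{x}$ is $\prod_{t\leq |x|} e^{i(\theta_t - \theta_{t-1})} = e^{i \theta_{|x|}}$, giving the correct action of $D$ on the Dicke basis. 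This uses $O(n)$ $\Scal_n$-symmetric gates and $O(n)$ workspace qubits, all of which are returned to $\ket{0}$.

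Concatenating the $n$ symmetric reflection circuits and the symmetric diagonal circuit using Lemma~\ref{lem:concat}(i) with disjoint workspace qubits for each reflection (to guarantee symmetry preservation) produces an $\Scal_n$-symmetric circuit implementing the restricted action of $U$, with total size $O(n \cdot n^{2.75}) + O(n) = O(n^{3.75})$ and $O(n)$ workspace qubits. The main delicate point to verify is that the operator $V_j R_0 V_j^\dagger$, when restricted to inputs of the form $\ket{\psi}\ket{0}^a$ with $\ket{\psi} \in \vee^n \C^2$, really implements $R_{\phi_j}$ on the active qubits while cleanly returning all ancillas to $\ket{0}^a$; this rests on the fact that the LCU-based state-preparation circuits from Theorem~\ref{thm:sym_state} leave their workspace qubits in $\ket{0}$ at the end of the computation, so that the subsequent reflection can proceed on a well-defined symmetric input.
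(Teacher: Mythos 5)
Your proof is correct, and it takes a genuinely different route from the paper's. The paper works Dicke state by Dicke state: for each $k$ it constructs a ``transport'' circuit $\Acal_k\circ\Ccal_k^\dagger$ mapping $\ket{D^n_k}\mapsto\ket{\phi_k}:=U\ket{D^n_k}$, and then realises the restricted action of $U$ by running, for every $k$, a controlled version of this transport that first applies $\Ccal_k^\dagger$, flags into a fresh ancilla whether the active register landed on $\ket{0}^n$, and conditionally applies $\Acal_k$ or $\Ccal_k$; a further uncomputation pass restores the flag ancillas at the end. Your argument instead factors the $(n+1)$-dimensional operator $U|_{\vee^n\C^2}$ algebraically via Householder QR into $n$ reflections around symmetric states plus one diagonal unitary, and implements each factor directly: the reflections as conjugated zero-phase flips $V_j R_0 V_j^\dagger$ built from the state-preparation circuits of Theorem~\ref{thm:sym_state} (and Lemmas~\ref{lem:concat} and~\ref{lem:inversion} to certify symmetry), and the diagonal via a thermometer encoding of the Hamming weight followed by $n$ phase gates and uncomputation. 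Both routes invoke the $O(n^{2.75})$-size symmetric state preparation $\Theta(n)$ times and reach the same $O(n^{3.75})$ gate count and $O(n)$ ancillas. Your construction has the structural advantage that the $n+1$ factors compose sequentially on the same symmetric register without any per-index coherent control or deferred clean-up, which makes the correctness argument shorter and more transparent; the trade-off is that it requires a classical Householder factorisation of $U|_{\vee^n\C^2}$ in the Dicke basis as preprocessing, whereas the paper only needs the column vectors $\ket{\phi_k}=U\ket{D^n_k}$ themselves. One small simplification worth noting: the $2(n+a)$ uniform $X$ gates in your implementation of $R_0$ are unnecessary, since the gate $\Eq^{S,h}_{0}$ already flags the all-zeros string directly.
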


\begin{proof}
For $k\in \{0, \dots, n\}$, denote by $\Ccal_k$ the $\Scal_n$-symmetric circuit that implements $\ket{0}^n \mapsto \ket{D^n_k}$ constructed in Lemma~\ref{lem:Dicke}, and let $\ket{\phi_k} := U\ket{D^n_k}$.
Note that the following equations completely determine the action of~$U$ on the symmetric subspace:
$$U\ket{D^n_k} = \ket{\phi_k} \quad \text{for $0\leq k\leq n$.}$$
Denote by $\Acal_k$ the $\Scal_n$-symmetric circuit that implements $\ket{0}^n \mapsto \ket{\phi_k}$ constructed in Theorem~\ref{thm:sym_state}.

For each $k$, it follows that the circuit $\Acal_k \circ \Ccal_k^{\dagger}$ maps $\ket{D^n_k}$ to $\ket{\phi_k}$
(when those states are padded with the appropriate number of qubits $\ket{0}$).
We can then implement~$U$ on a symmetric state $\ket{\psi} = \sum_{k=0}^n \alpha_k \ket{D^n_k}$ by checking, for all $k\in \{0, \dots, n\}$, whether the first~$n$ qubits of $\Ccal_k^\dagger \ket{\psi}$ are zero and then implementing either $\Acal_k$ (if this is the case) or $\Ccal_k$ (if not).
The circuit in Figure~\ref{fig:sym_unitary}(a) does this symmetrically for a given~$k$ with the help of one extra ancilla qubit (using a new ancilla for each~$k$).

\begin{figure}[ht]
    \centering
    \includegraphics[width=15cm]{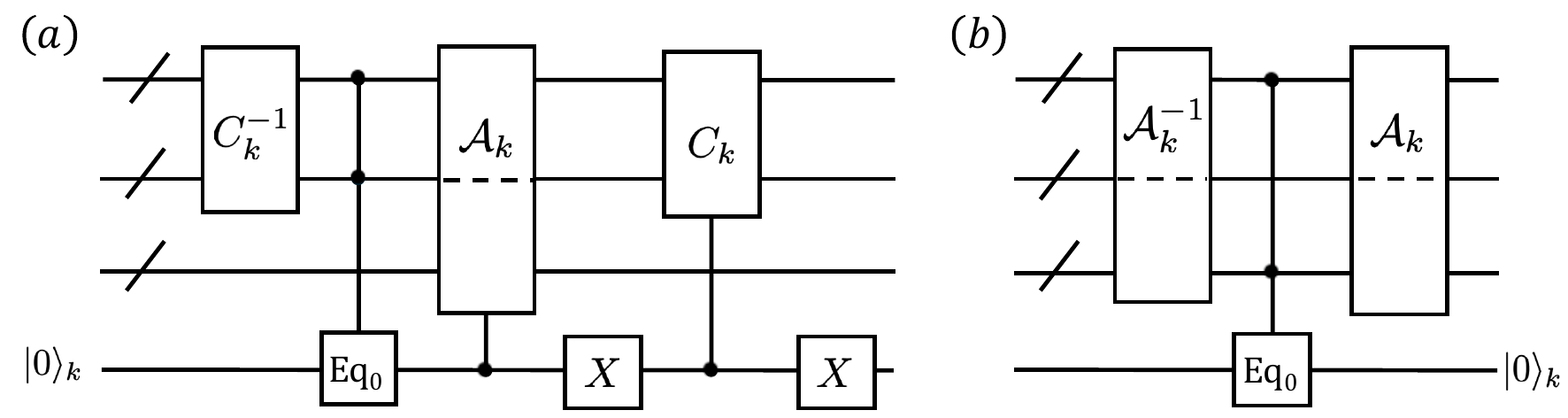}
    \caption{Parts of the circuit that implements the restriction of a unitary to the symmetric subspace.}
    \label{fig:sym_unitary}
\end{figure}

At the end of the computation we must return this ancilla to~$\ket{0}$, which can be done by applying $\Acal_k^\dagger$, then an $\Eq_0$ gate followed by the application of~$\Acal_k$ as shown in Figure~\ref{fig:sym_unitary}(b).
Implementing all circuits shown in Figure~\ref{fig:sym_unitary}(a) for $0\leq k\leq n$ (in some arbitrary order) and then all circuits in Figure~\ref{fig:sym_unitary}(b), we obtain the desired $\Scal_n$-symmetric circuit.
\end{proof}

\subsection{Partition-symmetric unitaries}

Even though we have seen that symmetric quantum circuits can perform many important quantum subroutines, an important complexity-theoretic question is whether they are \emph{universal} for symmetric operations.
This is not a trivial question:
Marvian showed that, in the context of \emph{continuous} symmetries and without allowing ancillas, generic symmetric unitaries cannot be implemented using local symmetric unitaries \cite{Marvian2022}.

Here we show that, in the context of any partition-symmetry group $\Gamma = \Scal_{n_1} \times \dots \times S_{n_t}$, $\Gamma$-symmetric quantum circuits are universal:
they can implement any $\Gamma$-symmetric unitary.

\begin{theorem} \label{thm:qu_part_sym}
    Any partition-symmetric unitary can be implemented by a partition-symmetric quantum circuit.
\end{theorem}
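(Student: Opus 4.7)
The plan is to combine the structural decomposition of partition-symmetric unitaries (Corollary~\ref{cor:part_sym}) with the oblivious linear combination of unitaries subroutine (Lemma~\ref{lem:obl_LCU}). Let $\Gamma = \Scal_{n_1} \times \dots \times \Scal_{n_t}$ and let $U$ be an arbitrary $\Gamma$-symmetric $n$-qubit unitary. By Corollary~\ref{cor:part_sym}, there exist single-qubit unitaries $V_1, \dots, V_m \in \U(2)$ (where $m = \max_j \binom{n_j+3}{3}$) and complex coefficients $(\alpha_{i_1, \dots, i_t})$ with bounded $\ell_1$ norm such that
$$U = \sum_{i_1, \dots, i_t = 1}^m \alpha_{i_1, \dots, i_t}\, V_{i_1}^{\otimes n_1} \otimes \dots \otimes V_{i_t}^{\otimes n_t}.$$
Crucially, this decomposition writes $U$ as a linear combination of a fixed set of ``tensor-power'' unitaries, which is exactly the form required to apply oblivious LCU.

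Next, I would show that each summand $W_{i_1, \dots, i_t} := V_{i_1}^{\otimes n_1} \otimes \dots \otimes V_{i_t}^{\otimes n_t}$ admits a trivial $\Gamma$-symmetric circuit $\Ccal_{i_1, \dots, i_t}$. Indeed, on the $j$-th block of $n_j$ qubits (indexed by the $j$-th factor of the partition), simply apply the single-qubit gate $V_{i_j}$ to each of these qubits. All of these $n$ single-qubit gates commute and can be grouped into a single layer; the action of any $(\sigma_1, \dots, \sigma_t) \in \Gamma$ permutes the gates within each block among themselves, since all gates in a given block are identical. Hence each $\Ccal_{i_1, \dots, i_t}$ is $\Gamma$-symmetric, uses no workspace qubits, and has size $n$.

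Finally, since the sum $\sum \alpha_{i_1, \dots, i_t} W_{i_1, \dots, i_t} = U$ is itself unitary, I invoke the oblivious LCU construction of Lemma~\ref{lem:obl_LCU} with the circuits $\Ccal_{i_1, \dots, i_t}$ as input. The lemma produces a $\Gamma$-symmetric circuit implementing $U$ on all input states, with total size controlled by $\|\alpha\|_1$, the number $m^t$ of unitaries being combined, and the sizes of the constituent circuits. All of these quantities are finite (they depend only on $n_1, \dots, n_t$), so the construction yields a legitimate $\Gamma$-symmetric circuit implementing $U$.

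The main obstacle is conceptual rather than technical: one must check that Lemma~\ref{lem:obl_LCU} genuinely preserves the partition symmetry even though the ``selection register'' used internally carries an $\{id\}$-action while the ``target register'' carries the $\Gamma$-action. This is exactly the setting for which Lemmas~\ref{lem:select_op} and~\ref{lem:obl_AA} were designed, and the key point is that every constituent circuit $\Ccal_{i_1, \dots, i_t}$ shares the same group $\Gamma$ of symmetries on its target register. The complexity bound will be exponential in $n$ in general (owing to the constant $C(n)$ from Corollary~\ref{cor:part_sym} and the $m^t$ terms in the decomposition), but this is acceptable for a universality statement; obtaining a polynomial bound would require the tighter control on $C(n)$ flagged as an open problem in the previous section.
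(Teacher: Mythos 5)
Your proof follows exactly the same route as the paper's: decompose $U$ via Corollary~\ref{cor:part_sym} into a linear combination of tensor powers of single-qubit unitaries, observe that each term $V_{i_1}^{\otimes n_1}\otimes\cdots\otimes V_{i_t}^{\otimes n_t}$ is trivially $\Gamma$-symmetric, and then apply oblivious LCU (Lemma~\ref{lem:obl_LCU}). You add a few more explicit remarks about why each summand's circuit is symmetric and why the ancilla/selection register does not spoil the symmetry, but these only elaborate details the paper leaves implicit.
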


\begin{proof}
Let $\Gamma = \prod_{i=1}^t \Scal_{n_i}$ be the considered symmetry group, denote $n = n_1 +\dots+ n_t$ and let~$U$ be a $\Gamma$-symmetric $n$-qubit unitary.
By Corollary~\ref{cor:part_sym} we can write
$$U = \sum_{i_1, \dots, i_t = 1}^m \alpha_{i_1, \dots, i_t} V_{i_1}^{\otimes n_1} \otimes \dots \otimes V_{i_t}^{\otimes n_t},$$
where $m \leq \binom{n+3}{3}$, $V_i \in \U(2)$ for all $i\in [m]$ and
$$\sum_{i_1, \dots, i_t = 1}^m |\alpha_{i_1, \dots, i_t}| \leq C(n)$$
for some finite value $C(n) > 0$.
Using oblivious linear combination of unitaries (Lemma~\ref{lem:obl_LCU}), one can construct a $\Gamma$-symmetric circuit which implements~$U$ by combining the simple $\Gamma$-symmetric circuits $V_{i_1}^{\otimes n_1} \otimes \dots \otimes V_{i_t}^{\otimes n_t}$.
\end{proof}

\section{Acknowledgements}
We are grateful to Anuj Dawar for insightful discussions.
This work was supported by the Engineering and Physical Sciences Research Council on Robust and Reliable Quantum Computing (RoaRQ), Investigation 005 [grant reference EP/W032635/1]. TG acknowledges support by ERC Starting Grant 101163189 and UKRI Future Leaders Fellowship MR/X023583/1. SS acknowledges support from the Royal Society University Research Fellowship.

\printbibliography

\appendix

\section{Properties of symmetric reversible circuits}
\label{sec:appendix}

In this appendix we prove some important properties of symmetric reversible circuits.
In particular, we provide proofs to Proposition~\ref{prop:equiv} and Theorem~\ref{thm:part_sym} from Section~\ref{sec:reversible}.

We start with a useful observation about the action of reversible threshold gates within a single layer.
Let~$W$ be the set of wire labels in the circuit.
For each $i\in [k]$, let~$T_i$ be the threshold gate with support~$S_i$, head $h_i\notin S_i$ and parameter~$t_i$:
$$T_i:\: a\mapsto a\oplus (\one\{|a_{S_i}|\geq t_i\} e_{h_i}) \quad \text{for $a\in \bset{W}$,}$$
where $e_{h_i} \in \bset{W}$ denotes the bit string with a single~$1$ at index~$h_i$.
If $T_i$ and $T_j$ commute, by equating the action of $T_i\circ T_j$ and $T_j\circ T_i$ we conclude that
$$T_i\circ T_j(a) = a\oplus (\one\{|a_{S_i}|\geq t_i\} e_{h_i}) \oplus (\one\{|a_{S_j}|\geq t_j\} e_{h_j}).$$
More generally, if $T_1, \dots, T_k$ pairwise commute, it follows that
\begin{equation} \label{eq:commuting_thr}
    T_1\circ T_2\circ \cdots \circ T_k(a) = a\oplus \bigoplus_{i=1}^k \big(\one\{|a_{S_i}|\geq t_i\} e_{h_i}\big).
\end{equation}
Note that this same formula (when suitably interpreted in the ket notation) also holds for commuting threshold gates in \emph{quantum} circuits.

Recall that Proposition~\ref{prop:equiv} states that the Boolean (threshold) and reversible notions of symmetric circuits are equivalent, up to a linear increase in gate complexity.
This is proven by combining the two lemmas given below.

\begin{lemma}[Reversible implementation of Boolean circuits]
    Any $\Gamma$-symmetric threshold circuit with~$s$ gates can be converted into an equivalent $\Gamma$-symmetric reversible circuit which uses at most~$2s$ gates and~$s$ workspace bits.
\end{lemma}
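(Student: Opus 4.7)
The plan is to construct, from any $\Gamma$-symmetric threshold circuit $C$ over $X$, a reversible circuit $R$ by introducing one workspace wire per non-input gate of $C$ and computing each gate's value into the corresponding wire via a single reversible threshold-type operation. The key is to layer these operations by the depth of each gate in the underlying DAG of $C$: this gives a natural partition into commuting groups and is preserved by every automorphism of $C$.

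First I would formally define depth in $C$: input gates have depth $0$, and any other gate $v$ has depth $1 + \max\{\mathrm{depth}(u) : (u,v) \in E\}$. Let $V_d$ denote the set of vertices of depth $d$, and let $D$ be the maximum depth. The wire set of $R$ is $W := X \cup \{w_v : v \text{ is a non-input gate of } C\}$, with all workspace wires initialised to $0$. For each $d \geq 1$, the layer $\Lcal_d$ contains one reversible gate per vertex $v \in V_d$, constructed according to the label of $v$: if $v$ is a $\NOT$ gate with predecessor $u$, add $\Eq^{\{\ell(u)\}, w_v}_0$, where $\ell(u)=u$ if $u \in X$ and $\ell(u)=w_u$ otherwise (setting $w_v = \neg \ell(u)$ from $w_v=0$); if $v$ is labelled $\thr^k_{\geq t}$ with predecessor-labels forming a set $S_v \subset W$, add $\thr^{S_v, w_v}_{\geq t}$; analogously for $\Eq$ gates. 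Because the workspace wires $\{w_v : v \in V_d\}$ are distinct and each is the unique head of its gate, the gates of $\Lcal_d$ pairwise commute by the identity~\eqref{eq:commuting_thr}.

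Correctness follows by an easy induction on $d$: after the first $d$ layers have been applied, each $w_v$ with $\mathrm{depth}(v) \leq d$ holds the value that gate $v$ of $C$ would compute on the given input, so the workspace wire of the output gate of $C$ carries the overall output. For symmetry, given $\sigma \in \Gamma$, fix an automorphism $\pi$ of $C$ extending $\sigma$ and define $\pi' \in \Scal_W$ by $\pi'(x) = \sigma(x)$ for $x \in X$ and $\pi'(w_v) = w_{\pi(v)}$ for each non-input vertex $v$. Since $\pi$ preserves both edge structure and gate labels, it also preserves depth, so $\pi'$ stabilises each layer $\Lcal_d$ setwise; moreover, it sends the reversible gate associated with $v \in V_d$ to the reversible gate associated with $\pi(v) \in V_d$, because the head is mapped to the head and the support is mapped to the support. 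Hence $\pi'$ is an automorphism of $R$ extending $\sigma$.

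Counting yields exactly one reversible gate and at most one workspace wire per non-input vertex of $C$, giving at most $s$ gates and $s$ workspace wires---well within the stated $2s$ bound. There is no real obstacle beyond bookkeeping: the only subtlety is verifying that each type of Boolean gate of $G_{\mathrm{thr}}$ translates into exactly one reversible gate whose head is a dedicated workspace wire, so that the depth-based layering simultaneously enforces within-layer commutativity via~\eqref{eq:commuting_thr} and transports automorphisms of $C$ to automorphisms of $R$.
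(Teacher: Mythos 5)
Your proof is correct and follows essentially the same depth-layered translation as the paper's: one reversible threshold-type gate per non-input vertex of $C$, each with a dedicated workspace wire as its head, grouped into layers by DAG depth, with an automorphism of $C$ lifting to an automorphism of $R$ via the relabelling $w_v \mapsto w_{\pi(v)}$. Your version is in fact slightly tighter than the paper's --- by reading input predecessors directly from the $X$-wires and implementing $\NOT$ as a single $\Eq^{\{\ell(u)\},w_v}_{0}$ gate you obtain at most $s$ gates, whereas the paper first copies the inputs into extra workspace wires in a layer $L_0$ and implements $\NOT$ with two gates ($\NOT_v$ followed by $\Eq_1$), giving the stated $2s$; one small nit is that the gates of $\Lcal_d$ commute not merely because their heads are distinct but because each head lies at depth $d$ and hence cannot appear in any other gate's support (which references only strictly smaller depth), a fact you use implicitly rather than equation~\eqref{eq:commuting_thr}, which is a consequence of commutativity rather than a criterion for it.
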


\begin{proof}
    Let~$C$ be the $\Gamma$-symmetric threshold circuit in consideration, with vertex set~$V$, edge set~$E$ and gate labelling $\lambda: V \to X\cup G_{thr}$
    (where~$X$ denotes the structure on which the circuit acts).
    Let~$d$ be the depth of circuit~$C$, meaning the maximum length of a directed path in the underlying graph.
    For each $0\leq i\leq d$, denote by~$V_i \subseteq V$ the vertices at depth~$i$:
    those for which the maximum
    
    We will construct a $\Gamma$-symmetric layered reversible circuit~$R$ with wire labels $W := X\cup V$ and layers $L_0, L_1, \dots, L_d$ to be specified below.
    The wires labelled by the structure~$X$ encode the input, so that wire $x\in X$ starts with the same value as the input gate $v_x \in V$ with $\lambda(v_x) = x$ in~$C$, while the ``workspace'' wires labelled by~$V$ are initially~0.
    The reversible circuit~$R$ will be equivalent to~$C$ in the following way:
    for every gate $v\in V$, at the end of the computation, the bit value at wire~$v$ in~$R$ will be the same as the value at gate~$v$ in~$C$.
    
    The layer~$L_0$ is given by $L_0 = \big\{\Eq_1^{\{x\}, v_x}:\, x\in X \big\}$, where $v_x \in V_0$ is the vertex with $\lambda(v_x) = x$.
    For each $1\leq i\leq d$, we initialise~$L_i$ empty and then, for each $v\in V_i$:
    \begin{itemize}
        \item If~$v$ is a $\NOT$ gate in~$C$ with incoming edge $(u, v)$, add the gates $\NOT_v$ and $\Eq_1^{\{u\}, v}$ to~$L_i$.
        \item If~$v$ is a threshold gate $\thr^k_{\geq t}$ with incoming edges $(u_1, v), \dots, (u_k, v)$, add gate $\thr^{\{u_1, \dots, u_k\}, v}_{\geq t}$ to~$L_i$.
        Similarly for when~$v$ is an equality gate $\Eq^k_t$.
    \end{itemize}
    The reversible circuit thus obtained is easily checked to be $\Gamma$-symmetric and equivalent to~$C$.
\end{proof}

\begin{lemma}[Boolean implementation of reversible circuits]
    Any $\Gamma$-symmetric reversible circuit with~$s$ gates can be converted into an equivalent $\Gamma$-symmetric threshold circuit with at most~$4s$ (non-input) gates.
\end{lemma}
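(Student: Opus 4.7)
My plan is to construct a Boolean threshold circuit $C$ that simulates the reversible circuit $R = (W, \Lcal_1, \dots, \Lcal_D)$ by tracking the value of each wire at each point where it changes. Starting from input gates labelled by $X$, for each wire $h \in W$ and each layer $\Lcal_i$ in which $h$ is the head of some gate, I will introduce a \emph{value gate} $v_{h,i}$ computing the value of $h$ immediately after $\Lcal_i$ in $R$. Subsequent layers reference these via a map $\mathrm{prev}(w)$: the most recent value gate for $w$ if one exists, the input gate $v_w$ if $w \in X$ has not yet been modified, or (conceptually) the constant $0$ otherwise -- in which case $w$ is simply omitted from the inputs of downstream gates, since XOR and threshold addition with $0$ leave outputs unchanged. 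The output gate of $C$ is taken to be the final value gate of the designated output wire of $R$.

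For each layer $\Lcal_i$ and each head $h$ with gates $g_1, \dots, g_k \in \Lcal_i$ all acting on $h$, I first add for each non-NOT $g_j$ an \emph{indicator gate} $u_{g_j}$: a Boolean threshold or equality gate taking as inputs the $\mathrm{prev}$ gates of the wires in the support of $g_j$. By equation~\eqref{eq:commuting_thr}, the value of $h$ after $\Lcal_i$ is the parity of $\mathrm{prev}(h)$ together with these indicators, shifted by the (known) number of NOTs among $g_1, \dots, g_k$ modulo $2$. The main obstacle is that an automorphism of $R$ may permute $g_1, \dots, g_k$ arbitrarily, so a binary XOR tree of $\Eq^2_1$ gates would break $\Gamma$-symmetry. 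I will instead use the identity $\bigoplus_{i=1}^n b_i = \sum_{j\,\text{odd}} \Eq^n_j(b_1, \dots, b_n)$ -- at most one summand is nonzero -- implemented by $\lceil (k+1)/2 \rceil$ symmetric equality gates $\Eq^{k+1}_j$ applied to $\mathrm{prev}(h), u_{g_1}, \dots, u_{g_k}$ followed by a single threshold $\thr_{\geq 1}$ aggregating their outputs, and post-composed with one NOT gate if the number of NOTs on $h$ in $\Lcal_i$ is odd. This final gate is designated $v_{h,i}$, and every gate involved is fully symmetric in its inputs.

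Correctness of $C$ follows by induction on the layer index via equation~\eqref{eq:commuting_thr}. For symmetry, any automorphism $\pi$ of $R$ lifts to a vertex permutation of $C$ sending $v_x \mapsto v_{\pi(x)}$, $u_{g_j} \mapsto u_{\pi(g_j)}$, $v_{h,i} \mapsto v_{\pi(h),i}$, and similarly for the parity-computation gates; edges and labels are preserved because the parity gates are fully symmetric in their inputs and the supports and heads of indicator gates transform consistently under $\pi$. For the gate count, each head $h$ with $k_h$ reversible gates in $\Lcal_i$ contributes at most $k_h$ indicator gates, $\leq k_h+1$ parity gates, and at most one NOT gate, giving $\leq 2k_h + 2$ per head; summed over $h \in H_i$, layer $\Lcal_i$ contributes at most $2|\Lcal_i| + 2|H_i| \leq 4|\Lcal_i|$ non-input Boolean gates, so $C$ has at most $4s$ non-input gates in total.
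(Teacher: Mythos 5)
Your proof is correct and follows essentially the same strategy as the paper: a layer-by-layer simulation that introduces a fresh Boolean gate for each reversible threshold gate and for each wire value that changes, with the crucial observation (shared by both proofs) that the needed XOR can be computed $\Gamma$-symmetrically via equality gates $\Eq^m_j$ for odd $j$ aggregated by a single $\thr_{\geq 1}$. Your two minor departures -- handling never-modified workspace wires by omission rather than via explicit constant-$0$ initialization gates, and treating mixed NOT/threshold layers directly with a trailing NOT correction rather than first homogenizing layers as the paper does -- are small simplifications that still arrive at the $4s$ bound.
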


\begin{proof}
We start by giving a threshold implementation of the parity function, which will be needed afterwards.
Note that
$$x_1 \oplus \dots \oplus x_m = \bigvee_{k=0}^{\lfloor m/2 \rfloor} \big[|x| = 2k+1\big].$$
This formula shows that we can implement parity on~$m$ bits via a threshold circuit of depth~$2$ using one~$\Eq^m_{2k+1}$ gate for each $0\leq k \leq \lfloor m/2\rfloor$ and one $\thr^{\lfloor m/2\rfloor +1}_{\geq 1}$ gate.
Moreover, this circuit is easily seen to be invariant under permutations of the~$m$ input bits.

Now let~$R$ be a $\Gamma$-symmetric reversible circuit over some structure~$X$, let $W$ be its set of wire labels (with $X\subseteq W$) and let $L_1, \dots, L_d$ be its layers.
Up to increasing the number of layers by at most a factor of three, we may assume that each of them contains only gates of the same type:
either $\NOT$, threshold or equality.
Our goal is to construct a threshold circuit~$C$ which is $\Gamma$-symmetric and equivalent to~$R$;
we do so one layer at a time, as follows.

Let $V_0 = \{(w,0): w\in W\}$ and initialise the edge set~$E$ to be empty.
The input gates of~$C$ will be the~$(x, 0)$ with $x\in X$, which will be labelled by their corresponding element in~$X$.
For each $w\in W\setminus X$, add edges $\big\{\big((x,0), (w,0)\big): x\in X\big\}$ to~$E$ and label the vertex $(w,0)$ by~$\Eq^{|X|}_{|X|+1}$.
(This is done so that the bits labelled by $(w,0)$ have value~$0$ as in the reversible circuit.)

Next, for $i=1, 2, \dots, d$ in order, initialise~$V_i$ empty and define the function $\ell_i: W\to \N$ by
$$\ell_i(w) = \max \{j\leq i-1:\, (w,j)\in V_j\}.$$
Then $\ell_i(w)=j$ means the following:
the last time before layer~$L_i$ that wire~$w$ was the head of some gate was in layer~$L_j$.
(If $\ell_i(w)=0$, then~$w$ was never the head of a gate before layer~$L_i$.)
The value of bit $(w, \ell_i(w))$ will be (by construction) the same as the value at wire~$w$ in the circuit~$R$ immediately before layer~$L_i$.

If $L_i$ is a $\NOT$ layer, then for each gate $\NOT_h \in L_i$ we add a vertex $(h,i)$ to~$V_i$ with label~$\NOT$ and add an edge $\big((h, \ell_i(h)), (h,i)\big)$ to~$E$.
Note that $|V_i| = |L_i|$.

If~$L_i$ is a threshold layer, then for each gate $\thr^{S,h}_{\geq t}$ in~$L_i$:
\begin{itemize}
    \item Add vertex $(h, S, t, i)$ to~$V_i$ and label it by $\thr^{|S|}_{\geq t}$;
    \item Add edges $\big\{\big((w, \ell_i(w)), (h,S,t,i)\big): w\in  S\big\}$ to~$E$.
\end{itemize}
Next, for each $h\in W$ which is the head of some gate in~$L_i$, add a vertex $(h,i)$ to~$V_i$ and implement a parity gate (as explained above) from vertices
$$\{(h, \ell_i(h))\} \cup \big\{(h,S,t,i):\, \thr^{S,h}_{\geq t} \in L_i \big\}$$
to vertex $(h,i)$.
By equation~\eqref{eq:commuting_thr}, the value at vertex~$(h,i)$ in this circuit will be the same as the value at wire~$h$ immediately after layer~$L_i$.
The case where~$L_i$ is an equality layer is analogous.
Note that $|V_i| \leq 3 |L_i|$.

The circuit~$C$
obtained at the end of this process is equivalent to~$R$:
the final value at any wire~$w$ in circuit~$R$ is equal to the value at vertex $(w, \ell_{D+1}(w))$ in circuit~$C$
(assuming both circuits have the same input, and defining~$\ell_{D+1}$ in the obvious way).
The number of non-input gates in this circuit is
$$\sum_{i=0}^d |V_i| - |X| \leq |W\setminus X| + \sum_{i=1}^d 3 |L_i| \leq 4s.$$
Finally, circuit~$C$ is also $\Gamma$-symmetric:
any automorphism $\pi\in \Scal_W$ of~$R$ can be mapped to an automorphism~$\sigma$ of~$C$ which permutes the vertices according to how~$\pi$ permutes their labels
(so that $\sigma(w,i) = (\pi(w), i)$ and $\sigma(h, S, t, i) = (\pi(h), \pi(S), t, i)$, for instance).
\end{proof}

Next we consider Theorem~\ref{thm:part_sym}, which concerns efficient symmetric implementations of partition-symmetric functions.
Recall that a function $F: \prod_{i=1}^k \bset{n_i} \to \prod_{i=1}^k \bset{n_i}$ is partition-symmetric if its action commutes with the natural action of the group $\prod_{i=1}^k \Scal_{n_i}$.

\begin{theorem}[Theorem~\ref{thm:part_sym} restated]
    Let $k\geq 1$ and $n_1, \dots, n_k \geq 1$ be integers, and denote $n = n_1 +\dots + n_k$.
    Any partition-symmetric reversible function
    $F: \prod_{i=1}^k \bset{n_i} \to \prod_{i=1}^k \bset{n_i}$
    can be implemented by a partition-symmetric reversible circuit using \\
    $O\big(n\prod_{i=1}^k (n_i+1)\big)$ gates and $O(n)$ workspace bits, all of which are returned to zero at the end of the computation.
\end{theorem}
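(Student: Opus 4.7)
The starting point will be a structural description of partition-symmetric reversible maps. With $\Gamma := \prod_{i=1}^k \Scal_{n_i}$ and the canonical weight-$\mathbf{w}$ representative $a^{\text{canon}}_\mathbf{w} := (1^{w_1} 0^{n_1-w_1}, \ldots, 1^{w_k} 0^{n_k-w_k})$, the stabiliser of $a^{\text{canon}}_\mathbf{w}$ in $\Gamma$ is $\prod_i(\Scal_{w_i} \times \Scal_{n_i-w_i})$, and $\Gamma$-equivariance forces $F(a^{\text{canon}}_\mathbf{w})$ to be fixed by this same group. This pins down $F(a^{\text{canon}}_\mathbf{w})$ to have the form $\big((\epsilon_i^{(1)}(\mathbf{w}))^{w_i} (\epsilon_i^{(0)}(\mathbf{w}))^{n_i-w_i}\big)_i$ for some bits $\epsilon_i^{(j)}(\mathbf{w}) \in \{0,1\}$. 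Propagating by equivariance then shows that on any input with weight vector $\mathbf{w}$, $F$ acts on block $i$ as one of four block-wise operations determined by $(\epsilon_i^{(1)}(\mathbf{w}), \epsilon_i^{(0)}(\mathbf{w}))$: identity, bitwise NOT, ``set to $0$'', or ``set to $1$''. Reversibility additionally forces the induced map $f$ on weight vectors to be a bijection on $\prod_i\{0,\ldots,n_i\}$.

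Given this characterisation, the plan is as follows. First, compute a unary encoding $u_{i,t} := \one\{|a_i|=t\}$ of the input weight vector via a layer of $\Eq^{\text{block }i,\,u_{i,t}}_{t}$ gates, using $n+k$ gates and $n+k$ workspace bits. Then loop over all weight vectors $\mathbf{t} \in \prod_i\{0,\ldots,n_i\}$: for each $\mathbf{t}$, compute a single indicator bit $v := \one\{\text{input weight}=\mathbf{t}\}$ from $u$ by a single threshold gate on $\{u_{1,t_1},\ldots,u_{k,t_k}\}$; apply to each block $i$ the particular operation (identity/NOT/zero/one) that $F$ performs on block $i$ at weight $\mathbf{t}$, controlled on $v$, each such controlled operation using $O(n_i)$ symmetric gates and no extra workspace -- with the ``set to $1$'' case handled by sandwiching a Toffoli-style $\thr^{\{v,a_{i,j}\}, a_{i,j}}_{\geq 2}$ gate between two bitwise-NOT layers on block $i$ -- and then uncompute $v$ by reapplying the same threshold gate. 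Since $u$ is untouched during the loop, at most one indicator is ever nonzero, so only the operations corresponding to the true input weight take effect.

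The main obstacle will be cleanly uncomputing $u$ at the end, since the input register now holds $F(a)$ rather than $a$. To resolve this, I will exploit that $f$ is a bijection: compute a fresh unary encoding $u'$ of the \emph{current} weight vector from the output bits, then loop over $\mathbf{t}$ a second time, computing $v' := \one\{\text{current weight}=f(\mathbf{t})\}$ from $u'$ via one threshold gate, CNOTing $v'$ into each of $u_{1,t_1}, \ldots, u_{k,t_k}$, and uncomputing $v'$. Because $v'$ equals the original value of $u_{i,t_i}$ precisely when $\mathbf{t}$ is the true input weight, these CNOTs correctly reset all bits of $u$ to zero; a final reversal of the $u'$ computation then clears the remaining workspace.

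For the symmetry and complexity analysis, I will let $\Gamma$ act trivially on every workspace wire, so that every group-induced permutation of a layer simply permutes the block-side gates of the layer while fixing every gate whose support lies in workspace; a straightforward case analysis then shows each layer is closed under $\Gamma$. Each of the two unary-encoding phases uses $O(n)$ gates, and each of the two loops over weight vectors contributes $O(n)$ gates per iteration, giving a total of $O\big(n \prod_{i=1}^k (n_i+1)\big)$ gates and $O(n+k)=O(n)$ workspace bits, all returned to zero by construction.
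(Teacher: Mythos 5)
Your proposal follows essentially the same plan as the paper's proof: pin down $F$ by equivariance on canonical weight-representatives, compute a unary encoding of the block weights $|a_i|$, conditionally XOR bits based on the weight vector, and uncompute the workspace. The characterization is framed a bit differently (a pair $(\epsilon_i^{(1)}(\mathbf{w}), \epsilon_i^{(0)}(\mathbf{w}))$ per block rather than the paper's single bit $f_i(\mathbf{w})$), and the circuit is organized as an explicit loop over weight vectors with a per-iteration indicator bit~$v$, rather than the paper's single layer of multi-controlled gates $\Eq_k^{W(s),\,(i,j)}$; these are small structural variations and the gate counts agree.

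There is one genuine issue. Your implementation of the ``set to~$1$'' case is not valid: the gate $\thr^{\{v,a_{i,j}\},\,a_{i,j}}_{\geq 2}$ has its head $a_{i,j}$ inside its own support, which is forbidden by the definition of reversible threshold gates (which requires $h\in W\setminus S$), and more fundamentally the intended operation $a_{i,j} \leftarrow a_{i,j} \lor v$ is not reversible as a standalone gate since, when $v=1$, the prior value of $a_{i,j}$ is erased. The fix is to notice that reversibility already rules this case out: if $1 \leq w_i \leq n_i-1$ and $\epsilon_i^{(1)}(\mathbf{w}) = \epsilon_i^{(0)}(\mathbf{w})$, then two inputs of weight $\mathbf{w}$ differing only within block~$i$ would collide under $F$, contradicting injectivity; and when $w_i \in \{0, n_i\}$ the block is constant, so ``set to~$0$''/``set to~$1$'' already coincide with identity or NOT. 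Hence the only block operations that arise are identity and bitwise NOT, which is precisely the content of the paper's characterization $F(a)_{i,j} = a_{i,j} \oplus f_i(|a_1|,\dots,|a_k|)$, and the controlled operation reduces to a CNOT from~$v$ into each wire of the affected blocks. With this observation in place, the rest of your construction, including the bijection-based uncomputation of~$u$ (which usefully spells out what the paper leaves as ``the usual uncomputation technique''), is correct.
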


To prove this theorem, we first introduce some notation.
An element $a\in \prod_{i=1}^k \bset{n_i}$ can be decomposed into the direct product $a_1 \times a_2 \times \dots \times a_k$, where $a_i \in \bset{n_i}$ for $i\in [k]$.
We denote by~$a_{i,j}$ the $j$-th element of~$a_i$ in this representation (where $j\in [n_i]$).

\begin{lemma}
    If $F: \prod_{i=1}^k \bset{n_i} \to \prod_{i=1}^k \bset{n_i}$ is partition-symmetric and reversible, then there exist functions $f_i: \prod_{j=1}^k \{0, \dots, n_j\} \to \bset{}$, $1\leq i \leq k$, such that
    $$F(a)_{i,j} = a_{i,j} \oplus f_i\big(|a_1|, \dots, |a_k|\big) \quad \text{for all $i\in [k]$, $j\in [n_i]$.}$$
\end{lemma}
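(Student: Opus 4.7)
The plan is to extract, from the commutation of $F$ with $\Gamma := \prod_{i=1}^k \Scal_{n_i}$, the statement that $F(a)_{i,j}$ depends only on the weight vector $\vec{w}(a) := (|a_1|, \dots, |a_k|)$ together with the single bit $a_{i,j}$; the desired formula then reduces to verifying that the two possible values (depending on $a_{i,j}$) are actually distinct, and that is where reversibility enters.

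First I would fix $a$ and look at its stabilizer in $\Gamma$. The stabilizer is $\prod_i (\Scal_{S_i(a)} \times \Scal_{T_i(a)})$, where $S_i(a)$ and $T_i(a)$ are the sets of $1$-positions and $0$-positions of $a_i$. Since $F$ is $\Gamma$-equivariant, $F(a)$ is fixed by this stabilizer, so within each block $i$ the value $F(a)_{i,j}$ is constant on $S_i(a)$ and constant on $T_i(a)$. Call these constants $\alpha_i(a)$ (if $S_i(a)\neq\emptyset$) and $\beta_i(a)$ (if $T_i(a)\neq\emptyset$). Next, because $\Gamma$ acts transitively on the set of strings with any fixed weight pattern, equivariance of $F$ immediately gives that $\alpha_i$ and $\beta_i$ depend only on $\vec{w}(a)$; I will write $\alpha_i(\vec{w})$, $\beta_i(\vec{w})$.

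The crucial step is to show that whenever $0 < w_i < n_i$, we have $\alpha_i(\vec{w}) \neq \beta_i(\vec{w})$. This is the only place the assumption of reversibility is used. Suppose instead that $\alpha_i(\vec{w}) = \beta_i(\vec{w}) = c$ for some such $\vec{w}$. Since $\binom{n_i}{w_i} \geq 2$, I can choose two distinct strings $a, a' \in \prod_j \bset{n_j}$ with weight pattern $\vec{w}$ that agree outside block $i$ and differ inside block $i$. Then $F(a)_i = c^{n_i} = F(a')_i$ by the degeneracy assumption, while for $j \neq i$ the block $F(a)_j$ is determined by $\vec{w}$ and $a_j = a'_j$, so $F(a)_j = F(a')_j$. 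This contradicts injectivity of $F$, giving the required $\alpha_i \neq \beta_i$.

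To conclude, I define
$$f_i(\vec{w}) := \begin{cases} \beta_i(\vec{w}) & \text{if } w_i < n_i, \\ 1 \oplus \alpha_i(\vec{w}) & \text{if } w_i = n_i, \end{cases}$$
the two clauses agreeing on their overlap by the previous paragraph. A short case check (separating $a_{i,j}=0$ and $a_{i,j}=1$, and noting that in the degenerate cases $w_i \in \{0, n_i\}$ one of $\alpha_i,\beta_i$ is vacuous) verifies $F(a)_{i,j} = a_{i,j} \oplus f_i(\vec{w}(a))$ everywhere. The main obstacle is really just the reversibility step above; the rest is routine use of orbit–stabilizer and transitivity of $\Gamma$ on weight classes.
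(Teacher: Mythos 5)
Your proof is correct and follows essentially the same route as the paper's: use permutation-invariance to reduce the dependence of $F(a)_{i,j}$ to the weight vector and the single bit $a_{i,j}$, then invoke reversibility to pin down that the dependence is an XOR. The paper works with explicit canonical representatives $v_\ell(b,w)$ of each weight class and computes directly, whereas you phrase the same reduction via the stabilizer of $a$ (to get constancy on $1$-positions and $0$-positions) and transitivity on weight classes (to get dependence on $\vec{w}$ only); the reversibility step is identical in substance in both arguments. One small wording nit: your two clauses defining $f_i$ do not actually \emph{overlap} (the case split on $w_i<n_i$ vs.\ $w_i=n_i$ is disjoint), so the ``agreeing on their overlap'' remark should be read as the consistency fact $\alpha_i=1\oplus\beta_i$ on the interior range $0<w_i<n_i$, which you establish correctly.
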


\begin{proof}
For given indices $i\in [k]$, $j\in [n_i]$, we consider a collection of functions $v_{\ell}: \bset{} \times\{0, \dots, n_{\ell}\} \to \bset{n_\ell}$, $\ell\in [k]$, where each~$v_{\ell}(b, w)$ is an $n_{\ell}$-bit string of weight~$w$ and the $j$-th bit of $v_i(b,w)$ is~$b$.
Any such collection of functions would do, but for concreteness we can define them by
\begin{align*}
    v_{\ell}(b, w) &= (1^w, 0^{n_\ell - w}) \quad \text{if $\ell\neq i$,} \\
    v_i(b, w) &=
    \begin{cases}
        1 &\text{at the first $w-b$ indices in $[n_i]\setminus\{j\}$,} \\
        0 &\text{at the last $n_i-w+b-1$ indices in $[n_i]\setminus\{j\}$,} \\
        b &\text{at index $j$.}
    \end{cases}
\end{align*}
(The function $v_i$ is only well-defined when $0\leq w-b \leq n-1$, but this is the only range where we use it.)

Now define the function $g_{i,j}: \bset{} \times \prod_{\ell=1}^k \{0, \dots, n_{\ell}\} \to \bset{}$ by
$$g_{i,j}(b, w_1, \dots, w_k) = F\big(v_1(b, w_1), \dots, v_k(b, w_k)\big)_{i,j}.$$
Note that, by symmetry of~$F$, we must have $g_{i,j} \equiv g_{i,j'}$ for all $j, j'\in [n_i]$.
Finally, define $f_i: \prod_{\ell=1}^k \{0, \dots, n_{\ell}\} \to \bset{}$ by
$$f_i(w_1, \dots, w_k)=
\begin{cases}
    g_{i,1}(0, w_1, \dots, w_k) &\text{if $w_i \leq n_i-1$,} \\
    1\oplus g_{i,1}(1, w_1, \dots, w_k) &\text{if $w_i = n_i$.}
\end{cases}$$

By reversibility of~$F$, we must have $g_{i,1}(0, w_1, \dots, w_k) \neq g_{i,1}(1, w_1, \dots, w_k)$ whenever $1\leq w_i \leq n_i -1$, and thus
$$g_{i,1}(1, w_1, \dots, w_k) = 1\oplus g_{i,1}(0, w_1, \dots, w_k).$$
We conclude from the symmetry of~$F$ that, for all $a\in \prod_{i=1}^k \bset{n_i}$, we have
\begin{align*}
    F(a_1, \dots, a_k)_{i,j} &= F\big( v_1(a_{i,j}, |a_1|),\, \dots,\, v_k(a_{i,j}, |a_k|) \big)_{i,j} \\
    &= g_{i,1}(a_{i,j},\, |a_1|,\, \dots,\, |a_k|) \\
    &= a_{i,j} \oplus f_i(|a_1|, \dots, |a_k|),
\end{align*}
as wished.
\end{proof}

With the help of this characterization, we can now prove Theorem~\ref{thm:part_sym}.

\begin{proof}[Proof of Theorem~\ref{thm:part_sym}]
For each $i\in[k]$, let $f_i: \prod_{\ell=1}^k \{0, \dots, n_{\ell}\} \to \bset{}$ be the function for which
$$F(a_1, \dots, a_k)_{i,j} = a_{i,j} \oplus f_i(|a_1|, \dots, |a_k|) \quad \text{for all $j\in [n_i]$.}$$
Let $S_i := f_i^{-1}(\{1\}) \subseteq \prod_{\ell=1}^k \{0, \dots, n_{\ell}\}$, and note that
\begin{equation} \label{eq:f_i}
    f_i(w_1, \dots, w_k) = \sum_{s\in S_i} \one\big\{w_\ell = s_\ell \text{ for all $\ell\in[k]$}\big\}.
\end{equation}

We can then construct a partition-symmetric reversible circuit for implementing~$F$ as follows.
The input wires are labelled $(i,j)$ for $i\in[k]$, $j\in[n_i]$, and the workspace wires are labelled $w(i,t)$ for $i\in[k]$, $t\in \{0, \dots, n_i\}$.
Given $s \in \prod_{\ell=1}^k \{0, \dots, n_{\ell}\}$, denote $W(s) = \{w(\ell, s_\ell): \ell\in [k]\}$.
The gates of the first layer are given by $\Eq^{\{i\}\times[n_i],\, w(i,t)}_t$ for $i\in [k]$ and $t\in \{0, \dots, n_i\}$, so that on input $a\in \prod_{i=1}^k \bset{n_i}$ each wire $w(i, t)$ will have bit value $\one\{|a_i|=t\}$.
The second layer has gates $\Eq_k^{W(s),\, (i,j)}$ for $i\in [k]$, $j\in [n_i]$ and $s\in S_i$.
Equation~\eqref{eq:f_i} implies that this circuit computes~$F$, and one can clean the workspace bits with the usual uncomputation technique.
\end{proof}

\end{document}